\documentclass[submission,copyright,creativecommons]{eptcs}
\providecommand{\event}{AiML 2026} 

\usepackage{aiml26}

\usepackage{iftex}

\ifpdf
  \usepackage{underscore}         % Only needed if you use pdflatex.
  \usepackage[T1]{fontenc}        % Recommended with pdflatex
\else
  \usepackage{breakurl}           % Not needed if you use pdflatex only.
\fi

\usepackage[cal=euler,scr=boondox]{mathalfa}
\usepackage{amssymb}
\usepackage{mathtools}
\usepackage{bm}
\usepackage{stmaryrd}
\usepackage{multicol}
\usepackage{tikz}
	\usetikzlibrary{arrows.meta, shapes.geometric, positioning, patterns, calc}
\newcommand{\topo}{\mathcal{T}}

\newcommand{\powset}{\mathcal{P}}
\newcommand{\val}{\mathcal{V}}

\newcommand{\Int}{\mathit{Int}}
\newcommand{\Cl}{\mathit{Cl}}

\newcommand{\Prop}{\mathit{Prop}}
\newcommand{\A}{\mathrm{A}}
\newcommand{\E}{\mathrm{E}}

\newcommand{\f}{\varphi}
\newcommand{\ff}{\psi}

\newcommand{\tot}{\leftrightarrow}
\newcommand{\sem}[1]{\llbracket #1 \rrbracket}
\newcommand{\semM}[1]{\llbracket #1 \rrbracket_{\bm{M}}}
\newcommand{\Boxc}{\Box^{c}}
\newcommand{\Diamondc}{\Diamond^{c}}
\newcommand{\Lan}{\mathfrak{L}}
\newcommand{\LanBA}{\Lan^{\Box}_{\A}}
\newcommand{\LanBAc}{\Lan^{\Box}_{\A, c}}

\newcommand{\upset}{\mathcal{U}}

\newcommand{\sfAc}[1]{\mathsf{#1}_{\A, c}}
\newcommand{\itAc}[1]{\mathit{#1}_{\A, c}}
\newcommand{\itA}[1]{\mathit{#1}_{\A}}

\let\phi\f

\title{Non-classical Topological Evidence Logic}
\author{Igor Sedl\'{a}r
\institute{Institute of Computer Science, Czech Academy of Sciences\\ Prague, The Czech Republic}
\email{sedlar@cs.cas.cz}
}

\newcommand{\titlerunning}{Non-classical TEL}
\newcommand{\authorrunning}{I.\ Sedl\'{a}r}

\hypersetup{
  bookmarksnumbered,
  pdftitle    = {\titlerunning},
  pdfauthor   = {\authorrunning},
  pdfsubject  = {EPTCS},               % Consider adding a more appropriate subject or description
  pdfkeywords = {keyword1, keyword2} % Uncomment and enter keywords specific to your paper
}

\begin{document}
\maketitle

\begin{abstract}
Topological Evidence Logic (TEL) is a recent approach to epistemic logic that uses topological tools to model coherent epistemic justification. Specifically, a hypothesis is coherently justified if and only if it is entailed by a dense open set. In its simplest form, TEL can be formulated as an extension of $\mathsf{S4}$ with a global modality. 
All currently studied forms of TEL are based on classical propositional logic, which has been heavily criticised for misrepresenting the way in which ordinary agents reason. 
In this article, we show that the TEL approach is robust under modifications to the propositional base. We show that an extension of the intuitionistic modal framework recently introduced by de Groot and Shillito, incorporating a global modality, enables coherent justification to be expressed in an intuitionistic setting. 
Furthermore, we adapt the recent work of Standefer et al., which extends relevant logic with a global modality, to show that coherent justification can be expressed in a relevant setting if an interior-of-complement operator is added to the language.
Our main technical result is a soundness and completeness theorem for relevant TEL based on the weak relevant modal logic $\mathsf{BS4}$.
\end{abstract}

\section{Introduction}

% Context
Topology has proven to be a useful tool in logic and computer science. McKinsey and Tarski's famous results \cite{McKinsey1941,McKinseyTarski1944,Tarski1938} establish a \emph{topological semantics} for modal logic. Smyth's topological perspective on the basic concepts of programming language semantics \cite{Smyth1983} led to the modelling of \emph{observable properties} of computational systems by open sets \cite{Abramsky1991,Escardo2004,Smyth1992,Vickers1989}. This angle was further elaborated by Kelly \cite{Kelly1996} who provided a topological foundation of formal learning theory. 

In topological semantics of the modal logic $\mathsf{S4}$, the set of possible worlds $X$ is seen as a topological space and the proposition $\Box P$, for arbitrary $P \subseteq X$, is the \emph{interior} of $P$. Hence, $x \in \Box P$ iff there is an open set $U$ such that $x \in U$ and $U \subseteq P$. Assuming that open sets are interpreted as observable properties or \emph{evidence}, then $\Box P$ means that there is truthful evidence that supports $P$. 

Topological Evidence Logic (TEL) is a recent extension of the topological approach to modal logic that explicitly uses topology to model epistemic notions; see \cite{
BaltagEtAl2025a,BaltagEtAl2022a,BaltagEtAl2022,BaltagEtAl2016,
BaltagEtAl2017,BaltagEtAl2025,BjorndahlOzgun2019,FernandezGonzalez2018,Ozgun2017,OzgunEtAl2025} for example. The key concept in TEL is the topological representation of \emph{coherent justification}. Given a space $X$, a hypothesis $P \subseteq X$ is coherently justified if $U \subseteq P$ for a dense open set $U$. Intuitively, this means that $P$ is supported by evidence that is \emph{consistent with all (consistent) evidence}. Accordingly, $P$ is \emph{known} in a state $x \in X$ if it is coherently justified by an open set $U$ such that $x \in U$.  
  
A simple form of TEL is an extension of $\mathsf{S4}$ with the global modality $\A$, representing truth in all states. Since $\Box P$ is the interior of $P \subseteq X$, the proposition $\Diamond P$, defined as $(\Box (P^{c}))^{c}$, is the \emph{closure} of $P$. It follows that $\A \Diamond P = X$ iff $P$ is dense. Consequently, $\A \Diamond \Box P = X$ iff $P \subseteq U$ for some dense open set $U$. In other words, $\A \Diamond \Box P$ says that $P$ is supported by some dense open set, i.e.\ that $P$ is coherently justified.

% Gap
All currently studied forms of TEL are based on classical propositional logic and the use of Boolean negation appears to be essential for TEL's ability to express coherent justification. However, classical propositional logic has been heavily criticised for misrepresenting how agents reason. This has been a major theme in 20th century logic, giving rise to a plethora of non-classical logics such as intuitionistic, many-valued and relevant logics.  
In recent years, many non-classical epistemic logics have emerged; see \cite{ArtemovProtopopescu2016,BilkovaEtAl2016,
Sedlar2015,SedlarVigiani2022,SedlarVigiani2024,
StandeferMares2025,StandeferEtAl2023}. By using a non-classical propositional base, these epistemic logics avoid certain problematic closure principles such as `if an agent believes the consequent of an implication, then they believe the implication' or `if an agent has inconsistent beliefs, then they believe everything'. 
However, these logics lack the attractive evidence-based approach of TEL and are not able to represent coherent justification in the way TEL does. %To the best of our knowledge, non-classical versions of TEL have not been studied yet. 

% Contributions
In this article, we bridge this gap by \emph{studying intuitionistic and relevant versions of TEL.} 
Recent work by de~Groot and Shillito \cite{deGrootShillito2025} marks the first step towards a non-classical version of TEL. They extend the Kripke-style relational semantics for propositional intuitionistic logic with the interior semantics for the modal operator $\Box$. 
However, their logic $\mathsf{iS4}$ lacks the resources to express the density-based notion of coherent justification that is central to TEL.  
Our first contribution is to show that coherent justification can be expressed in an extension of $\mathsf{iS4}$ with $\A$. %We also prove soundness and strong completeness of this extension, although this result is not entirely new; see \cite{Ono1977}.

In the context of relevant logic, systems with $\A$ have recently been considered by Standefer and co-authors \cite{Standefer2022, StandeferFrench2025, StandeferMares2025}. However, completeness results for these logics have remained elusive. 
Our second contribution is to show that density-based justification can be expressed within an extension of the a semantic framework for the weak relevant modal logic $\mathsf{BS4}$, in which the $\Box$ operator is interpreted as interior and in which an interior-of-complement operator $\Boxc$ and the global modality $\A$ are introduced. 
We also prove soundness and completeness of the resulting logic. Due to the modular nature of our axiomatisation, our result solves the problem of axiomatising relevant logic with $\A$, left open in \cite{StandeferFrench2025}. 

%% Related work
% ConiglioPrietoSanabria2017
% Baskent2016
% ArtemovNogina2008 ?

% Overview
The article is structured as follows. Section \ref{sec:InTEL} discusses intuitionistic TEL and shows that coherent justification can be expressed in an extension of de Groot and Shillito's $\mathsf{iS4}$ with a global modality. in Section \ref{sec:InTELax}, we provide a sound and complete axiomatisation for intuitionistic TEL. Section \ref{sec:TowardReTEL} formulates a first naive attempt at relevant TEL and shows that it fails. Relevant TEL is introduced in Section \ref{sec:ReTEL} as an extension of the relevant modal logic $\mathsf{BS4}$ with the global modality $\A$ and the interior-of-complement operator $\Boxc$. A soundness and completeness result is established in Section \ref{sec:ReTELax}. Section \ref{sec:Conclusion} offers a summary of the paper and an  outline of future work. 
To ensure the paper remains self-contained for the non-expert reader, full proofs of two standard results in relevant logic are provided in the appendix.

\section{Intuitionistic TEL}\label{sec:InTEL}

In this section, we present an intuitionistic version of TEL, extending the work of de Groot and Shillito \cite{deGrootShillito2025}. We show that incorporating the global modality $\A$ into their logic $\mathsf{iS4}$ enables coherent justification to be expressed in a manner analogous to that in classical TEL. We assume that the reader is familiar with the basics of modal logic \cite{BlackburnEtAl2001} and general topology \cite{Munkres2000}.

\begin{definition}
Let $\Prop$ be a countably infinite set of propositional variables. The \emph{language $\LanBA$} is defined using the following grammar:
\[ \f \coloneq p \mid \neg \f \mid \f \land \f \mid \f \lor \f \mid \f \to \f \mid \Box \f \mid \A \f\]
where $p \in \Prop$. 
We define $\Diamond \f \coloneq \neg \Box \neg \f$ and $\E\f \coloneq \neg \A \neg \f$.
\end{definition}

\begin{definition}
Let $\langle X, \leq\rangle$ be a pre-ordered set. An \emph{up-set topology} on $\langle X, \leq\rangle$ is a topology $\topo$ on $X$ such that all $U \in \topo$ are closed upwards under $\leq$ (up-sets). An \emph{up-space} is $\langle X, \leq, \topo\rangle$ such that $\topo$ is an up-set topology on $\langle X, \leq\rangle$. 
Given an up-space $\bm{S} = \langle X, \leq, \topo\rangle$, the \emph{$\bm{S}$-interior} of $P \subseteq X$ is $\Int_{\bm{S}} (P) \coloneq \bigcup \{ U \mid U \in \topo \And U \subseteq P \}$ and the \emph{$\bm{S}$-closure} of $P$ is $\Cl_{\bm{S}}(P) \coloneq \bigcap \{ K \mid K^{c} \in \topo \And P \subseteq K \}$, where $K^{c}$ is the complement of $K$ with respect to $X$. 
\end{definition}

We will write $\Int$ and $\Cl$ if $\bm{S}$ is clear from the context or immaterial. 

\begin{example}\label{exam:alex}
Let $\langle X, \leq\rangle$ be a pre-ordered set. The \emph{Alexandroff topology} on $\langle X, \leq\rangle$ is $\mathcal{T}_{\leq}$ comprising all up-sets on $\langle X, \leq\rangle$. Then $\langle X, \leq, \mathcal{T}_{\leq} \rangle$ is an up-space. In fact, $\mathcal{T}_{\leq}$ is the \emph{finest} up-set topology on $\langle X, \leq \rangle$.
\end{example}

\begin{example}\label{exam:intrays}
Let $\langle \mathbb{R}, \leq\rangle$ be the set of real numbers with their standard ordering and choose a set $O \subseteq \mathbb{R}$ that intuitively represents the `observable' real numbers. We require that for each $a \in \mathbb{R}$ there is $b \in O$ such that $b \leq a$. Take $\mathcal{T}_{O}$ to be the right-order topology determined by the basis $\mathcal{B}_{O} = \{ [x, \infty) \mid x \in O \}$. The structure $\langle \mathbb{R}, \leq, \mathcal{T}_{O} \rangle$ is an up-space. 
One example of this construction is $\mathcal{T}_{\mathbb{Z}}$, representing possible measurements of a real quantity using a device with integer outputs.
\end{example}

\begin{example}[\cite{Smyth1992,Vickers1989}]\label{exam:words}
Let $2^{\infty}$ be the set of all finite and countably infinite binary words with the prefix ordering $\sqsubseteq$, where $w \sqsubseteq u$ if $w$ is a prefix of $u$. 
Let $W$ be a set of \emph{finite} binary words containing the empty word, representing possible observations. This set may contain all finite binary words, but this is not a requirement. We assume that the observer can observe some finite words but not infinite ones. 
Let ${\uparrow}w = \{ u \in 2^{\infty} \mid w \sqsubseteq u \}$, comprising words that are `consistent' with prefix $w$. Take the topology $\mathcal{T}_{W}$ generated by the basis $\mathcal{B}_{W} = \{ {\uparrow}w \mid w \text{ finite} \And w \in W \}$. 
Then $\langle 2^{\infty}, \sqsubseteq, \mathcal{T}_{W} \rangle$ is an up-space. Open sets in this topology are the observable (or \emph{semidecidable}) properties of binary words, i.e.\ properties that can be confirmed by observing a finite prefix.
\end{example}

\begin{example}[\cite{deGrootShillito2025,Ono1977}]\label{exam:ono}
Let $\langle X, \leq, R\rangle$ be a bi-relational Kripke model for intuitionistic modal logic \cite{Ono1977}, where $R$ is a preorder on $X$ such that $x \leq y$ implies $Rxy$. Let $\topo_R$ be the topology comprising all subsets of $X$ that are upwards closed under the preorder $R$ (the Alexandroff topology determined by $R$). Then $\langle X, \leq, \topo_R\rangle$ is an up-space. In fact, all up-spaces $\langle X, \leq, \mathcal{T} \rangle$ give rise to a bi-relational Kripke model $\langle X, \leq, \leq_{\mathcal{T}}\rangle$ where $\leq_{\mathcal{T}}$ is the \emph{specialisation order} of $\mathcal{T}$, in which $x \leq_{\mathcal{T}} y$ iff $\forall U \in \mathcal{T} \colon$ if $x \in U$, then $y \in U$. As discussed below, however, not all $\mathcal{T}$ are identical to $\mathcal{T}_{\leq_{\mathcal{T}}}$.
\end{example}

Relational structures based on pre-ordered sets $\langle X, \leq\rangle$ are often used in the semantics of non-classical logics. The usual informal interpretation of these structures is in terms of information support by partial situations: $x \leq y$ iff all information supported by situation $x$ is supported by situation $y$ as well \cite{Dunn1993, Grzegorczyk1964, Restall2005}. The Alexandroff topology $\mathcal{T}_{\leq}$ makes this informal interpretation explicit: its open sets are all the pieces of information that $\leq$ cares about since $x \leq y$ iff $\forall U \in \mathcal{T}_{\leq} \colon$ if $x \in U$, then $y \in U$.  
Conversely, any topological space $\langle X, \mathcal{T}\rangle$ can be transformed into a pre-ordered set $\langle X, \leq_{\mathcal{T}}\rangle$ where $\leq_{\mathcal{T}}$ is the specialisation order of $\mathcal{T}$ (see Example \ref{exam:ono}). However, a perfect match between the pre-order (implicit) and topological (explicit) model of available information cannot be expected in general. 
The Alexandroff topology determined by the specialisation order $\leq_{\mathcal{T}}$ of an arbitrary topology $\mathcal{T}$ is the closure of $\mathcal{T}$ under arbitrary intersections, not necessarily $\mathcal{T}$ itself. However, if $\mathcal{T}$ is already closed under arbitrary intersections ($\langle X, \mathcal{T}\rangle$ is an Alexandroff space), then $\mathcal{T}$ is the Alexandroff topology determined by $\leq_{\mathcal{T}}$. 

In a general up-space $\langle X, \leq, \mathcal{T}\rangle$ where $\mathcal{T}$ is typically coarser than $\mathcal{T}_{\leq}$, the pre-order $\leq$ is relative to \emph{all information} while $\mathcal{T}$ comprises the \emph{observable information}. The latter is a subset of the former and so open sets of $\mathcal{T}$ are up-sets with respect to $\leq$. 
In general, not all information is observable. For example, while $\mathcal{T}_{\leq}$ is closed under arbitrary intersections, $\mathcal{T}$ does not have to be -- an observer cannot usually make an infinite number of observations and combine them. 
However, $\mathcal{T}$ may differ from $\mathcal{T}_{\leq}$ even if it is closed under arbitrary intersections. In Example \ref{exam:intrays}, $a \in \mathbb{R}$ represent possible values of some quantity (e.g.\ temperature) and intervals $[x, \infty)$ for $x \in \mathbb{Z}$ represent temperature thresholds verifiable by a sensor with integer outputs. Collections of such intervals are closed under arbitrary intersections. 

\begin{definition}
Let $\upset (X, \leq)$ be the collection of up-sets on $\langle X, \leq\rangle$. An \emph{up-model} is $\langle X, \leq, \topo, \val\rangle$ where $\val : \Prop \to \upset (X, \leq)$. 
A \emph{pointed up-model} is a pair $\langle \bm{M}, x \rangle$ where $\bm{M}$ is an up-model on $X$ and $x \in X$. The \emph{satisfaction relation} $\models$ between pointed up-models and formulas $\chi \in \LanBA$ is defined by structural induction on $\chi$ as follows (where $\semM{\f} \coloneq \{ x \in X \mid \bm{M}, x \models \f \}$):
\begin{itemize}
\item $\bm{M}, x \models p$ iff $x \in \val(p)$;
\item $\bm{M}, x \models \neg \f$ iff $y \notin \semM{\f}$ for all $y \geq x$;
\item $\bm{M}, x \models \f \land \ff$ iff $\bm{M}, x \models \f$ and $\bm{M}, x \models \ff$;
\item $\bm{M}, x \models \f \lor \ff$ iff $\bm{M}, x \models \f$ or $\bm{M}, x \models \ff$;
\item $\bm{M}, x \models \f \to \ff$ iff, for all $y \geq x$, if $\bm{M}, y \models \f$, then $\bm{M}, y \models \ff$;
\item $\bm{M}, x \models \Box \f$ iff $x \in \Int_{\bm{M}} (\semM{\f})$;
\item $\bm{M}, x \models \A \f$ iff $\semM{\f} = X$.
\end{itemize}
A formula $\phi$ is valid in an up-model $\bm{M} = \langle X, \leq, \mathcal{T}, \mathcal{V} \rangle$ iff $\semM{\phi} = X$. Validity in up-spaces and classes thereof is defined as usual.
\end{definition}

We will often write $x \models \f$ and $x \in \sem{\f}$ if $\bm{M}$ is clear from the context or immaterial. 
Recall that a set $P \subseteq X$ is \emph{dense} in $\langle X, \topo\rangle$ iff $\Cl(P) = X$. Equivalently, $P$ is dense iff $P \cap U \neq \emptyset$ for all $U \in \topo$ such that $U \neq \emptyset$. %More discussion in \cite{BaltagEtAl2022}. 
Note that if $P$ is dense and $P \subseteq Q$, then $Q$ is dense.

Thanks to the fact that the \emph{interior of complement} can be expressed using the interior operator and intuitionistic negation, support by a dense open set can be expressed in the same way as in classical TEL. 

\begin{lemma}\label{lem:IntcIntuit}
$\Int (\sem{\neg \f}) = \Int (\sem{\f}^{c})$.
\end{lemma}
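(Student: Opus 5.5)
We prove the two inclusions separately. The facts used are that every open set $U \in \topo$ is an up-set with respect to $\leq$, and that $\sem{\f}$ is always an up-set. The latter is the usual persistence property and follows by a routine induction on $\f$, using that $\val(p)$ is an up-set, that $\Int$ of anything is a union of opens and hence an up-set, and that $\sem{\A\f}$ is either $X$ or $\emptyset$.

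For the inclusion $\Int(\sem{\neg\f}) \subseteq \Int(\sem{\f}^{c})$, it suffices to show $\sem{\neg\f} \subseteq \sem{\f}^{c}$, because $\Int$ is monotone. This holds by reflexivity of $\leq$: if $x \models \neg\f$, then no $y \geq x$ satisfies $\f$, and in particular $x \not\models \f$. This direction does not even need persistence.

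For the converse inclusion $\Int(\sem{\f}^{c}) \subseteq \Int(\sem{\neg\f})$, take $x \in \Int(\sem{\f}^{c})$. Then there is some $U \in \topo$ with $x \in U \subseteq \sem{\f}^{c}$. We claim $U \subseteq \sem{\neg\f}$, which places $x$ in $\Int(\sem{\neg\f})$. To see the claim, let $z \in U$ and $y \geq z$. Since $U$ is an up-set, $y \in U \subseteq \sem{\f}^{c}$, so $y \not\models \f$. Hence $z \models \neg\f$.

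The only point needing care is this second inclusion, and specifically the use of the up-set condition on the topology. Without it, the open set $U$ could contain a point with a $\leq$-successor satisfying $\f$, and the claim $U \subseteq \sem{\neg\f}$ would fail. Persistence of $\sem{\f}$ is not actually needed here, since the argument works for any $\f$ once opens are up-sets.
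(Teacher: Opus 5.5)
Your proof is correct and follows essentially the same route as the paper. The first inclusion comes from $\sem{\neg\f} \subseteq \sem{\f}^{c}$ together with monotonicity of $\Int$. The second uses the fact that open sets are up-sets, so every open set disjoint from $\sem{\f}$ lies inside $\sem{\neg\f}$.
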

\begin{proof}
We have $\Int (\sem{\neg \f}) \subseteq \Int (\sem{\f}^{c})$ since $\sem{\neg \f} \subseteq \sem{\f}^{c}$. Conversely, $\Int (\sem{\f}^{c}) \subseteq \sem{\neg \f}$ since all open sets are upwards closed under $\leq$. Hence, $ \Int (\sem{\f}^{c}) \subseteq \Int(\sem{\neg \f})$.
\end{proof}

\begin{proposition}\label{prop:DBKintuit}
The following hold for all pointed up-models $\bm{M}, x$ and all formulas $\f$:
\begin{enumerate}
\item $\bm{M}, x \models \A\Diamond \f$ iff $\semM{\f}$ is a dense set;
\item $\bm{M}, x \models \A\Diamond\Box \f$ iff $U \subseteq \semM{\f}$ for a dense open set $U$;
\item $\bm{M}, x \models \Box \f \land \A\Diamond\Box \f$ iff $U \subseteq \semM{\f}$ for a dense open set $U$ such that $x \in U$.
\end{enumerate}
\end{proposition}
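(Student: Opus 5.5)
The plan is to first compute $\sem{\Diamond \f}$ and then read off the three items from it. Since $\Diamond \f = \neg \Box \neg \f$, I will show that
\[ \sem{\Diamond \f} = \Cl(\sem{\f}). \]
By Lemma~\ref{lem:IntcIntuit}, $\sem{\Box \neg \f} = \Int(\sem{\neg \f}) = \Int(\sem{\f}^{c})$, which is an open set and hence an up-set. For an up-set $U$, intuitionistic negation gives $x \in \sem{\neg}$-of-$U$ iff no $y \geq x$ lies in $U$. Because $U$ is upward closed, this holds iff $x \notin U$: if $x \in U$ then $y = x$ witnesses failure, and if $x \notin U$ then no $y \geq x$ can be in $U$, since otherwise $x$ would already be in $U$ only if $U$ were down-closed. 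This last step needs care; I return to it below. Granting it, $\sem{\neg\Box\neg\f} = (\Int(\sem{\f}^{c}))^{c} = \Cl(\sem{\f})$ by the standard duality between interior and closure.

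With this in hand, item 1 is immediate. $x \models \A\Diamond\f$ iff $\sem{\Diamond\f} = X$ iff $\Cl(\sem{\f}) = X$ iff $\sem{\f}$ is dense. The truth of this does not depend on $x$.

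For item 2, apply item 1 to $\Box\f$: $x \models \A\Diamond\Box\f$ iff $\Int(\sem{\f})$ is dense. If it is dense, take $U = \Int(\sem{\f})$, which is open and contained in $\sem{\f}$. Conversely, if $U \subseteq \sem{\f}$ is dense and open, then $U \subseteq \Int(\sem{\f})$, so the interior is dense because supersets of dense sets are dense.

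For item 3, the left-to-right direction takes $U = \Int(\sem{\f})$, which contains $x$ because $x \models \Box\f$ and is dense by item 2. For right-to-left, $x \in U \subseteq \Int(\sem{\f})$ gives $x \models \Box \f$, and item 2 gives $x \models \A\Diamond\Box\f$.

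The main obstacle is the negation step, and the worry raised above is real. For an up-set $U$, $\sem{\neg U} = \{x \mid {\uparrow}x \cap U = \emptyset\}$, which can be strictly smaller than $U^{c}$, since $x \notin U$ does not prevent some $y \geq x$ from lying in $U$. So I will not claim $\sem{\Diamond\f} = \Cl(\sem{\f})$ outright. Instead I will prove only what item 1 needs:
\[ \sem{\Diamond\f} = X \text{ iff } \Int(\sem{\f}^{c}) = \emptyset. \]
Left to right: if some $y \in \Int(\sem{\f}^{c})$, then $y \models \Box\neg\f$, so $y \not\models \neg\Box\neg\f$, contradicting $\sem{\Diamond\f} = X$. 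Right to left: if $\sem{\Box\neg\f} = \emptyset$, then vacuously every $x$ satisfies $\neg\Box\neg\f$.

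Finally, $\Int(\sem{\f}^{c}) = \emptyset$ iff every nonempty open set meets $\sem{\f}$, iff $\sem{\f}$ is dense. Items 2 and 3 then follow exactly as sketched, since they only use item 1.
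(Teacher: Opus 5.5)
Your final argument is correct and is essentially the paper's. Item 1 goes through $\sem{\Diamond\f} = X$ iff $\Int(\sem{\f}^{c}) = \emptyset$, using Lemma~\ref{lem:IntcIntuit} and reflexivity of $\leq$; item 2 is identical; and in item 3 you take $U = \Int(\sem{\f})$ where the paper forms $U \cup V$, which is a harmless simplification. You were right to abandon the identity $\sem{\Diamond\f} = \Cl(\sem{\f})$, since in general $\sem{\Diamond\f}$ is only the largest up-set contained in $\Cl(\sem{\f})$, so just delete that first attempt from the write-up.
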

\begin{proof}
1. $\sem{\Diamond\f} = X$ iff $\forall x \colon x \notin \Int (\sem{\neg \f})$ iff $\forall x \colon x \notin \Int (\sem{\f}^{c})$ (Lemma \ref{lem:IntcIntuit}) iff $\Cl(\sem{\f}) = X$ iff $\sem{\f}$ is dense.

2. By the first claim, $\sem{\Diamond\Box\f} = X$ iff $\Int (\sem{\f})$ is dense. But clearly $\Int (\sem{\f})$ is dense iff there is a dense $U \in \topo$ such that $U \subseteq \sem{\f}$ (since the collection of dense sets is closed under supersets).

3. By the second claim, we have $x \models \Box \f \land \A\Diamond\Box \f$ iff $x \in U$ for some open $U \subseteq \sem{\f}$ and there is a dense open $V$ such that $V \subseteq \sem{\f}$. It follows that $U \cup V$ is a dense open such that $x \in U \cup V \subseteq \sem{\f}$. Conversely, if $x \in U$ for some dense open $U \subseteq \sem{\f}$, then obviously $x \models \Box \f$ and $\Int(\sem{\f})$ is non-empty and dense, whence $x \models \A \Diamond \Box \f$.
\end{proof}

The upshot of this section is that TEL is easily extended to the intuitionistic setting. Take the usual Kripke-style semantics for intuitionistic propositional logic, add a topological interpretation of $\Box$ as interior and extend the framework with the global modality $\A$. Even though negation is non-classical, we can still express interior-of-complement and therefore density. 
In Section \ref{sec:TowardReTEL}, we consider TEL based on relevant logic. We will see that the situation is more interesting there. 
However, before proceeding to relevant TEL, we will provide an axiomatisation of intuitionistic TEL. 

\section{Axiomatisation of Intuitionistic TEL}\label{sec:InTELax}

To the best of our knowledge, there is no explicit axiomatisation of intuitionistic modal logic with $\A$ in the literature. As discussed in \cite[p.~1304]{StandeferFrench2025}, for example, intuitionistic propositional logic with $\A$ can be axiomatised using Ono's system $\mathit{L_4}$ from \cite{Ono1977}. Here, we extend this axiomatisation to fit the language with both $\A$ and $\Box$. 
While the completeness proof may be relatively unexciting for an expert reader, we include it for the sake of exposition.

\begin{definition}
Let $\itA{iS4}$ be the Hilbert-style system that extends a fixed Hilbert-style system for intuitionistic propositional logic with the axioms and inference rules shows in Figure \ref{fig:HilbertiS4A}. We write $\itA{iS4} \vdash \phi$ to express that $\phi$ is a theorem of $\itA{iS4}$.
\end{definition}

\begin{figure}
\begin{center}
$\heartsuit (\f \to \ff) \to (\heartsuit\f \to \heartsuit\ff)$ \qquad
$\heartsuit \f \to \f$ \qquad
$\heartsuit \f \to \heartsuit \heartsuit \f$ \qquad
$\dfrac{\f}{\heartsuit \f}$\\[3mm]
$\A \f \lor \A \neg \A \f$ \qquad
$\A \gamma \to \Box \A \gamma$ \qquad 
$\neg \A \gamma \to \Box \neg \A \gamma$
\end{center}\caption{Modal axioms and rules of the system $\itA{iS4}$, where $\heartsuit \in \{ \Box, \A \}$.}\label{fig:HilbertiS4A}
\hrulefill
\end{figure}

The proof of the following lemma is straightforward, so we will omit it.

\begin{lemma}\label{lem:SoundnessiS4A}
For all $\phi \in \LanBA$, if $\itA{iS4} \vdash \phi$, then $\phi$ is valid in all up-spaces.
\end{lemma}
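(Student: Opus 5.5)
The proof goes by induction on the length of derivations in $\itA{iS4}$. It suffices to show two things: every axiom is valid in every up-model, and every rule preserves validity in a fixed up-model $\bm{M} = \langle X, \leq, \topo, \val\rangle$. We first record the standard persistence fact: for every $\f \in \LanBA$, $\semM{\f}$ is an up-set. This is proved by induction on $\f$, and the modal cases are the new part. $\sem{\Box\f}$ is a union of open sets, each of which is an up-set, so it is an up-set. $\sem{\A\f}$ is either $X$ or $\emptyset$. Given persistence, the axioms and modus ponens of intuitionistic propositional logic are valid by the usual Kripke argument. The only step of that argument that uses the frame is persistence, and the intuitionistic clauses for $\neg$ and $\to$ quantify over $\leq$-successors exactly as in ordinary Kripke semantics.

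Next come the $\heartsuit$-axioms and rule. For $\heartsuit = \Box$, $\sem{\Box\f} = \Int(\sem{\f})$, and the interior operator satisfies $\Int(P) \subseteq P$, $\Int(P) = \Int(\Int(P))$ and $\Int(X) = X$; these give T, 4 and necessitation. For K, take $x \in \Int(\sem{\f\to\ff}) \cap \Int(\sem{\f})$. Then there are open sets $U \ni x$ and $V \ni x$ with $U \subseteq \sem{\f\to\ff}$ and $V \subseteq \sem{\f}$. The set $U \cap V$ is open and contains $x$, and every $y \in U\cap V$ satisfies both $\f\to\ff$ and $\f$, hence $\ff$ (use reflexivity of $\leq$ in the clause for $\to$). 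This gives $x \in \sem{\Box\ff}$. The implication $\heartsuit(\f\to\ff)\to(\heartsuit\f\to\heartsuit\ff)$ must hold at all $\leq$-successors of $x$, but that is automatic because the argument works at every point. For $\heartsuit=\A$, each of these is immediate from the fact that $\A$ is universal quantification over $X$.

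The remaining three axioms rest on one observation: $\sem{\A\gamma}$ and $\sem{\neg\A\gamma}$ each belong to $\{\emptyset, X\}$. For $\sem{\A\gamma}$ this is by definition. For $\sem{\neg\A\gamma}$, it equals $X$ if $\sem{\A\gamma} = \emptyset$ and equals $\emptyset$ otherwise. Both $\emptyset$ and $X$ are open, so $\Int$ fixes them, which validates $\A\gamma\to\Box\A\gamma$ and $\neg\A\gamma\to\Box\neg\A\gamma$. For $\A\f\lor\A\neg\A\f$, either $\sem{\f} = X$, and then the first disjunct holds everywhere, or $\sem{\A\f} = \emptyset$, in which case $\sem{\neg\A\f} = X$ and so $\A\neg\A\f$ holds everywhere.

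No step is a real obstacle. The point needing most care is persistence for $\Box$ and $\A$, which is needed for the propositional axioms, and this is exactly where the requirement that $\topo$ be an up-set topology is used. The proof is otherwise routine verification, which is presumably why it is omitted.
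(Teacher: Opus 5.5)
Your proof is correct and follows the same approach the paper intends: it omits this proof as straightforward, and your induction on derivations mirrors its soundness proof for the relevant system, including the key observation that $\sem{\A\gamma}$ and $\sem{\neg\A\gamma}$ lie in $\{\emptyset, X\}$ and are therefore open. One step in the K case for $\Box$ deserves to be made explicit: transferring $\Box(\f\to\ff)$ from a point to its $\leq$-successors uses the persistence fact you recorded, not merely the fact that the argument works at each point.
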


Both modal operators clearly distribute over conjunctions, $\itA{iS4} \vdash \heartsuit (\phi \land \psi) \tot (\heartsuit \phi \land \heartsuit \psi)$, and so they are monotonic: if $\itA{iS4} \vdash \phi \to \psi$, then $\itA{iS4} \vdash \heartsuit \phi \to \heartsuit \psi$. We note that $\neg \A \phi \to \A \neg \A \phi$ is provable using the axiom $\A \phi \lor \A \neg \A \phi$, distributivity of $\land$ over $\lor$ and the fact that $\bot \lor \psi \to \psi$ is provable for each contradiction $\bot$.

 \begin{definition}
 An \emph{$\itA{iS4}$-theory} is $\Gamma \subseteq \LanBA$ such that (a) if $\itA{iS4} \vdash \f \to \ff$ and $\f \in \Gamma$, then $\ff \in \Gamma$ and (b) if $\itA{iS4} \vdash \f$, then $\f \in \Gamma$. An $\itA{iS4}$-theory $\Gamma$ is \emph{prime} if $\f \lor \ff \in \Gamma$ only if $\f \in \Gamma$ or $\ff \in \Gamma$. An $\itA{iS4}$-theory $\Gamma$ is \emph{non-trivial} iff $\Gamma \neq \LanBA$.
 \end{definition}
 
 We denote as $|\phi|$ the set of non-trivial prime $\itA{iS4}$-theories $\Gamma$ such that $\phi \in \Gamma$.
 
 \begin{lemma}\label{lem:CanSpaceInt}
 The collection of sets of the form $| \Box \phi |$ for $\phi \in \LanBA$ is a basis for an up-set topology on the set of all non-trivial prime $\itA{iS4}$-theories ordered by set inclusion. 
 \end{lemma}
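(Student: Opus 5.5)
To show that the sets $|\Box\phi|$ form a basis for a topology on the set $X_c$ of non-trivial prime $\itA{iS4}$-theories, I will use the standard criterion: the family must cover $X_c$, and the intersection of any two basic sets must again be a union of basic sets. Afterwards I will show that every basic set is an up-set under $\subseteq$. Any union of up-sets is an up-set, so the whole generated topology is then an up-set topology.

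For the covering condition, I take any tautology $\top$ (say $p \to p$). Then $\itA{iS4} \vdash \top$, and the necessitation rule for $\Box$ gives $\itA{iS4} \vdash \Box\top$. By clause (b) in the definition of a theory, $\Box\top$ belongs to every theory. Hence $|\Box\top| = X_c$.

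For the intersection condition, I will show that $|\Box\phi| \cap |\Box\psi| = |\Box(\phi\land\psi)|$, which is itself a basic set.
\begin{itemize}
\item Left to right: if $\Box\phi, \Box\psi \in \Gamma$, I first show that $\Box\phi \land \Box\psi \in \Gamma$. From $\itA{iS4} \vdash \Box\phi \to (\Box\psi \to (\Box\phi\land\Box\psi))$ and closure under provable implication, I get $\Box\psi \to (\Box\phi\land\Box\psi) \in \Gamma$. I then need closure under modus ponens inside the theory. This follows from the provable formula $(\chi \land (\chi\to\theta)) \to \theta$ together with closure of $\Gamma$ under conjunction. That closure should be checked, since the definition only mentions single-premise implications; it follows from the deduction theorem of intuitionistic logic, $\vdash \chi \to (\theta \to \chi\land\theta)$, and an induction. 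Finally, the noted distributivity $\itA{iS4} \vdash (\Box\phi\land\Box\psi) \to \Box(\phi\land\psi)$ gives $\Box(\phi\land\psi) \in \Gamma$.
\item Right to left: this is immediate from monotonicity of $\Box$, since $\itA{iS4} \vdash \Box(\phi\land\psi) \to \Box\phi$ and likewise for $\psi$.
\end{itemize}
This bookkeeping about theories being closed under conjunction and modus ponens is the only mildly delicate point.

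Upward closure of each $|\Box\phi|$ is trivial: if $\Box\phi \in \Gamma \subseteq \Delta$ then $\Box\phi \in \Delta$. So $|\Box\phi|$ is an up-set of $\langle X_c, \subseteq\rangle$. The topology generated by the basis consists of arbitrary unions of basic sets, and these are up-sets. Therefore the generated topology is an up-set topology on $\langle X_c, \subseteq\rangle$, as claimed.
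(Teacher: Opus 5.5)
Your skeleton matches the paper's proof: necessitation puts $\Box(p\to p)$ in every theory, the theorem $\Box\phi\land\Box\psi\to\Box(\phi\land\psi)$ handles intersections, and basic sets are trivially up-closed. The problem is the step you flag as delicate, closure of $\Gamma$ under conjunction.

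\textbf{Your derivation is circular.} You obtain closure under modus ponens from closure under conjunction, via $(\chi\land(\chi\to\theta))\to\theta$. You then obtain closure under conjunction from $\vdash\chi\to(\theta\to\chi\land\theta)$, which itself needs modus ponens inside $\Gamma$. There is nothing for ``an induction'' to run on.

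\textbf{The step cannot be repaired}, because clauses (a) and (b) alone do not yield closure under conjunction. Take prime theories $x,y$ given by Lemma~\ref{lem:PEint} with $\Box p\in x\setminus y$ and $\Box q\in y\setminus x$. They exist because $\Box p\to\Box q$ and $\Box q\to\Box p$ are unprovable, by Lemma~\ref{lem:SoundnessiS4A}. Then $\Gamma=x\cup y$ is non-empty and satisfies (a), (b), primeness and non-triviality, but $\Box p\land\Box q\notin\Gamma$.

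Worse, this $\Gamma$ refutes the lemma itself under the literal definition. Suppose $\Gamma\in|\Box\chi|\subseteq|\Box p|\cap|\Box q|$. Pair extension gives $\vdash\Box\chi\to\Box p$ and $\vdash\Box\chi\to\Box q$. But then $\Box\chi\in x$ forces $\Box q\in x$, and $\Box\chi\in y$ forces $\Box p\in y$, a contradiction either way.

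So the missing ingredient is definitional, not deductive. Closure under conjunction (equivalently, given (a), closure under modus ponens) has to be part of what an $\itA{iS4}$-theory is, as in clause (b) of the paper's definition of $\itAc{BS4}$-theories. The paper's proof tacitly assumes this when it passes from $\Box\phi,\Box\psi\in\Gamma$ to $\Box(\phi\land\psi)\in\Gamma$. The theories built in the proof of pair extension are closed under conjunction, and the Truth Lemma (Lemma~\ref{lem:TruthInt}) relies on closure under modus ponens. Build that clause into the definition and drop your derivation; the rest of your argument then goes through and coincides with the paper's.
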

 \begin{proof}
 A modification of the standard argument \cite{AielloEtAl2003} for classical $\mathit{S4}$. Let $\Gamma$ be any non-trivial prime $\itA{iS4}$-theory. Firstly, by Necessitation, $\Box (p \to p) \in \Gamma$. Hence, every $\Gamma$ belongs to a basic set. 
 Secondly, if $\Gamma$ belongs to an intersection of two basic sets, i.e.~$\Gamma \in | \Box \phi | \cap |\Box \psi|$ for some $\phi, \psi$, then $\Gamma \in |\Box (\phi \land \psi)|$ since $\itA{iS4} \vdash \Box \phi \land \Box \psi \to \Box (\phi \land \psi)$. Hence, it follows that $\Gamma$  belongs to a basic subset of the intersection the two basic sets. Each basic set $|\Box \phi|$ is clearly closed under $\subseteq$.
 \end{proof}
 
 \begin{definition}
 The \emph{$\itA{iS4}$-space} is $\langle X, \subseteq, \mathcal{T} \rangle$, where $X$ is the set of all non-empty non-trivial prime $\itA{iS4}$-theories and $\mathcal{T}$ is the topology generated by the basis comprising sets $| \Box \phi | = \{ x \in X \mid \Box \f \in \Gamma \}$.
 \end{definition}
 
It follows from Lemma \ref{lem:CanSpaceInt} that the $\itA{iS4}$-space is a well-defined up-space. It can be extended to an up-model $\bm{C}$ by defining $\mathcal{V}$ as usual, namely, $\mathcal{V}(p) \coloneq |p|$. We can prove the Truth Lemma, stating that $\bm{C}, x \models \phi$ iff $\phi \in x$, for all formulas without $\A$. However, as usual in modal logic, it is not the case that $\A \phi \in x$ for some $x \in X$ iff $\phi \in y$ for all $y \in X$. 
What we need to do is to \emph{anchor $\bm{C}$ in a fixed $w \in X$}, that is restrict $\bm{C}$ to prime theories that agree with $w$ on all formulas of the form $\A \phi$.

\begin{definition}
Let $w$ be any non-trivial prime $\itA{iS4}$-theory. The \emph{canonical up-model for $\itA{iS4}$ anchored in $w$} is $\bm{C}^{w} = \langle X^{w}, \subseteq^{w}, \mathcal{T}^{w}, \mathcal{V}^{w} \rangle$ where
\begin{itemize}
\item $\displaystyle X^{w} \coloneq \bigcap_{\A\f \in w} | \A\f| \, \cap \, \bigcap_{\A\ff \notin w} |\neg \A\ff|$ and $x \subseteq^{w} y$ iff $x \subseteq y$ and $x,y \in X^{w}$;
 \item $\topo^{w}$ is the subspace topology on $X^{w}$ derived from $\topo$ (i.e.\ $U \in \topo^{w}$ iff $\exists V \in \topo \colon U = X^{w} \cap V$);
 \item $\mathcal{V}^{w}(p) = |p| \cap X^{w}$.
 \end{itemize}
\end{definition}

\begin{lemma}\label{lem:SameAint}
$x \in X^{w}$ iff $\A \phi \in x \iff \A\phi \in w$ for all $\phi$.
\end{lemma}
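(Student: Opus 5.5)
The lemma is a reformulation of the definition of $X^{w}$, so I will prove the two directions separately, using only the definition, the fact that elements of $X$ are non-trivial prime $\itA{iS4}$-theories, and provable facts about $\A$. Throughout, $x$ ranges over non-trivial prime $\itA{iS4}$-theories, i.e.\ elements of the canonical space.

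For the left-to-right direction, let $x \in X^{w}$ and fix $\phi$. If $\A\phi \in w$, then $x \in |\A\phi|$ by the first intersection in the definition, so $\A\phi \in x$. If $\A\phi \notin w$, then $x \in |\neg\A\phi|$ by the second intersection, so $\neg\A\phi \in x$. We must then exclude $\A\phi \in x$. Otherwise $\A\phi \land \neg\A\phi \in x$, and since $\A\phi \land \neg\A\phi \to \chi$ is an intuitionistic theorem for every $\chi$, closure under provable implication would make $x = \LanBA$. This contradicts non-triviality. Hence $\A\phi \in x \iff \A\phi \in w$.

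For the right-to-left direction, assume $\A\phi \in x \iff \A\phi \in w$ for all $\phi$. For every $\A\phi \in w$ we get $\A\phi \in x$, so $x \in |\A\phi|$. For every $\A\psi \notin w$ we have $\A\psi \notin x$, and we need $\neg\A\psi \in x$. Here I will use the axiom $\A\psi \lor \A\neg\A\psi$, which belongs to $x$ because $x$ is a theory. Since $x$ is prime, either $\A\psi \in x$, which is excluded, or $\A\neg\A\psi \in x$. In the second case the axiom $\A\chi \to \chi$ (the instance $\heartsuit\f\to\f$ with $\heartsuit = \A$) yields $\neg\A\psi \in x$. So $x$ lies in every set of both intersections, and $x \in X^{w}$.

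The only step needing more than unfolding definitions is showing $\neg\A\psi \in x$ from $\A\psi \notin x$. This is exactly where primeness together with the axiom $\A\f \lor \A\neg\A\f$ is essential, since intuitionistic theories need not be complete with respect to $\A\psi$. Alternatively, one can use the derived theorem $\neg\A\phi \to \A\neg\A\phi$ noted earlier, but the direct route via primeness suffices.
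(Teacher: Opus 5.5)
Your proof is correct and follows essentially the paper's argument: non-triviality plus ex falso for the forward direction, and the axioms $\A\phi \lor \A\neg\A\phi$ and $\A\psi \to \psi$ for the converse. The differences are cosmetic. You read $\neg\A\phi \in x$ directly off the second intersection, where the paper detours via $\A\neg\A\phi \in w$. In the converse you apply primeness to $x$, whereas the paper argues contrapositively, applying primeness to $w$ and showing that the biconditional fails at the formula $\neg\A\phi$.
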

\begin{proof}
If $x \in X^{w}$ and $\A \phi \in w$, then $\A \phi \in x$ by definition. If $\A\phi \notin w$, then $\A \neg \A \phi \in w$ using the axiom $\A \phi \lor \A \neg \A \phi$ of $\itA{iS4}$. Then $\A\neg \A \phi \in x$ by definition, whence $\neg \A \phi \in x$ using the axiom $\A \psi \to \psi$. Since $x$ is non-trivial and $\itA{iS4} \vdash \psi \land \neg \psi \to \chi$, we have $\A \phi \notin x$. 

Conversely, assume that $x \notin X^{w}$. Then either there is $\A \phi \in w$ such that $\A \phi \notin x$ and we are done, or there is $\A \phi \notin w$ such that $\neg \A \phi \notin x$. In the latter case, axiom $\A \phi \lor \A \neg \A \phi$ makes sure that $\A \neg \A \phi \in w$. However, $\A \neg \A \phi \notin x$ since that would entail $\neg \A \phi \in x$ using axiom $\A \psi \to \psi$.
\end{proof}

\begin{definition}
A pair $\langle \Delta, \nabla\rangle$ of subsets of $\LanBA$ is \emph{$\itA{iS4}$-independent} if there are no $\f_1, \ldots, \f_n \in \Delta$ and $\ff_1, \ldots, \ff_m \in \nabla$ such that $\itA{iS4} \vdash \bigwedge_{i = 1}^{n} \f_i \, \to\, \bigvee_{j = 1}^{m} \ff_j$.
\end{definition}

\begin{lemma}[Pair extension for $\itA{iS4}$]\label{lem:PEint}
If $\langle \Delta, \nabla\rangle$ is an $\itA{iS4}$-independent pair, then there is a prime $\itA{iS4}$-theory $\Delta'$ such that $\Delta \subseteq \Delta'$ and $\nabla \cap \Delta' = \emptyset$. If both $\Delta$ and $\nabla$ are non-empty, then $\Delta'$ is non-empty and non-trivial.
\end{lemma}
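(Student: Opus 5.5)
The plan is the standard Lindenbaum-style argument for prime theories, relative to derivability in $\itA{iS4}$. Since the language $\LanBA$ is countable (built from the countably infinite $\Prop$ by finitary connectives), fix an enumeration $\chi_0, \chi_1, \ldots$ of all formulas. I would build a chain of $\itA{iS4}$-independent pairs $\langle \Delta_0, \nabla_0\rangle \subseteq \langle \Delta_1, \nabla_1 \rangle \subseteq \cdots$ with $\langle \Delta_0, \nabla_0 \rangle = \langle \Delta, \nabla\rangle$. At stage $n$, put $\chi_n$ into $\Delta_{n+1}$ if $\langle \Delta_n \cup \{\chi_n\}, \nabla_n\rangle$ is independent; otherwise put it into $\nabla_{n+1}$. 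Then set $\Delta' = \bigcup_n \Delta_n$ and $\nabla' = \bigcup_n \nabla_n$.

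The key step, and the only real obstacle, is showing that independence is preserved at each stage. Suppose both $\langle \Delta_n \cup \{\chi_n\}, \nabla_n\rangle$ and $\langle \Delta_n, \nabla_n \cup \{\chi_n\}\rangle$ fail. Then there are $\itA{iS4} \vdash \delta \land \chi_n \to \beta$ and $\itA{iS4} \vdash \delta' \to \beta' \lor \chi_n$, with $\delta, \delta'$ finite conjunctions from $\Delta_n$ and $\beta, \beta'$ finite disjunctions from $\nabla_n$. Empty conjunctions and disjunctions are handled by reading them as a theorem and a contradiction respectively, or by padding with existing members. Intuitionistic propositional reasoning (a cut on $\chi_n$, using distributivity) gives $\itA{iS4} \vdash \delta \land \delta' \to \beta \lor \beta'$, contradicting independence of $\langle \Delta_n, \nabla_n\rangle$. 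Since derivations are finite, the limit pair $\langle \Delta', \nabla'\rangle$ is independent, and every formula lies in exactly one of $\Delta'$, $\nabla'$.

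It then remains to check the required properties of $\Delta'$. For the theory conditions: if $\itA{iS4} \vdash \f \to \ff$ and $\f \in \Delta'$ but $\ff \notin \Delta'$, then $\ff \in \nabla'$, and $\{\f\}, \{\ff\}$ witness dependence. If $\itA{iS4} \vdash \f$ and $\f \in \nabla'$, we get dependence via $\top \to \f$; to avoid empty conjunctions here, I would use any $\delta$ from $\Delta'$ if it is nonempty, or treat $\Delta = \emptyset$ with the convention that the empty conjunction is $p \to p$. For primeness: if $\f \lor \ff \in \Delta'$ while $\f, \ff \in \nabla'$, this contradicts independence directly. Disjointness of $\Delta$ from $\nabla$ follows from $\Delta \subseteq \Delta'$ and $\nabla \subseteq \nabla'$ together with $\Delta' \cap \nabla' = \emptyset$.

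For the final clause, if $\Delta \neq \emptyset$ then $\Delta' \supseteq \Delta$ is non-empty. If $\nabla \neq \emptyset$, any $\ff \in \nabla$ lies outside $\Delta'$, so $\Delta' \neq \LanBA$, and $\Delta'$ is non-trivial.
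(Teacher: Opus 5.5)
Your proof is correct and follows essentially the same Lindenbaum construction as the paper: enumerate the formulas, add each to the left side whenever independence is preserved, establish the step lemma by a cut on $\chi_n$ via distributivity, and then read off closure, primeness and non-triviality from the independence of the limit pair. The paper differs only in proving this once for an arbitrary extension of $B$ using $B$-valid steps, so that the same argument also yields Lemma~\ref{lem:PE}, whereas you reason intuitionistically; your explicit check that $\Delta'$ contains all theorems (clause (b) of an $\itA{iS4}$-theory, which needs the empty-conjunction convention when $\Delta=\emptyset$) is a point the paper's appendix leaves implicit.
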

\begin{proof}
See Appendix \ref{app:PE}.
\end{proof}

\begin{lemma}\label{lem:CompactInt}
Let $\Gamma$ be a non-empty set of formulas and $\psi$ an arbitrary formula. If $\left( \bigcap_{\f \in \Gamma} |\f| \right) \subseteq |\psi|$, then there is a finite $\Delta \subseteq \Gamma$ such that $\vdash \left ( \bigwedge_{\delta \in \Delta} \delta \right ) \to \psi$.
\end{lemma}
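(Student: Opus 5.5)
We argue by contraposition, using the Pair extension Lemma~\ref{lem:PEint}. Suppose there is no finite $\Delta \subseteq \Gamma$ with $\vdash \left( \bigwedge_{\delta \in \Delta} \delta \right) \to \psi$. We then show that the pair $\langle \Gamma, \{\psi\} \rangle$ is $\itA{iS4}$-independent, produce a theory in $\bigcap_{\f \in \Gamma} |\f|$ that avoids $\psi$, and so contradict the inclusion $\bigcap_{\f \in \Gamma}|\f| \subseteq |\psi|$.

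For independence, we need that there are no $\f_1, \ldots, \f_n \in \Gamma$ and $\ff_1, \ldots, \ff_m \in \{\psi\}$ with $\vdash \bigwedge_i \f_i \to \bigvee_j \ff_j$. Any such disjunction $\bigvee_j \ff_j$ is a finite disjunction of copies of $\psi$, and in intuitionistic logic $\psi \lor \cdots \lor \psi \to \psi$ is provable. So a derivation of $\bigwedge_i \f_i \to \bigvee_j \ff_j$ would give $\vdash \bigwedge_{\delta \in \Delta} \delta \to \psi$ with $\Delta = \{\f_1,\ldots,\f_n\}$, which is excluded by assumption.

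Two degenerate cases need care. The first is $n = 0$: the hypothesis that $\Gamma$ is non-empty lets us pad the conjunction with any element of $\Gamma$, since adding a conjunct only strengthens the antecedent. So we may read an empty conjunction as a provable formula and still obtain a finite $\Delta \subseteq \Gamma$. The second is $m = 0$, with the empty disjunction read as a contradiction $\bot$. Then $\vdash \bigwedge_i \f_i \to \bot$ yields $\vdash \bigwedge_i \f_i \to \psi$ by ex falso, again excluded. Handling these conventions correctly is the main fiddly point.

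Lemma~\ref{lem:PEint} now gives a prime $\itA{iS4}$-theory $\Delta'$ with $\Gamma \subseteq \Delta'$ and $\psi \notin \Delta'$. Since both $\Gamma$ and $\{\psi\}$ are non-empty, the same lemma guarantees that $\Delta'$ is non-empty and non-trivial, so $\Delta' \in X$. Then $\Delta' \in \bigcap_{\f \in \Gamma} |\f|$ but $\Delta' \notin |\psi|$, contradicting the assumption.
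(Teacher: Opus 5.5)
Your proof is correct and follows the argument the paper only sketches. You argue by contraposition, observe that finitariness makes $\langle \Gamma, \{\psi\}\rangle$ $\itA{iS4}$-independent, and apply Lemma~\ref{lem:PEint}, using $\Gamma \neq \emptyset$ for non-triviality, to obtain a theory in $\bigcap_{\f \in \Gamma}|\f| \setminus |\psi|$. The paper also mentions Zorn's lemma, but your route does not need it, since Lemma~\ref{lem:PEint} is already proved by enumerating the countable language.
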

\begin{proof}
Standard argument using Zorn's lemma, the fact that $\itA{iS4}$ is finitary and Lemma \ref{lem:PEint}.
\end{proof}

We define $|\chi|^{w} \coloneq |\chi |\cap X^{w}$.

\begin{lemma}[Truth Lemma]\label{lem:TruthInt}
For all $\chi \in \LanBA$, $|\chi|^{w} = \sem{\chi}_{\bm{C}^{w}}$.
\end{lemma}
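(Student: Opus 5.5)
We prove the Truth Lemma by structural induction on $\chi$, working throughout inside $X^{w}$. The base case holds by definition of $\mathcal{V}^{w}$, and the cases for $\land$ and $\lor$ follow from the theory properties and primeness of the elements of $X^{w}$. For $\to$ and $\neg$, the left-to-right inclusion uses modus ponens inside theories; for $\neg$ we also use non-triviality together with $\vdash \f \land \neg\f \to \chi$. For the converse inclusion, suppose $\f \to \ff \notin x$ with $x \in X^{w}$. Then $\langle x \cup \{\f\}, \{\ff\}\rangle$ is $\itA{iS4}$-independent by the deduction theorem for intuitionistic logic, since $\to$ is intuitionistic implication and the modal rules only act on theorems. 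Lemma~\ref{lem:PEint} gives a prime non-trivial $y \supseteq x \cup\{\f\}$ with $\ff \notin y$. Since $y \supseteq x$, it contains every $\A\phi \in w$ and every $\neg\A\phi$ with $\A\phi\notin w$, so $y \in X^{w}$ by Lemma~\ref{lem:SameAint}. The case of $\neg$ is identical, with the target $\{\bot\}$ or any fixed refutable formula.

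For $\A$, suppose $\A\f \in x$. Then $\A\f \in w$ by Lemma~\ref{lem:SameAint}, so every $y \in X^{w}$ contains $\A\f$, hence $\f$ by the axiom $\A\f\to\f$. By the induction hypothesis, $\sem{\f} = X^{w}$. Conversely, suppose $\A\f \notin x$. Then $\A\f\notin w$, so $\neg\A\f \in w$ and $\A\neg\A\f\in w$. Consider the pair $\langle \{\A\psi \mid \A\psi\in w\}\cup\{\neg\A\psi\mid \A\psi\notin w\}, \{\f\}\rangle$. If it were dependent, we would have $\vdash \bigwedge \A\psi_i \land \bigwedge\neg\A\psi_j \to \f$. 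Each $\neg\A\psi_j$ can be upgraded to $\A\neg\A\psi_j$ via the derived principle $\neg\A\psi\to\A\neg\A\psi$. Necessitation and K for $\A$, together with the 4 axiom, then give $\vdash \bigwedge\A\psi_i\land\bigwedge\A\neg\A\psi_j \to \A\f$, so $\A\f \in w$, a contradiction. So some prime $y$ extends this pair, and $y\in X^{w}$ by Lemma~\ref{lem:SameAint} with $\f\notin y$.

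The main obstacle is $\Box$, because $\topo^{w}$ is a subspace topology and its basic opens are $|\Box\psi|^{w}$. For the left-to-right direction, if $\Box\f\in x$ then $x\in|\Box\f|^{w}\subseteq|\f|^{w}$ by the T axiom, and $|\Box\f|^{w}$ is open, so $x\in\Int(\sem{\f})$ by the induction hypothesis. For the converse, suppose $x \in |\Box\psi|^{w}\subseteq |\f|^{w}$. The difficulty is that this inclusion only holds relative to $X^{w}$, so Lemma~\ref{lem:CompactInt} must be applied to the set $\Gamma = \{\Box\psi\}\cup\{\A\theta\mid\A\theta\in w\}\cup\{\neg\A\theta\mid \A\theta\notin w\}$. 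By Lemma~\ref{lem:SameAint}, $\bigcap_{\gamma\in\Gamma}|\gamma| = |\Box\psi|^{w} \subseteq |\f|$, so we obtain $\vdash \Box\psi\land\bigwedge\A\theta_i\land\bigwedge\neg\A\theta_j\to\f$. We then use the axioms $\A\gamma\to\Box\A\gamma$ and $\neg\A\gamma\to\Box\neg\A\gamma$, and $\Box\psi\to\Box\Box\psi$. Since $\Box$ is monotone and distributes over $\land$, this yields $\vdash \Box\psi\land\bigwedge\A\theta_i\land\bigwedge\neg\A\theta_j\to\Box\f$. All conjuncts on the left belong to $x$, because $x\in X^{w}$ and $\Box\psi \in x$, so $\Box\f\in x$. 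This step is exactly where the two mixed axioms are needed.
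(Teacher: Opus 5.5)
Your proof is correct and follows the paper's argument essentially step for step. You use pair extension for the $\to$ case, Lemma~\ref{lem:CompactInt} applied to $\Box\psi$ together with the anchoring formulas followed by the mixed axioms $\A\gamma\to\Box\A\gamma$ and $\neg\A\gamma\to\Box\neg\A\gamma$ for the $\Box$ case, and independence of the anchored pair $\langle\{\A\psi \mid \A\psi\in w\}\cup\{\neg\A\psi\mid\A\psi\notin w\},\{\f\}\rangle$ for the $\A$ case. Beyond that there are only minor differences: you treat $\neg$ explicitly, get $y\in X^{w}$ in the $\to$ case directly from $x\subseteq y$ (the paper detours through $\A\gamma\to(\f\to\A\gamma)$), make the needed $\Box\psi\to\Box\Box\psi$ explicit, and derive the contradiction in $w$ rather than in $x$.
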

\begin{proof}
Structural induction on $\chi$. The interesting cases are (i) $\chi = \phi \to \psi$, (ii) $\chi = \Box \phi$ and (iii) $\chi = \A \phi$.

(i) $\chi = \phi \to \psi$. Assume that $\phi \to \psi \in x$, $x \subseteq^{w} y$ and, using the induction hypothesis, $\phi \in y$. Then $\phi \to \psi \in y$ and $\psi \in y$ since $y$ is closed under modus ponens. 
Conversely, if $\phi \to \psi \notin x$, then there is $y$ such that $x \subseteq y$, $\phi \in y$ and $\psi \notin y$ by Lemma \ref{lem:PEint} since the pair $\langle x \cup \{ \phi \}, \{ \psi \} \rangle$ is independent. We prove that $y \in X^{w}$ as follows. If $\A \gamma \in w$, then $\A \gamma \in x$. Since $\A \gamma \to (\phi \to \A \gamma)$ is a theorem of $\itA{iS4}$, we obtain $\phi \to \A \gamma \in y$. Hence, $\A \gamma \in y$. 
If $\A \gamma \notin w$, then $\neg \A \gamma \in x$. Since $\neg\A \gamma \to (\phi \to \neg\A\gamma)$ is a theorem of $\itA{iS4}$, we obtain $\phi \to \neg\A \gamma \in y$. Hence, $\neg \A\gamma \in y$.
%% In the relevant case, this proof differs: the "existence claim" is more complicated and we cannot rely on the positive paradox of material implication -- we have to assume these particular instances as axioms.

(ii) $\chi = \Box \phi$. If $\Box \phi \in x$, then $x \in | \Box \phi |^{w} \in \mathcal{T}^{w}$. Using the axiom $\Box \phi \to \phi$, we obtain $|\Box \phi | \subseteq |\phi|$ and so $|\Box\phi|^{w} \subseteq |\phi|^{w}$. 
Conversely, assume that there is $U \in \mathcal{T}^{w}$ such that $x \in U \subseteq |\phi|^{w}$. We know that $U = V \cap X^{w}$ for some $V \in \mathcal{T}$. Without loss of generality, we may assume that $x \in |\Box\psi|\cap X^{w} \subseteq |\phi|$ for some specific $\psi$. 
By Lemma \ref{lem:CompactInt}, there are $\{ \A \chi_i \}_{i = 1}^{n} \subseteq w \cap x$ and $\{ \A \chi'_{j} \}_{j = 1}^{m} \subseteq w^{c}$ such that
\[\vdash \Box \psi \, \land \, \bigwedge_{i = 1}^{n} \A \chi_i \, \land \, \bigwedge_{j = 1}^{m} \neg \A \chi'_{j} \, \to \, \phi\]
and $\{ \neg \A \chi'_{j} \}_{j = 1}^{m} \subseteq x$. 
We can infer the following using axioms $\A \gamma \to \Box \A \gamma$, $\neg \A \gamma \to \Box \neg \A \gamma$ and the fact that $\Box$ is monotonic and distributes over conjunctions in $\itA{iS4}$:
\[\vdash \Box \psi \, \land \, \bigwedge_{i = 1}^{n} \A \chi_i \, \land \, \bigwedge_{j = 1}^{m} \neg \A \chi'_{j} \, \to \, \Box \phi \, .\] 
It follows that $\Box \phi \in x$.

(iii) $\chi =  \A \phi$. If $\A \phi \in x$, then $\A \phi \in w$ by Lemma \ref{lem:SameAint}. Hence, $\A\phi \in y$ for all $y \in X^{w}$. Using the axiom $\A \phi \to \phi$, we infer that $\phi \in y$ for all $y \in X^{w}$. 
Conversely, assume that $\A \phi \notin x$. We show that the pair $\langle \{ \A \psi \mid \A \psi \in w \} \cup \{ \neg \A \psi' \mid \A \psi' \notin w \}, \{ \phi \} \rangle$ is independent. If not, then we have
\[ \vdash \bigwedge_{i = 1}^{n} \A \psi_i \, \land\, \bigwedge_{j = 1}^{m} \neg \A \psi'_j \, \to \, \phi\] 
for some $\{ \A\psi_i \}_{i = 1}^{n} \subseteq w$ and $\{ \A \psi'_{j} \}_{j = 1}^{m} \subseteq w^{c}$. 
We infer the following using axiom $\A \psi \to \A \A \psi$, the theorem $\neg \A \psi \to \A \neg \A \psi$ and the fact that $\A$ is monotonic and distributes over conjunctions in $\itA{iS4}$:
\[ \vdash \bigwedge_{i = 1}^{n} \A \psi_i \, \land\, \bigwedge_{j = 1}^{m} \neg \A \psi'_j \, \to \, \A\phi \, .\] 
Since the antecedent of this implication is in $x$ (as $x \in X^{w}$), we conclude that $\A\f \in x$, which contradicts our assumption. Hence, the pair is independent and we have $y \in X^{w}$ such that $\phi \notin y$ by Lemma \ref{lem:PEint}.
\end{proof}
%% In the relevant setting, we don't have the K axiom for $\Box$ or $\A$, so we have to assume distribution over conjunctions as axiom -- as ususal un weak relevant modal logics.

\begin{theorem}\label{thm:iS4ACompleteness}
A formula $\phi$ is valid in all up-spaces iff it is provable in $\itA{iS4}$.
\end{theorem}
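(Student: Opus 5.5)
Soundness is exactly Lemma \ref{lem:SoundnessiS4A}, so the work lies in completeness, which I will argue contrapositively. Suppose $\itA{iS4} \nvdash \phi$. I will build an up-space, namely a canonical up-model anchored in a suitable prime theory, in which $\phi$ fails at some point.

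First, the pair $\langle \{\top\}, \{\phi\}\rangle$ is $\itA{iS4}$-independent, where $\top$ is any theorem such as $p \to p$. Indeed, $\vdash \top \to \phi$ would give $\vdash \phi$ by modus ponens. Both components are non-empty, so Lemma \ref{lem:PEint} yields a non-empty, non-trivial prime $\itA{iS4}$-theory $w$ with $\phi \notin w$.

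Next, consider the canonical up-model $\bm{C}^{w}$ anchored in $w$. By Lemma \ref{lem:SameAint}, $w \in X^{w}$, since $w$ trivially agrees with itself on all formulas $\A\psi$. I then check that $\langle X^{w}, \subseteq^{w}, \mathcal{T}^{w}\rangle$ is an up-space. The relation $\subseteq^{w}$ is a preorder on $X^{w}$. Every $U \in \mathcal{T}^{w}$ has the form $V \cap X^{w}$ with $V \in \mathcal{T}$, and $V$ is an up-set by Lemma \ref{lem:CanSpaceInt}; so $U$ is an up-set with respect to $\subseteq^{w}$. Subspace topologies are topologies, so $\mathcal{T}^{w}$ is an up-set topology. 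Finally, $\mathcal{V}^{w}(p) = |p| \cap X^{w}$ is an up-set, because prime theories are closed under inclusion of membership. Hence $\bm{C}^{w}$ is a genuine up-model.

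Applying the Truth Lemma \ref{lem:TruthInt} now gives $w \notin |\phi|^{w} = \sem{\phi}_{\bm{C}^{w}}$. So $\phi$ is not valid in the underlying up-space of $\bm{C}^{w}$, hence not valid in all up-spaces.

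The substantive work already lives in the Truth Lemma, so I expect no real obstacle here. The only points needing care are two bookkeeping facts. One is that $w$ itself lies in $X^{w}$, which guarantees the countermodel is non-empty and contains the failing point. The other is that the subspace topology restricted to the anchored set still consists of up-sets with respect to the restricted order.
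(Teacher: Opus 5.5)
Your proof is correct and follows the paper's route: extend an independent pair to a non-trivial prime theory $w$ with $\phi \notin w$, then apply the Truth Lemma in $\bm{C}^{w}$. The differences are cosmetic. You extend $\langle \{\top\}, \{\phi\}\rangle$ rather than the paper's $\langle T, \{\phi\}\rangle$; this is harmless, because a non-empty set closed under intuitionistically provable implications already contains every theorem. You also spell out that $w \in X^{w}$ and that $\bm{C}^{w}$ is an up-model, which the paper leaves implicit.
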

\begin{proof}
Soundness is established by Lemma \ref{lem:SoundnessiS4A}. Completeness: if $\not\vdash \phi$, then $\langle T, \{ \phi \} \rangle$ is independent, where $T$ is the set of all theorems of $\itA{iS4}$. Then there is a non-trivial prime $\itA{iS4}$-theory $w$ such that $\phi \notin w$. By Lemma \ref{lem:TruthInt}, $\phi$ is not satisfied in $w$ in $\bm{C}^{w}$. Hence, $\phi$ is not valid in all up-spaces.
\end{proof}

\section{Towards relevant TEL}\label{sec:TowardReTEL}

The results of Section \ref{sec:InTEL} suggest a strategy for obtaining relevant TEL: first, take ordered frames for relevant logics \cite{Restall2000,RoutleyMeyer1973,RoutleyEtAl1982} and endow them with an up-set topology; add a modal $\Box$ to the language and interpret it as interior; add the global modality $\A$. In effect, this is the framework of \cite{StandeferFrench2025}, extended with a topologically interpreted $\Box$.

Firstly, we will set up this framework in more detail. Secondly, we will show that it cannot express interior-of-complement and that it cannot express the fact that a given formula is supported by a dense open set. In the next section, we extend the framework so that these facts become expressible.  

\begin{definition}
A \emph{Routley--Meyer frame} \cite{RoutleyEtAl1982,Standefer2026} is $\bm{F} = \langle X, \leq, N, R, \,^{*}\rangle$ where $\langle X, \leq\rangle$ is a pre-ordered set, $N \subseteq X$ is closed upwards under $\leq$, $R$ is a ternary relation on $X$ and $\,^{*}$ is a unary operation on $X$ such that
\begin{itemize}\setlength{\itemindent}{5mm}
\item[(RM1)] if $x \in N$ and $Rxyz$, then $y \leq z$;
\item[(RM2)] for all $y$, there is $x \in N$ such that $Rxyy$;
\item[(RM3)] if $Rxyz$ and $x' \leq x$, $y' \leq y$, $z \leq z'$, then $Rx'y'z'$;
\item[(RM4)] if $x \leq y$, then $y^{*} \leq x^{*}$;
\item[(RM5)] $x = x^{**}$.
\end{itemize}
A \emph{Routley--Meyer up-space} is a pair $\langle \bm{F}, \topo\rangle$ where $\topo$ is an up-set topology on $\bm{F}$.
\end{definition}

The motivation for adopting the frame conditions (RM1--5) is explained after Definition \ref{def:RMupmodel}. Before we get to the definition, we give two examples of RM up-models.

\begin{example}\label{exam:Z}
Expanding Example \ref{exam:intrays}, take $\bm{F}_{\mathbb{R}} = \langle \mathbb{R}, \leq, N, R, - \rangle$ where $\leq$ is the standard ordering of real numbers, $N = [0, \infty )$, and $Rxyz$ iff $x + y \leq z$. 
$\bm{F}_{\mathbb{R}}$ can be endowed by the right order topology $\topo_{\mathbb{Z}} = \{ \emptyset, \mathbb{R} \} \cup \{ [x, \infty) \mid x \in \mathbb{Z} \}$.
\end{example}

\begin{example}\label{exam:Sigma*}
Let $\Sigma^{*}$ be the set of all finite words over an alphabet $\Sigma$, including the empty word $\epsilon$. Take $\bm{F}_{\Sigma^{*}} = \langle \powset (\Sigma^{*}), \subseteq, N, R, \,^{c}\rangle$, where $A \in N$ iff $\epsilon \in A$ and $R(A,B,C)$ iff $AB= \{ wu \mid w \in A \And u \in B \} \subseteq C$. 
$\bm{F}_{\Sigma^{*}}$ can be endowed with the topology of `observable properties', generated by the sub-basis comprising sets $U_w = \{ A \subseteq \Sigma^{*} \mid w \in A \}$ for all $w \in \Sigma^{*}$. Intuitively, the open sets in this topology are properties of languages that can be verified by observing finite words. 
\end{example}

%Recall that $\upset (X, \leq)$ is the set of up-sets on $\langle X, \leq\rangle$; this notation extends to pre-ordered relational structures such as RM frames.

\begin{definition}\label{def:RMupmodel}
A \emph{RM up-model} is $\bm{M} = \langle \bm{F}, \topo, \val\rangle$ where $\val: \Prop \to \upset(X, \leq)$. Satisfaction is defined as follows (as before, $\semM{\f} = \{ x \mid \bm{M}, x \models \f \}$):
\begin{itemize}
\item $\bm{M}, x \models p$ iff $x \in \val(p)$;
%\item $\bm{M}, x \not\models \bot$ for all $x \in X$;
\item $\bm{M}, x \models \f \land \ff$ iff $\bm{M}, x \models \f$ and $\bm{M}, x \models \ff$;
\item $\bm{M}, x \models \f \lor \ff$ iff $\bm{M}, x \models \f$ or $\bm{M}, x \models \ff$;
\item $\bm{M}, x \models \f \to \ff$ iff, for all $y,z$, if $Rxyz$ and $\bm{M}, y \models \f$, then $\bm{M}, z \models \ff$;
\item $\bm{M}, x \models \neg \f$ iff $\bm{M}, x^{*} \not\models \f$;
\item $\bm{M}, x \models \Box \f$ iff $x \in \Int_{\bm{M}}(\semM{\f})$;
\item $\bm{M}, x \models \A \f$ iff $\bm{M}, y \models \f$ for all $y \in X$.
\end{itemize}
A formula $\f$ is \emph{valid} in an RM up-model $\bm{M}$ iff $N \subseteq \semM{\f}$. Let $\mathsf{BS4}_{\A}$ be the logic of all RM up-spaces, defined as the set of all $\LanBA$-formulas valid in all RM up-models. 
\end{definition}

The next lemma, useful in proving validity of implicational formulas, provides a motivation for the frame conditions (RM1--4). The frame condition (RM5) is needed for the double negation law to hold.

\begin{lemma}\label{lem:SemanticDeductionThm}
$\f \to \ff$ is valid in $\bm{M}$ iff $\semM{\f} \subseteq \semM{\ff}$.
\end{lemma}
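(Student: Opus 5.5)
Both directions follow from the frame conditions, once we know that every $\semM{\chi}$ is an up-set. So the first step is a routine structural induction showing that $\semM{\chi}$ is closed upwards under $\leq$ for every formula $\chi$. The base case holds because $\val(p) \in \upset(X,\leq)$, and $\land$, $\lor$ are immediate. For $\to$, if $x \leq x'$ and $Rx'yz$, then (RM3) gives $Rxyz$, so whatever $x$ forces about its $R$-successors transfers to $x'$. For $\neg$, if $x \leq x'$ then $x'^{*} \leq x^{*}$ by (RM4). Now $x'^{*} \models \f$ would give $x^{*} \models \f$ by the induction hypothesis, contradicting $x \models \neg\f$; hence $x' \models \neg\f$. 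For $\Box$, the set $\Int(\semM{\f})$ is a union of open sets, which are up-sets, so it is an up-set. For $\A$, the set $\semM{\A\f}$ is either $X$ or $\emptyset$.

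For the right-to-left direction, assume $\semM{\f} \subseteq \semM{\ff}$ and take $x \in N$. Suppose $Rxyz$ and $y \models \f$. By (RM1), $y \leq z$. Since $y \models \f$ we also have $y \models \ff$, and persistence of $\semM{\ff}$ then gives $z \models \ff$. Hence $x \models \f \to \ff$ for every $x \in N$, i.e.\ $\f \to \ff$ is valid in $\bm{M}$.

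For the left-to-right direction, assume $N \subseteq \semM{\f \to \ff}$ and let $y \models \f$. By (RM2), there is $x \in N$ with $Rxyy$. Since $x \models \f \to \ff$, we get $y \models \ff$, which gives the inclusion.

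The only step needing any care is the persistence lemma, specifically the $\neg$ case (which uses (RM4) contrapositively) and the $\to$ case (which uses the antitonicity of (RM3) in the first argument). Everything else is a direct unpacking of the definitions.
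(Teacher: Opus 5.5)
Your proof is correct and follows the paper's argument: persistence of every $\semM{\chi}$ (via (RM3) for $\to$ and (RM4) for $\neg$), then (RM1) together with persistence for the right-to-left direction, and (RM2) for the left-to-right direction. You also fill in the $\Box$ and $\A$ cases that the paper's one-line sketch leaves implicit, and the same reasoning covers $\Boxc$ in $\LanBAc$, since $\Int(\semM{\f}^{c})$ is a union of up-sets.
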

\begin{proof}
This holds thanks to (RM1), (RM2) and the fact that $\semM{\f}$ us an up-set for all $\f$. The latter fact holds for $\f = \ff \to \chi$ by (RM3) and for $\f = \neg \ff$ by (RM4).
\end{proof}

We illustrate the satisfaction clauses of Definition \ref{def:RMupmodel} in the spaces discussed in previous examples.

\paragraph{Example \ref{exam:Z}, continued.}\hspace*{-1em} In any model $\bm{M}$ on $\bm{F}_{\mathbb{R}}$, we have $x \models \f \to \ff$ iff, for all $y,z \in \mathbb{R}$, if $y \models \f$, then $x + y \models \ff$ (`adding' $\phi$ gives $\psi$). On the other hand, $x \models \neg \f$ iff $-x \not\models \f$. Not all `properties' are observable directly if our measurement device has only integer outputs. For example, if $\sem{p} = [0.5, \infty)$, then $\sem{\Box p} = [1, \infty)$, meaning that measuring $1$ is the closest we can get to verifying $p$.

\vspace*{-2mm}

\paragraph{Example \ref{exam:Sigma*}, continued.}\hspace*{-1em} Formulas express properties of languages. In particular, $A \models \f \to \ff$ if  $AB \models \ff$ for all $B$ such that $B \models \f$ (`concatenating with a $\f$-language gives a $\psi$-language'). Not every property of languages can be verified by observing finite words. E.g.\ since every non-empty open set contains both finite and infinite languages, $\sem{\Box p} = \emptyset$ if $\sem{p} = \{ A \subseteq \Sigma^{*} \mid A \text{ is infinite} \}$.

\smallskip

Now we show that, over RM up-spaces, $\LanBA$ lacks the expressivity for the salient TEL concepts.

\begin{proposition}\label{prop:inexpressible}
For any given $p \in \Prop$: 
\begin{enumerate}
\item there is no $\f(p) \in \LanBA$ such that, for all $\bm{M}$ we have $\bm{M}, x \models \f(p)$ iff $x \in \Int_{\bm{M}}(\semM{p}^{c})$;
\item there is no $\ff(p) \in \LanBA$ such that, for all $\bm{M}$ we have $\bm{M}, x \models \ff(p)$ iff there is a dense open $U$ such that $U \subseteq \semM{p}$.
\end{enumerate}
\end{proposition}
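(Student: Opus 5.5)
Both claims will be obtained from a single invariance argument. The idea is to take two RM up-models that agree on everything the language $\LanBA$ can see, but differ in their topologies in a way that changes $\Int(\semM{p}^{c})$ and the density facts.

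\textbf{Key observation.} In RM up-models the only operator that inspects the topology is $\Box$, and $\Box$ only takes interiors of truth sets of formulas. These truth sets are always up-sets. Negation uses the Routley star $^{*}$ rather than complement, so no complement of an up-set is ever produced inside the language.

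\textbf{Construction.} Take the two-element chain $X=\{a,b\}$ with $a\leq b$. Set $N=X$, let $Rxyz$ iff $y\leq z$, and let $^{*}$ be the order-reversing swap $a^{*}=b$, $b^{*}=a$. (RM1)--(RM5) are routine to check.

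Make $p$ true exactly at $b$, so $\semM{p}=\{b\}$, and give every other variable the same up-set valuation in both models.

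\begin{itemize}
\item $\bm{M}_1$ carries the Alexandroff topology $\{\emptyset,\{b\},X\}$.
\item $\bm{M}_2$ carries the indiscrete topology $\{\emptyset,X\}$.
\end{itemize}

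Both are up-set topologies.

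\textbf{Invariance step.} I then show by induction on $\f$ that $\sem{\f}_{\bm{M}_1}$ and $\sem{\f}_{\bm{M}_2}$ are related in a controlled way.

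The expected obstacle is that the two topologies give different interiors of $\{b\}$. So plain equality of truth sets fails, and I must choose the pair of models more carefully. The fix is to pick the models so that every up-set which can arise as a truth set has the same interior in both, while the complement $\{a\}$ (a down-set, never a truth set) behaves differently.

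A cleaner choice is to compare topologies that differ only on non-up-sets. That is impossible, since open sets must be up-sets. Instead I vary the frame: use a three-element frame in which $\semM{p}^{c}$ contains a point that is open in one model and not open in the other. The star must be chosen so that $\neg$ never yields that region as a truth set.

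Concretely, I will look for a model $\bm{M}$ and a point $x$ such that:
\begin{itemize}
\item $x\in\Int(\semM{p}^{c})$ holds;
\item there is a bounded morphism (preserving $\leq$, $R$, $^{*}$, $N$, valuation, and mapping opens to opens and back) onto a model where the image of $x$ is not in $\Int(\semM{p}^{c})$.
\end{itemize}
Truth of all $\LanBA$-formulas is preserved by such a morphism, by the standard induction. The $\Box$ case uses openness together with continuity, and the $\A$ case uses surjectivity.

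\textbf{Consequences.} This refutes claim~1. For claim~2, the same pair is used: $\Int(\semM{p})$ is dense in one model and not dense in the other, yet the truth of the candidate formula $\ff(p)$ is preserved (it is uniform across points because of $\A$).

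I expect designing the morphism so that the complement region differs while truth sets are preserved to be the main work. The rest is routine induction.
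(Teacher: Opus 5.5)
There is a genuine gap. No countermodel is ever exhibited, and the method you settle on cannot produce one.

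Your one concrete construction, the two-point chain, fails more badly than you note. Apart from $\Box p$ distinguishing the two topologies, $\{a\}$ is not an up-set, so $\Int(\{a\})=\emptyset$ in every up-set topology on the chain. The target property therefore does not change at all.

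The fallback, a surjective bounded morphism $f$ that is continuous and open and respects the valuation, is ruled out in principle:
\begin{itemize}
\item Continuity and openness give $\Int(f^{-1}(B))=f^{-1}(\Int(B))$ for \emph{every} $B\subseteq X'$; this is exactly what your $\Box$ case uses.
\item Since $\sem{p}^{c}_{\bm{M}}=f^{-1}(\sem{p}^{c}_{\bm{M}'})$, you get $x\in\Int(\sem{p}^{c}_{\bm{M}})$ iff $f(x)\in\Int(\sem{p}^{c}_{\bm{M}'})$.
\item Preimages under continuous open surjections preserve and reflect density, and $\Int(\sem{p}_{\bm{M}})=f^{-1}(\Int(\sem{p}_{\bm{M}'}))$. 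So the existence of a dense open subset of $\sem{p}$ is invariant as well.
\end{itemize}
Any comparison that respects the whole topology also respects interiors of non-definable sets, which is precisely what has to be avoided.

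What is missing is invariance relative to the definable sets only, together with a correct diagnosis of the obstruction. The obstruction is not the up-set restriction: the paper's countermodels use the identity order, so every subset is an up-set. It is the fact that Routley-star negation is not complement.

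The paper takes $X=\{1,2,3,4\}$ with $\leq$ the identity, $N=X$, $Rxyz$ iff $y=z$ (so every $\sem{\f\to\ff}$ is $X$ or $\emptyset$), and $^{*}$ swapping $1\leftrightarrow 3$ and $2\leftrightarrow 4$. It then shows by induction that all truth sets lie in a small family $\mathcal{F}$ closed under $\cap$, $\cup$, $A\mapsto\{x\mid x^{*}\notin A\}$, the collapsed implication, $\Int$ and $\A$.
\begin{itemize}
\item For claim~1 (topology $\{\emptyset,\{1,2\},\{3,4\},X\}$, $\sem{p}=\{1\}$), one gets $\mathcal{F}=\{\emptyset,\{1\},\{1,2\},\{1,2,4\},X\}$, which misses $\Int(\sem{p}^{c})=\{3,4\}$.
\item For claim~2 ($\sem{p}=\{1,2\}$, topologies $\{\emptyset,\{1,2\},X\}$ and $\{\emptyset,\{1,2\},\{3\},\{1,2,3\},X\}$), the two interior operators agree on $\mathcal{F}=\{\emptyset,\{1,2\},X\}$. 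Hence all truth sets coincide, while $\{1,2\}$ is dense only in the first space.
\end{itemize}

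This is the idea you floated and then dropped: choosing the models so that every possible truth set has the same interior in both. In fact this same pair also refutes claim~1, since $\Int(\{3,4\})$ is $\emptyset$ in one space and $\{3\}$ in the other. So your single invariance argument is realisable, just not through morphisms that respect the full topology.
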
 
\begin{proof}
1. We construct an RM up-model $\bm{M}$ such that $\semM{\f} \neq \Int(\semM{p}^{c})$ for all $\f$. See Figure \ref{fig:no-intc}, where $N = X = \{ 1, \ldots, 4 \}$, $\leq$ is the identity relation and $Rxyz$ iff $y = z$. The Routley star operation is indicated by the solid arrows. The topology is $\topo = \{ \emptyset, \{ 1, 2 \}, \{ 3, 4 \}, X \}$. Moreover, $\val(p) = \{ 1 \}$ and $\val(q) = \emptyset$ for all other $q \in \Prop$. 
Let $\mathcal{F} = \{ \emptyset, \{ 1 \}, \{ 1, 2 \}, \{ 1, 2, 4 \}, X \}$. We prove by structural induction that $\llbracket \f\rrbracket \in \mathcal{F}$ for all $\f \in \LanBA$. 
The claim is obvious for elements of $\Prop$. The cases of the induction step corresponding to $\land$ and $\lor$ are also clear. For $\neg$, note that $\neg \emptyset \coloneq \{ x \mid x^{*} \notin \emptyset \} = X$, $\neg \{ 1 \} = \{ 1, 2, 4 \}$, $\neg \{ 1, 2 \} = \{ 1, 2 \}$, $\neg \{ 1, 2, 4 \} = \{ 1 \}$ and $\neg X = \emptyset$. For $\to$, note that $\sem{\f \to \ff} = X$ if $\sem{\f} \subseteq \sem{\ff}$ and $= \emptyset$ otherwise. 
For $\Box$, note that $\mathcal{F}$ is closed under $\Int$. Specifically, $\Int(\{ 1 \}) = \emptyset$ and $\Int(\{ 1, 2, 4 \}) = \{ 1, 2 \}$. The final case of the induction step corresponding to $\A$ is trivial. 
Note, however, that $\Int (\sem{p}^{c}) = \{ 3, 4 \}$, but $\{ 3, 4 \} \notin \mathcal{F}$. 

2. We present a pair of RM up-models $\bm{M}_1$ and $\bm{M}_2$ such that: (i) $\sem{\f}_{1} = \sem{\f}_{2}$ for all $\f \in \LanBA$, and (ii) there is a dense open subset of $\llbracket p\rrbracket_1$ while there is no dense open subset of $\llbracket p\rrbracket_2$. The two models are shown in Figure \ref{fig:no-B}, with $\bm{M}_1$ on the left and $\bm{M}_2$ on the right. The models are defined in the same way as the model in Figure \ref{fig:no-intc}; they differ from it only in the valuation of $p$ ($\val_1(p) = \val_2(p) = \{ 1, 2 \}$) and their respective topologies. In $\bm{M}_1$, we have $\topo_1 = \{ \emptyset, \{ 1, 2 \}, X \}$ and in $\bm{M}_2$ we have $\topo_2 = \{ \emptyset, \{ 1, 2 \}, \{ 3 \}, \{ 1, 2, 3 \}, X \}$. 
Let $\mathcal{F} = \{ \emptyset, \{ 1, 2 \}, X \}$. Similarly as before, we can prove that $\sem{\f}_{i} \in \mathcal{F}$ for all $\f \in \LanBA$ and $i \in \{ 1, 2 \}$. (Note that $\mathcal{F}$ is closed under both $\Int_1$ and $\Int_2$.) Using this fact, we can easily show that $\sem{\f}_1 = \sem{\f}_2$ for all $\f \in \LanBA$. 
Note that $\{ 1, 2 \}$ is dense in $\topo_1$. Therefore, in $\langle X, \topo_1\rangle$, there is a dense open subset of $\sem{p}_1$. However, $\{ 1, 2 \}$ is not dense in $\topo_2$ and so there is no dense open subset of $\sem{p}_2$ in $\langle X, \topo_2\rangle$. 
Hence, there is no $\f \in \LanBA$ that can express the existence of a dense open subset $\sem{p}$.
\end{proof}

\begin{figure}[t]\centering
\begin{tikzpicture}[
	scale=0.7,
    %every node/.style={font=\sffamily},
    element/.style={circle, draw=black, thick, fill=white, minimum size=8mm},
    valuation/.style={circle, draw=black, very thick, fill=gray!40, minimum size=8mm},
    open/.style={ellipse, draw=black!70, loosely dashed, thick, fill=gray!15, minimum width=4.2cm, minimum height=1.8cm},
    involution/.style={<->, thick, draw=black!80, >=Stealth}
  ]
  
  % Background open sets (ellipses) drawn first so they sit behind the nodes
  \node[open] (U1) at (1.5, 1.5) {};
  %\node[anchor=north, text=black!80] at (4.5, 1.5) {\text{Open set }$\{1, 2\}$};
  
  \node[open] (U2) at (1.5, -1.5) {};
  %\node[anchor=south, text=black!80] at (4.5, -1.5) {\text{Open set }$\{3, 4\}$};

  % Domain elements
  % Node 1 is shaded darker with a thicker border to indicate the valuation
  \node[valuation] (n1) at (0, 1.5) {1}; %, label={[font=\bfseries, text=black!80]above right:$\sem{p}$}
  \node[element] (n2) at (3, 1.5) {2};
  \node[element] (n3) at (0, -1.5) {3};
  \node[element] (n4) at (3, -1.5) {4};
  
  % Routley star involution mappings (crossing the topological boundaries)
  \draw[involution] (n1) -- (n3) node[midway, left=2mm] {$*$};
  \draw[involution] (n2) -- (n4) node[midway, right=2mm] {$*$};

  % Overall space label
  %\node[font=\bfseries\large] at (1.5, -3.5) {Domain $X = \{1, 2, 3, 4\}$};

\end{tikzpicture}\caption{A model showing that interior of complement is not expressible in $\LanBA$.}\label{fig:no-intc}
\hrulefill
\end{figure}

\begin{figure}[t]\centering
\begin{tikzpicture}[
	scale=0.7,
    %every node/.style={font=\sffamily},
    element/.style={circle, draw=black, thick, fill=white, minimum size=8mm},
    valuation/.style={circle, draw=black, very thick, fill=gray!40, minimum size=8mm},
    open/.style={ellipse, draw=black!70, loosely dashed, thick, fill=gray!15, minimum width=3.5cm, minimum height=1.7cm},
    openSmall/.style={circle, draw=black!70, loosely dashed, thick, fill=gray!15, minimum size=1.4cm},
    involution/.style={<->, thick, draw=black!80, >=Stealth}
  ]

  % ================= MODEL 1 =================
  \begin{scope}[shift={(0,0)}]
    % Titles and Labels
    %\node[font=\bfseries\large] at (0, 3.5) {Model 1 ($\tau_1$)};
    %\node[font=\small, text=black!80] at (0, 2.9) {Target Property: \textbf{True} (Dense)};

    % Open sets (Background Layer)
    \node[open] (U112) at (0, 1) {};
    %\node[anchor=south, text=black!70, font=\small] at (0, 2.2) {Open set $\{1, 2\}$};

    % Domain Elements
    \node[valuation] (n11) at (-1.2, 1) {1};
    %, label={[font=\bfseries, text=black!80]left:$[p]$}
    \node[valuation] (n12) at (1.2, 1) {2};
    %, label={[font=\bfseries, text=black!80]left:$[p]$}
    \node[element] (n13) at (-1.2, -1.5) {3};
    \node[element] (n14) at (1.2, -1.5) {4};

    % Routley Star Involutions
    \draw[involution] (n11) -- (n13) node[midway, left=1mm] {$*$};
    \draw[involution] (n12) -- (n14) node[midway, right=1mm] {$*$};
  \end{scope}

  % ================= SEPARATOR =================
  \draw[thick, loosely dashed, draw=black!30] (3.5, -2.5) -- (3.5, 4);

  % ================= MODEL 2 =================
  \begin{scope}[shift={(7,0)}]
    % Titles and Labels
    %\node[font=\bfseries\large] at (0, 3.5) {Model 2 ($\tau_2$)};
    %\node[font=\small, text=black!80] at (0, 2.9) {Target Property: \textbf{False} (Not Dense)};

    % Open sets (Background Layer)
    \node[open] (U212) at (0, 1) {};
    %\node[anchor=south, text=black!70, font=\small] at (0, 2.2) {Open set $\{1, 2\}$};
    
    \node[openSmall] (U23) at (-1.2, -1.5) {};
    %\node[anchor=north, text=black!70, font=\small] at (-1.2, -2.4) {Open set $\{3\}$};

    % Domain Elements
    \node[valuation] (n21) at (-1.2, 1) {1};
    % , label={[font=\bfseries, text=black!80]left:$[p]$}
    \node[valuation] (n22) at (1.2, 1) {2};
    % , label={[font=\bfseries, text=black!80]right:$[p]$}
    \node[element] (n23) at (-1.2, -1.5) {3};
    \node[element] (n24) at (1.2, -1.5) {4};

    % Routley Star Involutions
    \draw[involution] (n21) -- (n23) node[midway, left=1mm] {$*$};
    \draw[involution] (n22) -- (n24) node[midway, right=1mm] {$*$};
  \end{scope}

\end{tikzpicture}\caption{A pair of models showing that support by a dense open set is not expressible in $\LanBA$.}\label{fig:no-B}
\hrulefill
\end{figure}

\section{Relevant TEL}\label{sec:ReTEL}

The previous section shows that $\LanBA$ needs to be extended in order to express the salient TEL concepts in the context of relevant logic. In this section, we introduce one such extension. It is perhaps the most direct one -- we add a modal operator $\Boxc$, interpreted as \emph{interior-of-complement}.

\begin{definition}
The \emph{language $\LanBAc$} is defined using the following grammar:
\[ \f \coloneq p \mid \neg \f \mid \f \land \f \mid \f \lor \f \mid \f \to \f \mid \Box \f \mid \A \f \mid \Boxc \f\]
where $p \in \Prop$. 
We define $\Diamondc \f \coloneq \neg \Boxc \f$ and $\E\f \coloneq \neg \A \neg \f$. We define $\bot \coloneq \A p \land \neg \A p$ and $\top \coloneq \A p \lor \neg \A p$ for some fixed $p \in \Prop$. 
\end{definition}

\begin{definition}
In RM up-models, the satisfaction relation $\models$ is defined for $\LanBAc$ as before, with the following clause added:
\begin{itemize}
\item $\bm{M}, x \models \Boxc \f$ iff $x \in \Int(\sem{\f}^{c})$.
\end{itemize}
Let $\mathsf{BS4}_{\A, c}$ be the logic of all RM up-models, defined as the set of all $\LanBAc$-formulas valid in all RM up-models. 
\end{definition}

Note that $\bm{M}, x \models \Diamond^{c} \f$ iff $x^{*} \in \Cl(\semM{\f})$. 
Moreover, $\bm{M}, x \models \E\f$ iff $\semM{\f} \neq\emptyset$ (for the right-to-left implication, we need to assume that every $y = z^{*}$ for some $z$), $\bm{M}, x \not\models \bot$ and $\bm{M}, x \models \top$ for all $x \in X$.

With $\Boxc$ at hand we can express density and coherent justification.

\begin{proposition}\label{prop:DBKrelevant}
The following hold for all pointed up-models $\bm{M}, x$:
\begin{enumerate}
\item $\bm{M}, x \models \A\Diamondc \f$ iff $\semM{\f}$ is a dense set;
\item $\bm{M}, x \models \A\Diamondc\Box \f$ iff $U \subseteq \semM{\f}$ for a dense open set $U$;
\item $\bm{M}, x \models \Box \f \land \A\Diamondc\Box \f$ iff $U \subseteq \semM{\f}$ for a dense open set $U$ such that $x \in U$.
\end{enumerate}
\end{proposition}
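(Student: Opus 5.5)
The plan mirrors the proof of Proposition \ref{prop:DBKintuit}, with Lemma \ref{lem:IntcIntuit} replaced by the semantic clause for $\Boxc$, which gives the interior of the complement directly. Two preliminary facts do the work. First, by the clauses for $\neg$ and $\Boxc$, $x \models \Diamondc \f$ iff $x^{*} \notin \Int(\semM{\f}^{c})$, i.e.\ iff $x^{*} \in \Cl(\semM{\f})$. Second, by (RM5) the map $^{*}$ is an involution, hence a bijection on $X$.

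For claim 1, $\bm{M}, x \models \A\Diamondc\f$ holds iff $y^{*} \in \Cl(\semM{\f})$ for all $y \in X$. Since $^{*}$ is surjective ($z = (z^{*})^{*}$), this is equivalent to $z \in \Cl(\semM{\f})$ for all $z \in X$, i.e.\ $\Cl(\semM{\f}) = X$, which says $\semM{\f}$ is dense. I expect this to be the only step needing care. Without (RM5) the star need not be onto, and the quantification over all $y^{*}$ would not cover all of $X$. The remark after the definition of $\Boxc$ flags exactly this issue for $\E$.

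For claim 2, apply claim 1 to $\Box\f$, noting $\semM{\Box\f} = \Int(\semM{\f})$. Then $\A\Diamondc\Box\f$ holds iff $\Int(\semM{\f})$ is dense. If it is dense, it is itself a dense open subset of $\semM{\f}$. Conversely, a dense open $U \subseteq \semM{\f}$ satisfies $U \subseteq \Int(\semM{\f})$, and density is preserved under supersets.

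For claim 3, combine claim 2 with the clause for $\Box$. From left to right, take an open $U$ with $x \in U \subseteq \semM{\f}$ and, by claim 2, a dense open $V \subseteq \semM{\f}$. Then $U \cup V$ is open, dense (as a superset of $V$), contains $x$, and is included in $\semM{\f}$. From right to left, a dense open $U \ni x$ with $U \subseteq \semM{\f}$ witnesses $x \models \Box\f$ directly, and witnesses $\A\Diamondc\Box\f$ by claim 2.
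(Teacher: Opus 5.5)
Your proposal is correct and follows essentially the same route as the paper: you rewrite $\Diamondc\f$ as $x^{*}\in\Cl(\semM{\f})$ and use surjectivity of $^{*}$ from (RM5) for claim 1. Claims 2 and 3 then follow from claim 1 exactly as in the proof of Proposition~\ref{prop:DBKintuit}, including the union $U\cup V$ argument.
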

\begin{proof}
1. $\sem{\Diamondc \f} = X$ iff $\forall x \colon x^{*} \in \Cl(\sem{\f})$ iff $\forall y \colon y \in \Cl(\sem{\f})$ (since $\forall y \exists x \colon y = x^{*}$ by RM5) iff $\sem{\f}$ is dense. Claims 2.\ and 3.\ follow from Claim 1.\ exactly as in the proof of Proposition \ref{prop:DBKintuit}.
\end{proof}

This shows that $\mathsf{BS4}_{\A, c}$ is a version of relevant TEL that is able to express support by truthful evidence as $\Box \f$ and coherent justification by truthful evidence as $\Box \f \land \A \Diamondc \Box \f$.

We conclude this section with the following observation. Note that $\Boxc$ resembles intuitionistic negation under its topological interpretation, i.e.\ $\sem{\neg\f} = \Int (\sem{\f}^{c})$. One might wonder whether our language also allows us to express intuitionistic implication. The following proposition shows that it does not.

\begin{proposition}
For any given $p, q \in \Prop$, there is no $\f(p, q) \in \LanBAc$ such that, for all $\bm{M}$ we have $\semM{\f(p,q)} = \Int (\semM{p}^{c} \cup \semM{q})$. 
\end{proposition}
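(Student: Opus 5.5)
The plan is to refute expressibility with a single finite counter-model, in the same style as the proof of Proposition \ref{prop:inexpressible}. We exhibit an RM up-model $\bm{M}$ and a family $\mathcal{F}$ of subsets of $X$ with three properties: $\sem{\f}\in\mathcal{F}$ for every $\f\in\LanBAc$, and in particular for every $\f(p,q)$; $\mathcal{F}$ is closed under all operations interpreting the connectives; and $\Int(\sem{p}^{c}\cup\sem{q})\notin\mathcal{F}$.

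Take $X=N=\{1,2,3,4\}$, let $\leq$ be the identity relation and let $Rxyz$ iff $y=z$. Conditions (RM1)--(RM5) are then immediate, exactly as for Figure \ref{fig:no-intc}. As before, this choice gives $\sem{\f\to\ff}=X$ if $\sem{\f}\subseteq\sem{\ff}$ and $\sem{\f\to\ff}=\emptyset$ otherwise. Let the Routley star swap $1\leftrightarrow 2$ and $3\leftrightarrow 4$. Take the topology $\topo=\{\emptyset,\{1,2\},X\}$; it is trivially an up-set topology because $\leq$ is the identity. Set $\val(p)=\{1,3\}$, $\val(q)=\{1\}$, and let every other variable be empty. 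Then $\sem{p}^{c}\cup\sem{q}=\{1,2,4\}$, so
\[\Int(\sem{p}^{c}\cup\sem{q})=\{1,2\}.\]
Note that $\Int(\sem{p}^{c})=\emptyset$ and $\Int(\sem{q})=\emptyset$, so the obvious decompositions cannot produce $\{1,2\}$.

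Let $\mathcal{F}=\{\emptyset,\{1\},\{1,3\},\{1,3,4\},X\}$. We prove by structural induction that $\sem{\f}\in\mathcal{F}$ for all $\f\in\LanBAc$.
\begin{itemize}
\item Variables: each variable denotes $\{1,3\}$, $\{1\}$ or $\emptyset$, all in $\mathcal{F}$.
\item $\land$ and $\lor$: $\mathcal{F}$ is a chain under inclusion, so it is closed under intersection and union.
\item $\to$ and $\A$: these only ever produce $\emptyset$ or $X$.
\item $\neg$: the operation $P\mapsto\{x\mid x^{*}\notin P\}$ maps $\emptyset\mapsto X$, $\{1\}\mapsto\{1,3,4\}$, $\{1,3\}\mapsto\{1,3\}$, $\{1,3,4\}\mapsto\{1\}$ and $X\mapsto\emptyset$.
\item $\Box$: the only nonempty open sets are $\{1,2\}$ and $X$, and neither is contained in $\{1\}$, $\{1,3\}$ or $\{1,3,4\}$ (each omits $2$). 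Hence $\Int$ sends these three sets to $\emptyset$, and it fixes $\emptyset$ and $X$.
\item $\Boxc$: the complements of the members of $\mathcal{F}$ are $X$, $\{2,3,4\}$, $\{2,4\}$, $\{2\}$ and $\emptyset$. No nonempty open set lies inside $\{2,3,4\}$, $\{2,4\}$ or $\{2\}$ (each omits $1$), so $\Boxc$ yields only $X$ (from $\emptyset$) or $\emptyset$.
\end{itemize}
Thus $\mathcal{F}$ is closed under every operation and the induction goes through.

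Since $\{1,2\}\notin\mathcal{F}$, no $\f(p,q)$ satisfies $\semM{\f(p,q)}=\Int(\semM{p}^{c}\cup\semM{q})$ in this model, which proves the proposition. The main obstacle is choosing the star. A naive choice such as $1\leftrightarrow3$, $2\leftrightarrow4$ fails, because then $\sem{\neg q}=\{1,2,4\}$ equals $\sem{p}^{c}\cup\sem{q}$, and $\Box\neg q$ defines the target. The pairing $1\leftrightarrow 2$, $3\leftrightarrow 4$ is chosen so that negation keeps $\mathcal{F}$ a chain avoiding the point $2$, while every nonempty open set contains $2$. The remaining verification is the routine table check above.
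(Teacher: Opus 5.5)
Your proof is correct and is essentially the paper's argument: it uses the same frame ($X=N=\{1,2,3,4\}$, identity order, $Rxyz$ iff $y=z$, star swapping $1\leftrightarrow 2$ and $3\leftrightarrow 4$) and a family of sets, closed under all the connectives, that misses the target. The paper differs only in its topology and valuation: it takes the topology generated by $\{1,2\},\{2,3\},\{2,4\}$, the open valuations $\sem{p}=\{2,3\}$ and $\sem{q}=\{2,4\}$, and the target $\{1,2,4\}$, which shows non-expressibility even when atoms denote open sets, whereas your coarser topology and non-open valuation give a simpler invariant family that is a chain.
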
  
\begin{proof}
We fix $p,q \in \Prop$ and we construct an RM up-model $\bm{M}$ such that $\semM{\f} \neq \Int (\semM{p}^{c} \cup \semM{q})$ for all $\f \in \LanBAc$. 
The countermodel is shown in Figure \ref{fig:no-intuitImp}. As before, $\leq$ is the identity relation, $N = X$, the $*$ operation is indicated by the arrows, and $Rxyz$ iff $y = z$. 
The topology $\topo$ is generated by the sub-basis $\{ B = \{ 1,2 \}, P = \{ 2, 3 \}, Q = \{ 2, 4 \} \}$. We set $\val (p) = P$ and $\val (q) = Q$. 
Observe that $\Int (P^{c} \cup Q) = \Int (\{ 1, 2, 4 \}) = \{ 1, 2, 4 \}$. However, we can show by structural induction that $\sem{\f} \in \{ \emptyset, X, \{ 2 \}, \{ 2, 3\}, \{ 2, 4 \}, \{ 2, 3, 4 \} \}$ for all $\f \in \LanBAc$.
\end{proof}

\begin{figure}\centering

\begin{tikzpicture}[
	scale=0.7,
    element/.style={circle, draw=black, thick, fill=white, minimum size=8mm},
    % Base style for open sets with transparency
    openSetBase/.style={draw=black!70, thick, loosely dashed, rounded corners=15pt, opacity=0.6},
    involution/.style={<->, thick, draw=black!80, >=Stealth}
  ]

  % --- COORDINATES (2x2 Grid) ---
  \coordinate (c1) at (0, 2); % Node 1
  \coordinate (c2) at (4, 2); % Node 2
  \coordinate (c3) at (0, 0); % Node 3
  \coordinate (c4) at (4, 0); % Node 4

  % --- OPEN SETS (Background Layer) ---
  
  % 1. The "Bridge" Set {1, 2} (Darkest Gray)
  % Rectangle around the top row
  \draw[openSetBase, fill=gray!40] 
    ($(c1)+(-1.2, 1.2)$) rectangle ($(c2)+(1.2, -1.2)$);

  % 2. Open Set p = {2, 3} (Medium Gray)
  % Diagonal shape connecting top-right and bottom-left
  \draw[openSetBase, fill=gray!25] 
    ($(c2)+(0.1, 1)$) -- ($(c3)+(-1.3, 0.2)$) -- ($(c3)+(-0.2, -1.4)$) -- 
    ($(c2)+(1.1, -0.3)$) -- cycle;
  % Label placed near node 3 for clarity
  \node[anchor=east] at (-1.6, 0) {$p$};

  % 3. Open Set q = {2, 4} (Lightest Gray)
  % Rectangle around the right column
  \draw[openSetBase, fill=gray!10] 
    ($(c2)+(-1, 1)$) rectangle ($(c4)+(1, -1)$);
  % Label placed near node 4 for clarity
  \node[anchor=west] at (5.1, 0) {$q$};

  % --- ELEMENTS (Foreground Layer) ---
  % Re-drawing nodes to sit on top of the gray fills
  \node[element] (n1) at (c1) {1};
  \node[element] (n2) at (c2) {2};
  \node[element] (n3) at (c3) {3};
  \node[element] (n4) at (c4) {4};

  % --- INVOLUTIONS ---
  \draw[involution] (n1) -- (n2) node[midway, above] {$*$};
  \draw[involution] (n3) -- (n4) node[midway, above] {$*$};

\end{tikzpicture}\caption{A model showing that intuitionistic implication is not expressible in $\LanBAc$.}\label{fig:no-intuitImp}
\hrulefill
\end{figure}

%In particular, the interpretation of $\Boxc\f$, the interior of the complement of $\sem{\f}$, resembles the interpretation of $\neg \f$ in the topological semantics of intuitionistic logic. This leads to the following observation.
%
%Let $\Lan$ be the propositional fragment of $\LanBAc$ and let $\LanBc$ be the fragment of $\LanBAc$ without $\A$. We define the following translation function $t : \Lan \to \LanBc$, reminiscent of the Gödel-McKinsey-Tarski translation:
%
%\begin{center}
%$t(p) = \Box p$ \qquad
%$t(\neg \f) = \Boxc t(\f)$ \qquad
%$t(\f \to \ff) = \Box (t(\neg \f) \lor t(\ff))$\\[2mm]
%$t (\f \land \ff) = t(\f) \land t(\ff)$\qquad
%$t (\f \lor \ff) = t(\f) \lor t(\ff)$
%\end{center} 
%
%We can show that, for all $\f \in \Lan \colon \f \in \mathsf{Int} \text{ iff } t(\f) \in \mathsf{BS4}_{c}$. By topological completeness of $\mathsf{Int}$, if $\f \notin \mathsf{Int}$, then there is an up-model $\bm{N} = \langle X, \leq, \topo, \val\rangle$ where $\leq$ can be assumed to be the identity relation and where $\val(p)$ is an open set  for each $p \in \Prop$. Let $[\f]_{\bm{N}}$ be the proposition expressed by $\f$ in this model, evaluated as usual in topological models for intuitionistic propositional logic. This model can be extended to a RM up-model $\bm{M}$ by adding $N, R$ and $\,^{*}$ in an arbitrary way. We can prove that $[\f]_{\bm{N}} = \semM{t(\f)}$ for all $\f \in \Lan$.  

\section{Axiomatisation of Relevant TEL}\label{sec:ReTELax}

In this section we provide a sound and complete axiomatisation of $\sfAc{BS4}$. This solves a problem left open in \cite{StandeferFrench2025}, namely, the problem of axiomatising $\mathsf{B}_{\A}$, the basic relevant logic $\mathsf{B}$ with $\A$. The axiomatisation of $\mathsf{B}_{\A}$ is obtained from our axiomatisation of $\sfAc{BS4}$ by omitting all axioms and rules that feature the modal operators $\Box$ and $\Boxc$.

Recall that $\mathsf{B}$ is the logic of all RM frames in the language $\Lan$, the purely propositional fragment of $\LanBAc$. A standard Hilbert-style system for $\mathsf{B}$, denoted as $\mathit{B}$, is shown in Figure \ref{fig:HilbertB}.

\begin{figure}[b]
\hrulefill\vspace*{-4mm}
\begin{center}
\begin{minipage}{0.48\linewidth}
%\vspace{0pt}
	\begin{gather}
	\f \to \f\label{B:Id}\\
	(\f \land \ff) \to \f, \quad (\f \land \ff) \to \ff\label{B:ConElim}\\
	\f \to (\f \lor \ff), \quad \ff \to (\f \lor \ff)\label{B:DisIntro}\\
	((\f \to \ff) \land (\f \to \chi)) \to (\f \to (\ff \land \chi))\label{B:ConIntro}\\
	((\f \to \chi) \land (\ff \to \chi)) \to ((\f \lor \ff) \to \chi)\label{B:DisElim}\\
	(\f \land (\ff \lor \chi)) \to ((\f \land \ff) \lor (\f \land \chi))\label{B:Distrib}
	\end{gather}
\end{minipage}	
\begin{minipage}{0.48\linewidth}
%\vspace{0pt}
	\begin{gather}
	\neg\neg \f \to \f\label{B:DNE}\\
	\f, \, \f \to \ff \implies \ff\label{B:MP}\\
	\f, \, \ff \implies \f \land \ff\label{B:Adj}\\
	\f \to \neg \ff \implies \ff \to \neg \f\label{B:Contra}\\
	\f \to \ff \implies (\chi \to \f) \to (\chi \to \ff)\label{B:Pref}\\
	\f \to \ff \implies (\ff \to \chi) \to (\f \to \chi)\label{B:Suff}
	\end{gather}
\end{minipage}		
\end{center}\caption{A Hilbert-style system $B$ for the basic relevant logic $\mathsf{B}$.}\label{fig:HilbertB}
\end{figure}

We note that the De Morgan laws $\neg (\f \land \ff) \tot (\neg \f \lor \neg \ff)$ and $\neg (\f \lor \ff) \tot (\neg \f \land \neg \ff)$ are provable in $B$, as is the double negation law $\neg\neg \f \tot \f$. The transitivity rule is derivable in $B$: if $B \vdash \f \to \ff$ and $B \vdash \ff \to \chi$, then $B \vdash \f \to \chi$. The contraposition rule is also derivable: if $B \vdash \f \to \ff$, then $B \vdash \neg \ff \to \neg \f$. 

\begin{definition}
Let $\itAc{BS4}$ be the Hilbert-style system that extends $B$ with the axioms and rules shown in Figure \ref{fig:HilbertBS4Ac}. Provability in $\itAc{BS4}$ is defined in the standard way. We write $\vdash \f$ if $\f$ is provable in $\itAc{BS4}$.
\end{definition}

\begin{figure}
\begin{center}
\begin{minipage}{0.48\linewidth}
%\vspace{0pt}
	\begin{gather}
(\heartsuit \f \land \heartsuit \ff) \to \heartsuit (\f \land \ff)\label{a:R}\\
\heartsuit \f \to \f\label{a:T}\\
\heartsuit \f \to \heartsuit \heartsuit \ff\label{a:4}\\
\f \to \ff \implies \heartsuit \f \to \heartsuit \ff\label{a:RuleM}\\
\neg\A \f \to \A \neg\A \f\label{a:A5}\\
\A (\f \to \ff) \to (\A \f \to \A \ff)\label{a:AK}\\
\Boxc\f \to \Box\Boxc\f\label{a:Boxc4}\\
\Boxc \f \land \f \to \bot\label{a:BoxcECQ}\\
\Box \ff \land \f \to \bot \implies \Box \ff \to \Boxc \f\label{a:BoxcRule}
	\end{gather}
\end{minipage}
\begin{minipage}{0.48\linewidth}
	\begin{gather}
(\A \f \land \neg \A\f) \to \ff \label{a:Aecq}\\
\f \to (\A \ff \lor \neg \A \ff)\label{a:Alem}\\
\A\f \to (\neg\A\f \to \ff)\label{a:AecqImp}\\
\neg\A\f \to (\A\f \to \ff)\label{a:AecqImpNeg}\\
%  \A\f \lor (\A\f \to \ff)\label{a:AlemOr}\\
\A\f \to (\ff \to \A\f)\label{a:Aweak}\\
\neg \A\f \to (\ff \to \neg\A\f)\label{a:AweakNeg}\\
\A\f \to \Box \A\f \label{a:ABoxA}\\
\neg\A\f \to \Box \neg\A\f \label{a:ABoxNegA}
	\end{gather}
\end{minipage}
\end{center}\caption{The extra axioms and rules of $\itAc{BS4}$, to be added to $\mathit{B}$, where $\heartsuit \in \{ \Box, \A \}$.}\label{fig:HilbertBS4Ac}
\hrulefill
\end{figure}

Axioms (\ref{a:R}--\ref{a:AK}) for $\A$ say that $\A$ is a relevant $\mathsf{S5}$-style modality; see \cite{Standefer2022}. 
(\ref{a:Boxc4}--\ref{a:BoxcRule}) say that $\Boxc \f$ expresses the largest open proposition disjoint from the proposition expressed by $\f$. 
Axioms (\ref{a:Aecq}--\ref{a:ABoxNegA}) impose \emph{uniformity} of formulas with the global modality under $*$ and $R$ and within open sets; note that, in the words of Standefer and French \cite{StandeferFrench2025}, a certain amount of \emph{classicality} is thereby imposed. These axioms are crucial for showing that the anchored canonical model is an RM up-model, as well as for proving the Truth Lemma for the model. 
We have not attempted to remove any redundancies from the axiom system.

As a first step towards our main result, we show that $\itAc{BS4}$ is sound with respect to the class of all RM up-spaces.

\begin{lemma}\label{lem:BS4AcSoundness}
If $\f$ is provable in $\itAc{BS4}$, then it is valid in all RM up-spaces.
\end{lemma}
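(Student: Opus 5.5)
We argue by induction on the length of derivations in $\itAc{BS4}$, showing that every axiom is valid in every RM up-model (i.e.\ holds at all points of $N$) and that every rule preserves validity in a fixed model. Lemma~\ref{lem:SemanticDeductionThm} reduces most of this to set inclusions: an implicational axiom $\f \to \ff$ is valid iff $\semM{\f} \subseteq \semM{\ff}$. This requires that all truth sets are up-sets. We extend the argument in Lemma~\ref{lem:SemanticDeductionThm} to $\Box$, $\Boxc$ and $\A$: interiors are open and hence up-sets, and $\semM{\A\f}$ is either $\emptyset$ or $X$.

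\textbf{The base system $B$ and the rules.} The axioms and rules of $B$ are handled by the standard soundness argument for $\mathsf{B}$ over Routley--Meyer frames, using (RM1)--(RM5). The rules (\ref{B:Contra}), (\ref{B:Pref}), (\ref{B:Suff}) become statements about inclusions via Lemma~\ref{lem:SemanticDeductionThm}, with (RM4)/(RM5) used for contraposition.

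\textbf{The modal axioms.} For $\heartsuit = \Box$, axioms (\ref{a:R}), (\ref{a:T}), (\ref{a:4}) and the rule (\ref{a:RuleM}) are the Kuratowski properties of $\Int$: it preserves finite intersections, is deflationary, idempotent and monotone. For $\heartsuit = \A$ these are immediate because $\semM{\A\f} \in \{\emptyset, X\}$.

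The key observation for the remaining $\A$-axioms is that $\semM{\neg\A\f} = X \setminus \semM{\A\f}$, since $x \models \neg\A\f$ iff $x^* \not\models \A\f$ iff not all points satisfy $\f$. So both $\A\f$ and $\neg\A\f$ are evaluated uniformly and complementarily. This makes the following immediate:
\begin{itemize}
\item (\ref{a:A5}) holds because $\semM{\neg\A\f}$ is $\emptyset$ or $X$.
\item (\ref{a:Aecq}) holds because $\semM{\A\f \land \neg\A\f} = \emptyset$.
\item (\ref{a:Alem}) holds because $\semM{\A\ff \lor \neg\A\ff} = X$.
\end{itemize}

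\textbf{The embedded-implication axioms.} Axioms (\ref{a:AecqImp})--(\ref{a:AweakNeg}) need an actual check of the ternary clause. Take (\ref{a:Aweak}). If $x \models \A\f$, then $\semM{\A\f} = X$, so for any $Rxyz$ we get $z \models \A\f$; hence $x \models \ff \to \A\f$. The others are analogous; for (\ref{a:AecqImp}) the antecedent $\A\f$ makes $\semM{\neg\A\f} = \emptyset$, so the implication is vacuous. For (\ref{a:AK}), if $\semM{\f\to\ff} = X$ and $\semM{\f} = X$, we use (RM2): for each $y$ pick $x \in N$ with $Rxyy$, and conclude $y \models \ff$. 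Axioms (\ref{a:ABoxA}) and (\ref{a:ABoxNegA}) hold because $\emptyset$ and $X$ are open, so $\Int$ fixes them.

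\textbf{The $\Boxc$-axioms.} For (\ref{a:Boxc4}), $\semM{\Boxc\f}$ is open, so it equals its own interior. For (\ref{a:BoxcECQ}), $\Int(\semM{\f}^c) \cap \semM{\f} = \emptyset = \semM{\bot}$. For the rule (\ref{a:BoxcRule}), validity of the premise gives $\semM{\Box\ff} \cap \semM{\f} \subseteq \semM{\bot} = \emptyset$. So the open set $\semM{\Box\ff}$ lies inside $\semM{\f}^c$, and therefore inside its interior $\semM{\Boxc\f}$.

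\textbf{Main obstacle.} The only delicate points are the uniformity axioms with nested $\to$, together with (\ref{a:AK}) and the $B$-axioms, where validity really means truth at $N$-points and the ternary clause must be unpacked. I expect (\ref{a:AK}) to need the most care, with (RM2) as the tool to use. Everything else reduces to set inclusions as described.
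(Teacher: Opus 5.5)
Your proposal is correct and matches the paper's proof. Both argue by induction on derivations. Both reduce the $\A$-axioms to the fact that $\sem{\A\f}$ and $\sem{\neg\A\f}$ are complementary members of $\{\emptyset, X\}$, handle \eqref{a:AK} via (RM2), and treat the $\Boxc$-axioms and rule via $\sem{\Boxc\f}$ being the largest open set disjoint from $\sem{\f}$.

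You also make explicit a point the paper leaves tacit. Lemma~\ref{lem:SemanticDeductionThm} needs the truth sets of $\Box$-, $\Boxc$- and $\A$-formulas to be up-sets. This holds because open sets are up-sets, and $\emptyset$ and $X$ trivially are.
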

\begin{proof}
Standard induction on the length of proofs. We show some cases.

\eqref{a:A5} Suppose $x \models \neg \A \f$; we have to prove that $x \models \A \neg \A \f$. That is, $y^{*} \not\models \A \f$ for an arbitrary $y$. But since $x^{*} \not\models \A \f$, there is $z \not\models \f$, which implies that $z' \not\models \A \f$ for all $z'$. 
\eqref{a:AK} Suppose $x \models \A (\f \to \ff)$. To prove that $x \models \A \f \to \A \ff$, assume that $Rxyz$ and $y \models \A \f$. Pick an arbitrary $u$. We have to prove $u \models \f$. By (RM2), there is $v \in N$ such that $Rvuu$. By the assumptions, $v \models \f \to \ff$ and $u \models \f$. Hence, $u \models \ff$, as desired. 
(\ref{a:Boxc4}--\ref{a:BoxcECQ}) hold since $\sem{\Boxc \f} = \Int (\sem{\f}^{c})$ is an open set $U$ such that $U \cap \sem{\f} = \emptyset$. \eqref{a:BoxcRule} holds since $\sem{\Boxc\f}$ is the largest such set.  
(\ref{a:Aecq}--\ref{a:Alem}) hold since $x \models \neg \A \f$ iff $x^{*} \not\models \A\f$ iff $\exists y \colon y \not\models \f$. Hence, $\sem{\neg\A\f} = \sem{\A\f}^{c}$. 
(\ref{a:AecqImp}--\ref{a:AweakNeg}) all hold because $\sem{\A\f}, \sem{\neg\A\f} \in \{ \emptyset, X \}$ and $(\emptyset \to P) = (Q \to X)$ for all $P,Q \subseteq X$ (where $Q \to P = \{ x \mid \forall y,z \colon \text{if } y \in Q \text{, then } z \in P \}$).  
(\ref{a:ABoxA}--\ref{a:ABoxNegA}) hold for a similar reason: since $\sem{\A\f}, \sem{\neg\A\f} \in \{ \emptyset, X \}$, they are open sets. 
\end{proof}

Comparing $\itAc{BS4}$ and $\itA{iS4}$, we note that the K-axiom $\Box (\phi \to \psi) \to (\Box \phi \to \Box\psi)$ is not valid and is thus replaced by the conjunctive regularity axiom $(\Box \phi \land \Box\psi) \to \Box (\phi \land \psi)$ and the monotonicity rule $\dfrac{\phi \to \psi}{\Box \phi \to \Box \psi}$. Because of a different definition of validity, we do not have the necessitation rule for $\Box$ or $\A$. Ono's axiom $\A \phi \lor \A \neg \A \phi$ is not sufficient for deriving $\neg \A \phi \to \A \neg \A \phi$, which is used in our completeness proof. Therefore, we replace it with $\neg \A \phi \to \A \neg \A \phi$ directly.

\begin{definition}
A \emph{$\itAc{BS4}$-theory} is $\Gamma \subseteq \LanBAc$ such that (a) if $\vdash \f \to \ff$ and $\f \in \Gamma$, then $\ff \in \Gamma$ and (b) if $\f \in \Gamma$ and $\ff \in \Gamma$, then $\f \land \ff \in \Gamma$. A $\itAc{BS4}$-theory $\Gamma$ is 
\emph{prime} iff $\f \lor \ff \in \Gamma$ only if $\f \in \Gamma$ or $\ff \in \Gamma$; 
\emph{regular} iff $\vdash \f$ only if $\f \in \Gamma$; and 
\emph{non-trivial} if $\Gamma \neq \LanBAc$.
\end{definition}

We denote as $|\phi|$ the set of all non-empty non-trivial prime $\itAc{BS4}$-theories that contain $\phi$.

 \begin{lemma}\label{lem:CanSpaceRM}
 The collection of sets of the form $| \Box \phi |$ for $\phi \in \LanBAc$ is a basis for an up-set topology on the set of all non-empty and non-trivial prime $\itAc{BS4}$-theories, ordered by set inclusion. 
 \end{lemma}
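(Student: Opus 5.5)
The plan is to verify the two basis conditions for a topology, together with the up-set property, mirroring Lemma \ref{lem:CanSpaceInt}. The only change is that necessitation is no longer available in $\itAc{BS4}$. Let $X$ be the set of non-empty, non-trivial prime $\itAc{BS4}$-theories.

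\emph{Covering.} I must show that every $\Gamma \in X$ lies in some $|\Box\phi|$. Since $\Gamma$ is non-empty, pick some $\chi \in \Gamma$. Axiom \eqref{a:Alem} gives $\vdash \chi \to (\A p \lor \neg \A p)$, so $\top \in \Gamma$. Axiom \eqref{a:ABoxA} gives $\vdash \A p \to \Box\A p$, and axiom \eqref{a:ABoxNegA} gives $\vdash \neg\A p \to \Box\neg\A p$. Since $\Gamma$ is prime, it contains $\A p$ or $\neg\A p$, and therefore it contains $\Box\A p$ or $\Box\neg\A p$. So $\Gamma \in |\Box\A p| \cup |\Box\neg\A p|$, and both of these are basic sets. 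This step is the main obstacle: without necessitation, non-emptiness, primeness and the uniformity axioms for $\A$ must do the work that $\Box(p\to p)$ did in the intuitionistic case.

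\emph{Intersections.} Suppose $\Gamma \in |\Box\phi| \cap |\Box\psi|$. Since theories are closed under adjunction, $\Box\phi \land \Box\psi \in \Gamma$. Axiom \eqref{a:R} for $\heartsuit = \Box$ then gives $\Box(\phi\land\psi) \in \Gamma$. For the reverse inclusion, use \eqref{B:ConElim} together with the monotonicity rule \eqref{a:RuleM}: this yields $\vdash \Box(\phi\land\psi)\to\Box\phi$ and $\vdash\Box(\phi\land\psi)\to\Box\psi$. Hence $\Gamma \in |\Box(\phi\land\psi)| \subseteq |\Box\phi|\cap|\Box\psi|$.

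\emph{Up-sets.} Each $|\Box\phi|$ is closed upwards under $\subseteq$ on $X$, because if $\Box\phi\in\Gamma\subseteq\Delta$ then $\Box\phi\in\Delta$. Arbitrary unions of up-sets are up-sets, so every open set in the generated topology is an up-set. The topology is therefore an up-set topology on $\langle X,\subseteq\rangle$.
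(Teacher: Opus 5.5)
Your proof is correct and follows the paper's argument: covering comes from non-emptiness, \eqref{a:Alem}, primeness and \eqref{a:ABoxA}/\eqref{a:ABoxNegA}, while intersections and the up-set property are handled as in Lemma~\ref{lem:CanSpaceInt} via \eqref{a:R}. Your explicit check that $|\Box(\phi\land\psi)| \subseteq |\Box\phi|\cap|\Box\psi|$ using \eqref{B:ConElim} and \eqref{a:RuleM} supplies a step the paper leaves implicit.
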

 \begin{proof}
 Fix a theory $x$. We know that $\phi \in x$ for some $\phi$. It follows that we have $\A\ff \in x$ or $\neg \A \ff \in x$ by \eqref{a:Alem}. In either case, we have $x \in |\Box \chi |$ for some $\chi$ by \eqref{a:ABoxA} or \eqref{a:ABoxNegA}. Hence, every $x$ is an element of some basic set. The rest of the proof is similar to the proof of Lemma \ref{lem:CanSpaceInt}. 
 \end{proof}

\begin{definition}
The \emph{$\itAc{BS4}$-space} is $\langle X, \subseteq, \topo\rangle$ where $X$ is the set of all non-empty non-trivial prime $\itAc{BS4}$-theories and $\topo$ is the topology generated by the basis comprising sets $| \Box \f |$ for all $\f \in \LanBAc$.
\end{definition}

The $\itAc{BS4}$-space can be extended to an RM up-model $\bm{C}$ where $N$, $R$, $\,^{*}$ and $\val$ are defined as usual in canonical models for relevant logics, namely by taking $N$ to be the set of regular prime theories, defining $Rxyz$ iff $\forall \f, \ff \colon$ if $\f \to \ff \in x$ and $\f \in y$, then $\ff \in z$; $x^{*} \coloneq \{ \f \mid \neg\f \notin x \}$; and $\val(p) \coloneq \{ x \in X \mid p \in x \}$.
However, the Truth Lemma does not hold in $\bm{C}$. Similarly as in the intuitionistic setting, it is not the case that $\A\f \in x$ iff $\f \in y$ for all $y \in X$. 
As in Section \ref{sec:InTELax}, we will solve this problem by anchoring $\bm{C}$ in a fixed $w \in X$ and restricting all the canonical relations to theories that agree with $w$ on all global modal formulas. We can assume that $w$ is a regular theory.

\begin{definition}\label{def:CanModAnchored}
Let $w$ be any regular non-trivial prime $\itAc{BS4}$-theory. The \emph{canonical up-model for $\itAc{BS4}$ anchored in $w$} is $\bm{C}^{w}= \langle X^{w}, \subseteq^{w}, \topo^{w}, N^{w}, R^{w}, \,^{*^{w}}, \val^{w}\rangle$ where:
\begin{itemize}
\item $\displaystyle X^{w} \coloneq \bigcap_{\A\f \in w} | \A\f| \, \cap \, \bigcap_{\A\ff \notin w} |\neg \A\ff|$ and $x \subseteq^{w} y$ iff $x \subseteq y$ and $x,y \in X^{w}$;
 \item $\topo^{w}$ is the subspace topology on $X^{w}$ derived from $\topo$ (i.e.\ $U \in \topo^{w}$ iff $\exists V \in \topo \colon U = X^{w} \cap V$);
 \item $N^{w} \coloneq \{ x \in X^{w} \mid x \text{ is regular}\}$
 \item $R^{w}xyz$ iff $x,y,z \in X^{w}$ and $\forall \f, \ff$, if $\f \to \ff \in x$ and $\f \in y$, then $\ff \in z$;
 \item $x^{*^{w}} \coloneq \{ \f \mid \neg\f \notin x \}$;
 \item $\val^{w}(p) \coloneq |p| \cap X^{w}$.
\end{itemize}
%\end{multicols}
We define $|\f|^{w} \coloneq |\f| \cap X^{w}$.
\end{definition}

\begin{lemma}\label{lem:CanWell}
$\bm{C}^{w}$ is well-defined.
\end{lemma}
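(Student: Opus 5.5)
To show that $\bm{C}^{w}$ is well-defined I will check that it is an RM up-model. This means: $X^{w}$ is non-empty; $\topo^{w}$ is an up-set topology on $\langle X^{w},\subseteq^{w}\rangle$; $N^{w}$ is an up-set; (RM1)--(RM5) hold on $X^{w}$ with $R^{w}$ and $^{*^{w}}$; $^{*^{w}}$ maps $X^{w}$ into $X^{w}$; and $\val^{w}(p)$ is an up-set. Several items are immediate. $w \in X^{w}$ by definition of $X^{w}$, since $\A\ff\notin w$ gives $\neg\A\ff\in w$ by \eqref{a:Alem} and primeness. $\topo^{w}$ is a topology by the subspace construction, and its opens are up-sets because the basic sets $|\Box\f|$ are up-sets (Lemma \ref{lem:CanSpaceRM}). $N^{w}$ and $\val^{w}(p)$ are up-sets because regularity and membership are preserved under inclusion. (RM3) holds by the definition of $R^{w}$, and (RM4) and (RM5) for $^{*}$ follow directly from its definition.

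The first substantive step is closure of $X^{w}$ under $^{*^{w}}$. Given $x\in X^{w}$, I first show $x^{*}$ is a non-empty, non-trivial, prime theory in the standard way. Theoryhood uses contraposition and the De Morgan laws, primeness uses $\neg(\f\land\ff)\to\neg\f\lor\neg\ff$, and non-triviality and non-emptiness use $\bot,\top$: $\neg\top\notin x$ since $\neg\top\leftrightarrow\bot$ and $\bot\notin x$ by \eqref{a:Aecq}.

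Next I show $x^{*}$ agrees with $w$ on $\A$-formulas. If $\A\f\in w$, then $\A\f\in x$, so $\neg\A\f\notin x$ by \eqref{a:Aecq} and non-triviality, hence $\A\f\in x^{*}$. If $\A\f\notin w$, then $\neg\A\f\in x$, so $\neg\neg\A\f\notin x$ (by \eqref{a:Aecq} with double negation), hence $\neg\A\f\in x^{*}$.

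For (RM1): if $x\in N^{w}$, $R^{w}xyz$ and $\f\in y$, then $\f\to\f\in x$ by \eqref{B:Id} and regularity, so $\f\in z$.

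The main obstacle is (RM2): for every $y\in X^{w}$ I must find a regular prime $x\in X^{w}$ with $R^{w}xyy$. My plan is to take the pair $\langle\Delta,\nabla\rangle$ where
\[\Delta = T\cup\{\A\f \mid \A\f\in w\}\cup\{\neg\A\ff \mid \A\ff\notin w\}\]
with $T$ the set of theorems, and $\nabla$ the set of implications $\f\to\ff$ with $\f\in y$ and $\ff\notin y$. The standard relevant-logic argument shows $\nabla$ is closed under disjunction up to provable implication: if $\f_1\to\ff_1$ and $\f_2\to\ff_2$ are in $\nabla$, then so is $(\f_1\land\f_2)\to(\ff_1\lor\ff_2)$, and it follows from $(\f_1\to\ff_1)\lor(\f_2\to\ff_2)$ using \eqref{B:Pref} and \eqref{B:Suff}.

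So independence reduces to showing that no conjunction $\delta$ of members of $\Delta$ proves some $\f\to\ff\in\nabla$. Suppose $\vdash \tau\land\bigwedge\A\f_i\land\bigwedge\neg\A\ff_j\to(\f\to\ff)$. Since $\tau$ is a theorem, I move the $\A$-parts into the consequent using \eqref{a:AecqImp}, \eqref{a:AecqImpNeg}, \eqref{a:Aweak}, \eqref{a:AweakNeg} and \eqref{a:Alem}, obtaining a theorem of the form $\f\land\bigwedge\A\f_i\land\bigwedge\neg\A\ff_j\to\ff$. The uniformity axioms exist precisely to permit this kind of classical reasoning for $\A$-formulas. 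As $y\in X^{w}$ contains all the $\A$-conjuncts and $\f$, we get $\ff\in y$, a contradiction. If this rearrangement turns out to be delicate, I will first prove an auxiliary lemma: for every theorem-conjunction context, $\A\chi$ and $\neg\A\chi$ can be exported across $\to$.

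Lemma \ref{lem:PEint}'s relevant analogue (Appendix \ref{app:PE}) then yields a prime theory $x\supseteq\Delta$ disjoint from $\nabla$. This $x$ is regular, lies in $X^{w}$, is non-trivial because $\nabla$ is non-empty or $\bot\notin x$, and satisfies $R^{w}xyy$ by construction.
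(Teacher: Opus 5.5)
For the part that is actually the content of this lemma in the paper, your argument is the paper's: the paper proves only that $x\in X^{w}$ implies $x^{*^{w}}\in X^{w}$, and you do this the same way, arguing directly where the paper argues by contradiction. The other checks you bundle in, (RM1)--(RM5), are proved in the paper separately as Lemma~\ref{lem:CanModel}.

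Your proof of (RM2) takes a genuinely different route from the paper's. The paper extends only $\langle T,\nabla\rangle$. It then observes that the resulting regular $x$ is forced to agree with $w$ because $x$ avoids $\nabla$ and $y\in X^{w}$:
if $\A\f\in w$ but $\A\f\notin x$, then $\neg\A\f\in x$ by \eqref{a:Alem}, and \eqref{a:AecqImpNeg} gives $\A\f\to\ff\in x$ for some $\ff\notin y$, although $\A\f\in y$;
if $\A\ff\in x$ but $\A\ff\notin w$, then \eqref{a:Aweak} gives $\f\to\A\ff\in x$ for some $\f\in y$, although $\A\ff\notin y$.
That costs two axiom instances.

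You instead build the agreement into $\Delta$. This gives $x\in X^{w}$ by construction and mirrors the $\A$-case of the Truth Lemma. The price is that the work moves into the independence proof, and the step you only sketch is the delicate one. In a relevant logic a theorem $\tau$ cannot simply be dropped from an antecedent conjunction.

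The step does go through:
\begin{itemize}
\item Write $\alpha=\beta_1\land\dots\land\beta_k$ and let $\overline{\beta_i}$ be the opposite literal ($\neg\A\chi$ for $\A\chi$, and vice versa).
\item By \eqref{a:AecqImp} or \eqref{a:AecqImpNeg}, plus suffixing and transitivity, $\vdash\overline{\beta_i}\to(\f\land\alpha\to\ff)$.
\item Your hypothesis with suffixing gives $\vdash\tau\land\alpha\to(\f\land\alpha\to\ff)$.
\item By \eqref{a:Alem} with $\tau$ as antecedent, plus distribution, $\tau$ provably implies a disjunction of $\tau\land\alpha$ and of conjunctions each containing some $\overline{\beta_i}$.
\item Hence $\vdash\tau\to(\f\land\alpha\to\ff)$, and modus ponens yields $\vdash\f\land\alpha\to\ff$.
\end{itemize}
Neither \eqref{a:Aweak} nor \eqref{a:AweakNeg} is needed here.

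One small slip: primeness of $x^{*}$ uses $\neg\f\land\neg\ff\to\neg(\f\lor\ff)$. The law $\neg(\f\land\ff)\to\neg\f\lor\neg\ff$ that you cite is what gives closure of $x^{*}$ under conjunction.
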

\begin{proof}
We have to show that $x \in X^{w}$ implies $x^{*^{w}} \in X^{w}$. We have $x^{*} \in X$ by De Morgan and double negation laws. 
Now suppose that  $x^{*^{w}} \notin X^{w}$. If $\A\f \in w$ and $\A\f \notin x^{*}$, then $\A\f \land \neg\A\f \in x$ by \eqref{a:Alem}. By \eqref{a:Aecq}, this contradicts the assumption that $x$ is non-trivial. 
If $\neg \A\ff \notin x^{*}$ for some $\A\ff \notin w$, then $\neg\neg \A\ff \in x$, which entails $\A\ff \in x$ by \eqref{B:DNE}. However, $\neg\A\ff \in x$ as well, contradicting non-triviality of $x$ by \eqref{a:Aecq}.
\end{proof}

The following lemma establishes that $x \in X^{w}$ iff $x$ agrees with $w$ on all $\A\f$. 

\begin{lemma}
$x \in X^{w}$ iff $ \A\f \in x \iff \A\f \in w$ for all $\f$.
\end{lemma}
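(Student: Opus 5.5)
The plan is to mirror the proof of Lemma \ref{lem:SameAint} for the intuitionistic case, replacing Ono's axiom $\A\f \lor \A\neg\A\f$ by the relevant axioms \eqref{a:A5}, \eqref{a:Alem}, \eqref{a:Aecq} and \eqref{a:T}. Recall that $X^{w}$ consists of those $x \in X$ with $\A\f \in x$ whenever $\A\f \in w$, and $\neg\A\ff \in x$ whenever $\A\ff \notin w$.

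For the left-to-right direction, assume $x \in X^{w}$. If $\A\f \in w$, then $\A\f \in x$ directly by the definition of $X^{w}$. If $\A\f \notin w$, then $\neg\A\f \in x$ by the definition of $X^{w}$. Were $\A\f \in x$ as well, $x$ would contain $\A\f \land \neg\A\f$, since theories are closed under conjunction. By \eqref{a:Aecq}, $x$ would then contain every formula, contradicting non-triviality. Hence $\A\f \notin x$, and $x$ agrees with $w$ on all formulas of the form $\A\f$.

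For the right-to-left direction, assume that $\A\f \in x$ iff $\A\f \in w$ for all $\f$, and that $x \in X$ (the statement is tacitly about $x \in X$). Clearly $x \in |\A\f|$ for every $\A\f \in w$. The remaining task is to show that $\neg\A\ff \in x$ whenever $\A\ff \notin w$; by assumption we then also have $\A\ff \notin x$. Since $x$ is non-empty, pick some $\chi \in x$. Axiom \eqref{a:Alem} gives $\vdash \chi \to (\A\ff \lor \neg\A\ff)$, so $\A\ff \lor \neg\A\ff \in x$. Primeness of $x$ together with $\A\ff \notin x$ then yields $\neg\A\ff \in x$.

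I do not expect a real obstacle. The only delicate point is that $\itAc{BS4}$-theories need not be regular, so $\A\ff \lor \neg\A\ff$ cannot be placed in $x$ merely as a theorem. The detour through \eqref{a:Alem} and the non-emptiness of $x$ handles this, as in the proof of Lemma \ref{lem:CanSpaceRM}. Note also that, unlike Lemma \ref{lem:SameAint}, the argument never needs to pass through $w$ via \eqref{a:A5}, because \eqref{a:Alem} already applies inside $x$.
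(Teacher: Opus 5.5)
Your proof is correct and takes essentially the same approach as the paper. The left-to-right direction uses \eqref{a:Aecq} exactly as the paper does, and your right-to-left argument is the direct form of the paper's contrapositive argument via \eqref{a:Alem}; you also spell out the reliance on non-emptiness and primeness of $x$, which the paper leaves implicit. Your opening plan mentions \eqref{a:A5} and \eqref{a:T}, but neither is actually used.
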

\begin{proof}
Firstly, assume that $x \in X^{w}$. If $\A\f \in w$, then $\A\f \in x$ by definition. If $\A\f \notin w$, then $\neg \A\f \in x$ by  definition and so $\A\f \notin x$ by \eqref{a:Aecq}. 
Secondly, assume that $x \notin X^{w}$. Either there is $\A\f \in w$ such that $\A\f \notin x$ and we are done, or there is $\A\ff \notin w$ such that $\neg\A\ff \notin x$. In the latter case, $\A\ff \in x$ by \eqref{a:Alem}.
\end{proof}

%We will make frequent use of the following variant of Lindenbaum's Lemma. It is a standard tool in relevant logic.

\begin{definition}
A pair $\langle \Delta, \nabla\rangle$ of subsets of $\LanBAc$ is \emph{$\itAc{BS4}$-independent} if there are no $\f_1, \ldots, \f_n \in \Delta$ and $\ff_1, \ldots, \ff_m \in \nabla$ such that $\vdash \bigwedge_{i = 1}^{n} \f_i \, \to\, \bigvee_{j = 1}^{m} \ff_j$.
\end{definition}

\begin{lemma}[Pair extension for $\itAc{BS4}$]\label{lem:PE}
If $\langle \Delta, \nabla\rangle$ is an $\itAc{BS4}$-independent pair, then there is a prime $\itAc{BS4}$-theory $\Delta'$ such that $\Delta \subseteq \Delta'$ and $\nabla \cap \Delta' = \emptyset$. If both $\Delta$ and $\nabla$ are non-empty, then $\Delta'$ is non-empty and non-trivial.
\end{lemma}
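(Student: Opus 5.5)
We will use the standard Lindenbaum-style construction for pair extension in weak relevant logics, as in Restall's and Routley--Meyer's work. A theory here has to be closed under provable implication and under adjunction, but it need not contain the theorems. So we avoid any appeal to a deduction theorem and rely only on the rules \eqref{B:MP}, \eqref{B:Adj}, transitivity, and the lattice axioms (\ref{B:Id}--\ref{B:Distrib}).

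\textbf{Step 1: enumerate and extend.} Enumerate $\LanBAc$ as $\chi_0, \chi_1, \ldots$; this is possible because $\Prop$ is countable. Set $\langle \Delta_0, \nabla_0\rangle = \langle \Delta, \nabla\rangle$. At stage $n$, if $\langle \Delta_n \cup \{\chi_n\}, \nabla_n\rangle$ is independent, take it as $\langle\Delta_{n+1}, \nabla_{n+1}\rangle$; otherwise take $\langle \Delta_n, \nabla_n \cup \{\chi_n\}\rangle$. Put $\Delta' = \bigcup_n \Delta_n$ and $\nabla' = \bigcup_n \nabla_n$.

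\textbf{Step 2: independence is preserved.} Suppose both extensions at stage $n$ failed. Then $\vdash \delta \land \chi_n \to \nu$ and $\vdash \delta' \to \nu' \lor \chi_n$, where $\delta, \delta'$ are finite conjunctions from $\Delta_n$ and $\nu, \nu'$ are finite disjunctions from $\nabla_n$. Conjoin the antecedents into $\delta \land \delta'$. Then $\vdash \delta\land\delta' \to \delta\land(\nu'\lor\chi_n)$ by \eqref{B:ConIntro}, \eqref{B:ConElim} and \eqref{B:MP}. Distribution \eqref{B:Distrib} gives $(\delta\land\nu')\lor(\delta\land\chi_n)$, and disjunction elimination \eqref{B:DisElim} together with the two derivations yields $\vdash \delta\land\delta' \to \nu'\lor\nu$. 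This contradicts the independence of $\langle\Delta_n,\nabla_n\rangle$. Since independence is a finitary condition, it passes to the union $\langle \Delta', \nabla'\rangle$. Because every formula was placed on one side, $\nabla' = \Delta'^{c}$, and in particular $\nabla\cap\Delta'=\emptyset$. \textbf{This step is the main obstacle.} The delicate part is assembling the derivation using only rule forms of the axioms, with no deduction theorem available: each axiom instance has to be combined with premises via \eqref{B:MP} and \eqref{B:Adj}.

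\textbf{Step 3: closure properties.} Each follows because the opposite choice would violate independence.
\begin{itemize}
\item Closure under provable implication: if $\f\in\Delta'$, $\vdash \f\to\ff$ and $\ff\notin\Delta'$, then $\ff\in\nabla'$, and $\vdash \f\to\ff$ itself witnesses dependence.
\item Closure under adjunction: if $\f,\ff\in\Delta'$ but $\f\land\ff\in\nabla'$, then \eqref{B:Id} gives $\vdash \f\land\ff\to\f\land\ff$, again a dependence.
\item Primeness: if $\f\lor\ff\in\Delta'$ while $\f,\ff\in\nabla'$, then $\vdash \f\lor\ff\to\f\lor\ff$ is a dependence.
\end{itemize}

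\textbf{Step 4: the final clause.} If $\Delta \neq \emptyset$ then $\Delta'\supseteq\Delta$ is non-empty. If $\nabla\neq\emptyset$ then some formula lies outside $\Delta'$, so $\Delta'\neq\LanBAc$ and $\Delta'$ is non-trivial.
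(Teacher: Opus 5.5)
Your proposal is correct and follows the paper's own argument in Appendix~\ref{app:PE}. It uses the same step lemma, which rules out both one-sided extensions failing via \eqref{B:Distrib} and $\lor$-elimination, the same enumeration-based construction, and the same derivation of closure under provable implication, adjunction and primeness from independence together with \eqref{B:Id}. The remaining differences are cosmetic: you pass independence to the union instead of choosing a large enough stage in each argument, and you spell out the final non-emptiness/non-triviality clause that the paper leaves implicit. Note that disjointness of $\Delta'$ and $\nabla'$ comes from that independence plus \eqref{B:Id}, not merely from each formula being placed on one side.
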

\begin{proof}
See Appendix \ref{app:PE}.
\end{proof}

\begin{lemma}\label{lem:CanModel}
$\bm{C}^{w}$ is an RM up-model.
\end{lemma}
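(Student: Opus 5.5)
We have to check several things: $\langle X^{w}, \subseteq^{w}\rangle$ is a preorder; $\topo^{w}$ is an up-set topology; $N^{w}$ is an up-set; $\val^{w}(p)$ is an up-set; and the frame conditions (RM1)--(RM5) hold for $N^{w}, R^{w}, {}^{*^{w}}$. Well-definedness of ${}^{*^{w}}$ is Lemma \ref{lem:CanWell}.

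\textbf{Routine parts.} Inclusion restricted to $X^{w}$ is clearly a preorder. Each basic set $|\Box\f|$ is closed under $\subseteq$ by Lemma \ref{lem:CanSpaceRM}, so each $|\Box\f| \cap X^{w}$ is an up-set in $X^{w}$. Subspace topologies of up-set topologies restricted to a subset are again up-set topologies, because intersections of up-sets with $X^{w}$ are up-sets in the restricted order. $N^{w}$ and $\val^{w}(p)$ are up-sets, since regularity and membership of $p$ are preserved by supersets.

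\textbf{Frame conditions.}
\begin{itemize}
\item (RM1): if $x$ is regular and $R^{w}xyz$, take $\f \in y$. By \eqref{B:Id}, $\f \to \f \in x$, hence $\f \in z$.
\item (RM3): this is immediate from the definition of $R^{w}$, since shrinking $x, y$ or enlarging $z$ preserves the defining implication. All points stay in $X^{w}$ because the quantification is over elements of $X^{w}$.
\item (RM4) and (RM5): these follow as usual from the definition of ${}^{*}$ and double negation $\neg\neg\f \tot \f$.
\end{itemize}

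\textbf{Main obstacle: (RM2).} Given $y \in X^{w}$, we need a regular $x \in X^{w}$ with $R^{w}xyy$. The standard relevant-logic approach takes $x$ to be a regular prime theory extending the set of theorems and avoiding the relevant failures. Here, though, $x$ must also lie in $X^{w}$, i.e.\ agree with $w$ on all $\A$-formulas.

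I would let $T$ be the set of theorems and consider
\[\Delta = T \cup \{\A\f \mid \A\f \in w\} \cup \{\neg\A\ff \mid \A\ff \notin w\}.\]
The first task is to show that the set $\Theta$ of formulas $\f \to \ff$ with $\f \in y$ implying $\ff \in y$ is a theory containing $\Delta$.
\begin{itemize}
\item Theorems $\f \to \ff$ belong to $\Theta$ because $y$ is closed under provable implication.
\item For $\A\f \in w$, we have $\A\f \in y$. Using \eqref{a:Aweak}, $\ff \to \A\f$ is in the theory whenever $\A\f \in y$. More precisely, I would work with the set $\{\chi \mid \forall \f\,\forall \ff\, (\chi \to (\f\to\ff)$-style closure$)\}$, and that is the delicate bookkeeping.
\end{itemize}

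A cleaner route is the standard one. Let $x_0 = \{\chi \mid \forall \f, \ff\colon \text{if } \chi \vdash \f\to\ff\text{-derivably and } \f \in y \text{ then } \ff \in y\}$. Alternatively, follow the appendix-style argument: first build the canonical $x$ in the full canonical model (the standard (RM2) proof for $\mathsf{B}$, provided in the appendix). Then verify that it lies in $X^{w}$.

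For the membership in $X^{w}$, the axioms (\ref{a:AecqImp}--\ref{a:AweakNeg}) should give the following facts:
\begin{itemize}
\item For $\A\gamma \in w$, the formulas $\A\gamma \to (\ff \to \A\gamma)$ force $x$ to contain $\A\gamma$. Otherwise $\neg\A\gamma \in x$ by \eqref{a:Alem}, and \eqref{a:AecqImpNeg} puts $\A\gamma \to \bot$ in $x$. Since $\A\gamma \in y$, this yields $\bot \in y$, contradicting non-triviality via \eqref{a:Aecq}.
\item The case $\A\gamma \notin w$ is symmetric, using $\neg\A\gamma \in y$ and \eqref{a:AecqImp}.
\end{itemize}
So I expect (RM2) to reduce to the classical construction plus these uniformity axioms. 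Getting the independence and primeness of the extending theory right is the main technical step.
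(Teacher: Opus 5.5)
Your eventual strategy is the paper's. You take a regular prime theory $x$ with $Rxyy$ from the standard relevant-logic construction, then show via the uniformity axioms that it automatically lies in $X^{w}$, so your first route (seeding $\Delta$ with the $\A$-profile of $w$) is unnecessary. Your membership check is sound. Where the paper uses an arbitrary $\ff\notin y$, you use $\bot$, which is not in $y$ by \eqref{a:Aecq}. For $\A\gamma\notin w$ the paper uses \eqref{a:Aweak} with some $\f\in y$; your use of \eqref{a:AecqImp} with $\neg\A\gamma\in y$ works equally well and is more symmetric. Only your opening claim that the formulas $\A\gamma\to(\ff\to\A\gamma)$ ``force'' $\A\gamma\in x$ is a non sequitur; the real argument is the one after ``Otherwise''.

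The genuine gap is that you never produce $x$, which is the substance of (RM2). The appendix contains no (RM2) argument, only Pair Extension and the implication case of the Truth Lemma. Your $\Theta$ consists only of implications, so it is neither a theory nor a superset of $\Delta$. Your $x_0$ is not coherently defined, and you explicitly leave independence open.

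What is needed is to show that the pair $\langle T, \{\f\to\ff \mid \f\in y \text{ and } \ff\notin y\}\rangle$ is independent. Suppose otherwise, so $\vdash\theta\to\bigvee_{i}(\f_i\to\ff_i)$ for a theorem $\theta$ (conjunctions of theorems are theorems by \eqref{B:Adj}).
\begin{itemize}
\item The rules \eqref{B:Suff} and \eqref{B:Pref} give $\vdash(\f_i\to\ff_i)\to(\bigwedge_j\f_j\to\bigvee_j\ff_j)$ for each $i$.
\item Then \eqref{B:DisElim}, transitivity and \eqref{B:MP} yield $\vdash\bigwedge_j\f_j\to\bigvee_j\ff_j$.
\item Since $y$ is a prime theory containing $\bigwedge_j\f_j$, some $\ff_j\in y$, a contradiction.
\end{itemize}
Pair Extension then gives a prime theory $x\supseteq T$, hence regular and non-empty. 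It contains no $\f\to\ff$ with $\f\in y$ and $\ff\notin y$, which is exactly the condition $Rxyy$. Moreover, $x$ is non-trivial because the right-hand set is non-empty, and this uses that $y$ is non-empty and non-trivial. With this in place, your $X^{w}$ check completes the proof; that check itself needs $x\neq\emptyset$ to apply \eqref{a:Alem}, which regularity supplies.
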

\begin{proof}
(RM1) If $x \in N^{w}$, $R^{w}xyz$ and $\f \in y$, then $\f \in z$ since $\vdash \f \to \f$. Hence, $y \subseteq z$.

(RM2) Consider any $y \in X^{w}$. We have to show that there is $x \in N^{w}$ such that $R^{w}xyy$. Firstly, consider the pair $\langle T, \{ \f \to \ff \mid \f \in y \And \ff \notin y \}\rangle$, where $T$ is the set of theorems of $\itAc{BS4}$. This pair is independent. 
If it were not, then by adjunction \eqref{B:Adj} there would be a theorem $\chi$ such that $\vdash \chi \to \bigvee_{i = 1}^{n} (\f_i \to \ff_i)$ for $\f_i \in y$ and $\ff_i \notin y$. This entails that $\vdash \chi \to \big (\bigwedge_{i = 1}^{n} \f_i \to \bigvee_{i = 1}^{n} \ff_i \big)$, and so we obtain $\vdash \bigwedge_{i = 1}^{n} \f_i \to \bigvee_{i = 1}^{n} \ff_i$. Since $\bigwedge_{i = 1}^{n} \f_i \in y$, we have $\ff_i \in y$ for some $i$, contradicting the assumption. Hence, the pair must be independent. 
Since $y$ is non-empty and non-trivial, it follows by Lemma \ref{lem:PE} that there is a regular $x \in X$ such that if $\f \to \ff \in x$ and $\f \in y$, then $\ff \in y$. 

Now we prove that $x \in N^{w}$. Firstly, suppose that $\A\f \in w$ and $\A\f \notin x$. We have $\neg\A\f \in x$ by \eqref{a:Alem}. Since $\ff \notin y$ for some $\ff$, we obtain $\A\f \to \ff \in x$ by \eqref{a:AecqImpNeg}. Since $y \in X^{w}$, we have $\A\f \in y$. It follows that $\ff \in y$, contradicting our assumption. 
Secondly, suppose that $\A\ff \in x$ and $\A\ff \notin w$. We have $\f \in y$ for some $\f$ and so $\f \to \A\ff \in x$ by \eqref{a:Aweak}. Consequently, $\A\ff \in y$, contradicting the assumption that $y \in X^{w}$. 

(RM3) is obviously satisfied and the rest is standard: (RM4) is established using contraposition and (RM5) holds thanks to double negation.
\end{proof}

\begin{lemma}\label{lem:Compact}
Let $\Gamma$ be a non-empty set of formulas and $\psi$ an arbitrary formula. If $\left( \bigcap_{\f \in \Gamma} |\f| \right) \subseteq |\psi|$, then there is a finite $\Delta \subseteq \Gamma$ such that $\vdash \left ( \bigwedge_{\delta \in \Delta} \delta \right ) \to \psi$.
\end{lemma}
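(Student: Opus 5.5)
We argue by contraposition, using the pair extension lemma (Lemma \ref{lem:PE}). Suppose that for no finite $\Delta \subseteq \Gamma$ do we have $\vdash \left( \bigwedge_{\delta \in \Delta} \delta \right) \to \psi$; here $\Delta$ ranges over non-empty finite subsets, which exist since $\Gamma$ is non-empty. We then aim to produce a theory in $\bigcap_{\f \in \Gamma} |\f|$ that does not contain $\psi$. The natural candidate pair is $\langle \Gamma, \{ \psi \} \rangle$. We first check that it is $\itAc{BS4}$-independent. Independence requires that there are no $\f_1, \ldots, \f_n \in \Gamma$ and $\ff_1, \ldots, \ff_m \in \{\psi\}$ with $\vdash \bigwedge_i \f_i \to \bigvee_j \ff_j$. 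Every such $\ff_j$ is $\psi$, so the disjunction $\bigvee_j \ff_j$ is a disjunction of copies of $\psi$. It is provably equivalent to $\psi$ via \eqref{B:DisElim} applied to $\psi \to \psi$ (from \eqref{B:Id}) and adjunction, combined with the derivable transitivity rule. Hence any witness to dependence yields $\vdash \bigwedge_{i} \f_i \to \psi$ with $\{\f_1,\ldots,\f_n\}$ a finite subset of $\Gamma$, contradicting our assumption. (If the case $n = 0$ is allowed in the definition of independence, we pad with some element of $\Gamma$ using \eqref{B:ConElim}; this is harmless.)

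Next we apply Lemma \ref{lem:PE} to get a prime $\itAc{BS4}$-theory $\Delta' \supseteq \Gamma$ with $\psi \notin \Delta'$. Both components of the pair are non-empty, since $\Gamma \neq \emptyset$ by hypothesis and $\{\psi\}$ is a singleton. The second clause of the lemma therefore guarantees that $\Delta'$ is non-empty and non-trivial. So $\Delta'$ lies in the carrier over which $|\cdot|$ is defined, namely non-empty non-trivial prime theories. It belongs to $|\f|$ for every $\f \in \Gamma$ but not to $|\psi|$, which contradicts $\bigcap_{\f\in\Gamma}|\f| \subseteq |\psi|$.

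The only delicate point is the bookkeeping around the finiteness and non-emptiness conditions. The hypothesis that $\Gamma$ is non-empty is exactly what lets us invoke the non-triviality clause of Lemma \ref{lem:PE}. The collapse of $\bigvee_j \psi$ to $\psi$ must be done inside the weak system $B$, without the structural rules available in intuitionistic logic. The main load, the Zorn's lemma construction of a prime theory, is carried by Lemma \ref{lem:PE}, whose proof is in the appendix. Since the paper says that lemma uses Zorn's lemma and finitary provability, no further compactness argument is needed here.
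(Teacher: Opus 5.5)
Your proof is correct and is essentially the standard argument the paper alludes to: if no finite $\Delta$ works, then $\langle \Gamma, \{\psi\}\rangle$ is independent, and Lemma \ref{lem:PE} (proved in the appendix by an enumeration rather than by Zorn's lemma) gives a non-empty non-trivial prime theory in $\bigcap_{\f\in\Gamma}|\f|$ that is not in $|\psi|$. Your parenthetical about $n=0$ is unnecessary, since empty conjunctions are not formulas and so $n \geq 1$ in the definition of independence; its justification would also fail, because in this relevant logic $\vdash \psi$ does not yield $\vdash \f \to \psi$, and \eqref{B:ConElim} cannot supply it.
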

\begin{proof}
Standard argument using Zorn's lemma, the fact that $\itAc{BS4}$ is finitary and Lemma \ref{lem:PE}.
\end{proof}

\begin{lemma}\label{lem:TruthLemma}
For all $\chi$, $|\chi|^{w} = \sem{\chi}_{\bm{C}^{w}}$.
\end{lemma}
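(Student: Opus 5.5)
The proof will be by structural induction on $\chi$, with the induction hypothesis $|\f|^{w} = \sem{\f}_{\bm{C}^{w}}$ for all proper subformulas. The base case holds by the definition of $\val^{w}$. The cases for $\land$ and $\lor$ follow from $x$ being a prime theory closed under adjunction.

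For $\neg\f$, I will use the equivalence $\neg\f \in x$ iff $\f \notin x^{*^{w}}$, which is the definition of $^{*^{w}}$. Lemma \ref{lem:CanWell} guarantees $x^{*^{w}} \in X^{w}$, so the induction hypothesis applies at $x^{*^{w}}$.

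For $\f \to \ff$, the left-to-right direction is immediate from the definition of $R^{w}$. For the converse, suppose $\f \to \ff \notin x$. The standard relevant-logic construction applies. First, apply Lemma \ref{lem:PE} to the pair $\langle \{\f\}, \{\ff' \mid \f\to\ff' \in x\}\rangle$, suitably closed, to obtain $y \ni \f$. Then obtain $z \not\ni \ff$ with $R xyz$, via a second pair extension of $\langle\{\ff'' \mid \exists \f' \in y,\ \f'\to\ff''\in x\}, \{\ff\}\rangle$. This is the "two-step" argument proven in the appendix-style manner.

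The new work is showing $y,z \in X^{w}$, mirroring the (RM2) argument in Lemma \ref{lem:CanModel}.

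\textbf{Membership of $z$.} If $\A\gamma \in w$, then $\A\gamma\in x$. By \eqref{a:Aweak}, $\f\to\A\gamma \in x$, hence $\A\gamma \in z$. Similarly \eqref{a:AweakNeg} yields $\neg\A\gamma \in z$ when $\A\gamma\notin w$.

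\textbf{Membership of $y$.} If $\A\gamma\notin y$ for some $\A\gamma \in w$, then $\neg\A\gamma \in y$ by \eqref{a:Alem}. Now \eqref{a:AecqImp} gives $\neg\A\gamma \to \ff \in x$, forcing $\ff\in z$, a contradiction. A symmetric use of \eqref{a:AecqImpNeg} rules out $\A\gamma \in y$ when $\A\gamma\notin w$.

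I expect this step to need care: $y$ is built before $z$, so the clean route is to put the formulas $\A\gamma$ for $\A\gamma \in w$ and $\neg\A\gamma$ for $\A\gamma\notin w$ into the left side of the $y$-pair from the start. I would then check independence using \eqref{a:AecqImp}, \eqref{a:AecqImpNeg} and \eqref{B:ConIntro}.

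For $\Box\f$, I will follow case (ii) of Lemma \ref{lem:TruthInt}. One direction uses \eqref{a:T}, which gives $|\Box\f|^{w} \subseteq |\f|^{w}$. For the other, suppose $x \in |\Box\ff|^{w}\subseteq|\f|^{w}$. Lemma \ref{lem:Compact} then yields
\[\vdash \Box\ff \land \textstyle\bigwedge \A\chi_i \land \bigwedge\neg\A\chi'_j \to \f.\]
Using \eqref{a:ABoxA}, \eqref{a:ABoxNegA}, \eqref{a:RuleM}, \eqref{a:R} and \eqref{a:4}, this lifts to $\Box\f$, so $\Box\f\in x$.

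For $\A\f$, I will follow case (iii) of Lemma \ref{lem:TruthInt}. The axiom \eqref{a:T} gives one direction. For the other, independence of $\langle \{\A\psi\in w\}\cup\{\neg\A\psi'\mid \A\psi'\notin w\},\{\f\}\rangle$ is needed. I would obtain it by lifting a derivable implication with \eqref{a:RuleM}, \eqref{a:R}, \eqref{a:4} and \eqref{a:A5}, so that its conclusion becomes $\A\f$. Finally, Lemma \ref{lem:PE} gives the required non-trivial prime theory, which lies in $X^{w}$.

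For $\Boxc\f$, first suppose $\Boxc\f\in x$. Then \eqref{a:Boxc4} gives $x\in|\Box\Boxc\f|^{w}$, and this set is disjoint from $|\f|^{w}$ by \eqref{a:BoxcECQ} and non-triviality. For the converse, suppose $x \in |\Box\ff|^{w}$ with $|\Box\ff|^{w}\cap|\f|^{w}=\emptyset$. Lemma \ref{lem:Compact}, applied with the $\A$-context, gives $\vdash \Box\ff\land\Theta\land\f\to\bot$, where $\Theta$ is a finite conjunction of $\A$- and $\neg\A$-formulas. Replacing $\Box\ff\land\Theta$ by $\Box(\ff\land\Theta)$, which is justified by \eqref{a:ABoxA}, \eqref{a:ABoxNegA} and \eqref{a:R}, I then apply rule \eqref{a:BoxcRule} to conclude $\Boxc\f\in x$. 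This $\Boxc$ case and the $\to$-case are the main obstacles.
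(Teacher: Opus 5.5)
Your plan matches the paper's proof in all essentials. It uses the same induction and the same post hoc check that the witnesses $y,z$ for a failed implication lie in $X^{w}$: \eqref{a:Alem}, \eqref{a:AecqImp}, \eqref{a:AecqImpNeg} for $y$, and \eqref{a:Aweak}, \eqref{a:AweakNeg} for $z$. The compactness-and-lifting arguments for $\Box$, $\A$ and $\Boxc$ are also the paper's, and your explicit treatment of $\neg$ via Lemma~\ref{lem:CanWell} is what the paper leaves implicit.

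One step, however, fails as written. The pair $\langle\{\f\},\{\ff'\mid\f\to\ff'\in x\}\rangle$ you extend to get $y$ is in general not independent: if $x$ is regular, then $\f\to\f\in x$, so $\f$ sits on both sides. Even when it is independent, it does not guarantee that your second pair is independent. The right-hand set must be $\{\f'\mid\f'\to\ff\in x\}$. Its independence from $\{\f\}$ follows from $\f\to\ff\notin x$ by \eqref{B:DisElim} and suffixing. It is also exactly what ensures that no $\f'\in y$ has $\f'\to\ff\in x$, so that $\ff$ can be kept out of $z$. Non-emptiness of the sets involved uses the formulas $\omega,\omega'$ of Appendix~\ref{app:TruthImp}.

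Your worry about the order of construction is unfounded. Your `Membership of $y$' argument uses only $Rxyz$, $\ff\notin z$, $x\in X^{w}$ and primeness and non-emptiness of $y$. All of these hold once both theories exist, so the argument is already complete. It is exactly the paper's, which builds $z$ before $y$ in Appendix~\ref{app:TruthImp} and then argues the same way.

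The `clean route' is therefore unnecessary. It can be made to work, but not with \eqref{B:ConIntro} alone. To recover $\f\to\ff\in x$ from $(\f\land\A\gamma)\to\ff\in x$, you must split $\f$ into $(\f\land\A\gamma)\lor(\f\land\neg\A\gamma)$ via \eqref{a:Alem} and \eqref{B:Distrib}. You then discharge the second disjunct using \eqref{a:AecqImp} and $\A\gamma\in x$ (or \eqref{a:AecqImpNeg} for conjuncts $\neg\A\gamma$), and recombine with \eqref{B:DisElim}.

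Finally, there is a small mis-citation in the $\Boxc$ case. Passing from $\vdash\Box\ff\land\Theta\land\f\to\bot$ to $\vdash\Box(\ff\land\Theta)\land\f\to\bot$ needs $\vdash\Box(\ff\land\Theta)\to\Box\ff\land\Theta$, i.e.\ \eqref{a:T} and \eqref{a:RuleM}. The axioms \eqref{a:ABoxA}, \eqref{a:ABoxNegA} and \eqref{a:R} are only what give $\Box(\ff\land\Theta)\in x$.
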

\begin{proof}
Structural induction on $\chi$. The only interesting cases are (i) $\chi = \f \to \ff$, (ii) $\chi = \Box\f$, (iii) $\chi = \Boxc \f$ and (iv) $\chi = \A\f$.

(i) $\chi = \f \to \ff$. If $\f \to \ff \in x$ and $R^{w}xyz$ where $\f \in y$, then $\ff \in z$ by the definition of $R^{w}$. 
Conversely, if $\f \to \ff \notin x$, then we can show that there are $y,z \in X$ such that $Rxyz$, $\f \in y$ and $\ff \notin z$; see Appendix \ref{app:TruthImp}. We prove that $y, z \in X^{w}$ as follows. 
Assume that $y \notin X^{w}$. Firstly, if there is $\A\gamma \in w$ such that $\A\gamma \notin y$, then we have $\A\gamma \in x$ by definition and $\neg\A\gamma \in y$ by \eqref{a:Alem}. Since $\ff \notin z$ for some $\ff$, we obtain $\neg\A\gamma \to \ff \in x$ by \eqref{a:AecqImp}, and so $\ff \in z$. This is a contradiction. 
Secondly, assume that there is $\A\gamma \notin w$ such that $\A\gamma \in y$. Since $\ff \notin z$ for some $\ff$, we obtain $\A\gamma \to \ff \in x$ by \eqref{a:AecqImpNeg}, and so $\ff \in z$. Contradiction. 
The fact that $z \in X^{w}$ is established similarly, using \eqref{a:Aweak} and \eqref{a:AweakNeg} and the fact that $y$ is non-empty.

(ii) $\chi = \Box \f$. This case is established similarly as in the proof of Lemma \ref{lem:TruthInt}, using Lemma \ref{lem:Compact} and \eqref{a:R}, \eqref{a:T}, \eqref{a:4}, \eqref{a:RuleM}, \eqref{a:ABoxA}, \eqref{a:ABoxNegA}. 
% \mathit{If $\Box \f \in x$, then $x \in U = | \Box \f |^{w}$. Clearly $U \in \topo^{w}$. By \eqref{a:T}, $|\Box \f| \subseteq |\f|$ and so $|\Box \f|^{w} \subseteq |\f|^{w}$. 
%Conversely, assume that there is $U \in \topo^{w}$ such that $x \in U$ and $U \subseteq |\f|^{w}$. We know that $U = V \cap X^{w}$ for some $V \in \topo$. Without loss of generality, we may assume that $x \in |\Box \ff| \cap X^{w} \subseteq |\f|$. By Lemma \ref{lem:Compact}, there are $\{ \A \chi_i \}_{i = 1}^{n} \subseteq w \cap x$ and  $\{ \A \chi'_j \}_{j = 1}^{m} \subseteq w^{c}$ such that 
%\[ \vdash \Box \ff \, \land \, \bigwedge_{i = 1}^{n} \A\chi_i \,\land\, \bigwedge_{j = 1}^{m} \neg \A\chi'_j \,\to\, \f \]
%and $\{ \neg\A \chi'_j \}_{j = 1}^{m} \subseteq x$. We may infer the following using \eqref{a:R}, \eqref{a:4}, \eqref{a:RuleM}, \eqref{a:ABoxA} and \eqref{a:ABoxNegA}:
% \[ \vdash \Box \ff \, \land \, \bigwedge_{i = 1}^{n} \A\chi_i \,\land\, \bigwedge_{j = 1}^{m} \neg \A\chi'_j \,\to\, \Box\f \, .\] We know that the antecedent of this implication is in $x$, and so we conclude that $\Box \f \in x$, as desired.} 
 
 (iii) $\chi \in \Boxc \f$. If $\Boxc \f \in x$, then $x \in U = |\Box\Boxc\f|^{w}$ by \eqref{a:Boxc4}. Obviously $U \in \topo^{w}$. Moreover, $U \subseteq |\Boxc\f|^{w} \subseteq (|\f|^{w})^{c}$ by \eqref{a:T} and \eqref{a:BoxcECQ}.  Hence, $x \in \Int ((|\f|^{w})^{c})$, as desired. 
 Conversely, assume that $x \in U$ such that $U \subseteq (|\f|^{w})^{c}$. Without loss of generality, we may assume that $x \in |\Box \psi|^{w}$ for some $\psi$ such that $|\Box \psi|^{w} \subseteq (|\f|^{w})^{c}$. In other words, $|\Box \psi| \cap |\f| \cap X^{w} = \emptyset$. By Lemma \ref{lem:Compact},  there are $\{ \A \chi_i \}_{i = 1}^{n} \subseteq w \cap x$ and  $\{ \A \chi'_j \}_{j = 1}^{m} \subseteq w^{c}$ such that 
\[ \vdash \Box \ff \, \land \, \f \, \land \, \bigwedge_{i = 1}^{n} \A\chi_i \,\land\, \bigwedge_{j = 1}^{m} \neg \A\chi'_j \,\to\, \bot \, .\] By \eqref{a:T} and \eqref{a:RuleM}, we obtain
 \[ \vdash \Box \Big ( \ff \, \land \, \bigwedge_{i = 1}^{n} \A\chi_i \,\land\, \bigwedge_{j = 1}^{m} \neg \A\chi'_j \Big) \, \land \, \f \,\to\, \bot \, , \]
 which we denote by $\vdash \Box \alpha \land \f \to \bot$. Using \eqref{a:BoxcRule}, $\vdash \Box \alpha \to \Boxc \f$. We know that $\Box \alpha \in x$, and so we conclude that $\Boxc \f \in x$, as desired.
 
 (iv) $\chi = \A\f$. This case is established similarly as in the proof of Lemma \ref{lem:TruthInt}, using Lemma \ref{lem:PE} and \eqref{a:R}, \eqref{a:T}, \eqref{a:4}, \eqref{a:RuleM} and \eqref{a:A5}. 
% If $\A\f \in x$, then $\A\f \in w$, and so $\A\f \in y$ for all $y \in X^{w}$. By \eqref{a:T}, $\f \in y$ for all $y \in X^{w}$. 
% Conversely, assume that $A\f \notin x$. We show that the pair $\langle \{ \A\ff \mid \A\ff \in w \} \cup \{ \neg \A\ff' \mid \A\ff' \notin w \}, \{ \f \}\rangle$ is independent. If not, then we have
% \[ \vdash \bigwedge_{i = 1}^{n} \A\ff_i \, \land \, \bigwedge_{j = 1}^{m} \neg \A\ff'_j \, \to\, \f\] for some $\{ \A \psi_i \}_{i = 1}^{n} \subseteq w$ and  $\{ \A \psi'_j \}_{j = 1}^{m} \subseteq w^{c}$. 
% By \eqref{a:R}, \eqref{a:4}, \eqref{a:A5} and \eqref{a:RuleM}, we obtain 
%  \[ \vdash \bigwedge_{i = 1}^{n} \A\ff_i \, \land \, \bigwedge_{j = 1}^{m} \neg \A\ff'_j \, \to\, \A\f \, .\] Since the antecedent of this implication is in $x$ (as $x \in X^{w}$), we conclude that $\A\f \in x$, which contradicts our assumption. Hence, the pair is independent and we obtain a $y \in X^{w}$ such that $\phi \notin y$ by Lemma \ref{lem:PE}.
\end{proof}

\begin{theorem}\label{thm:BS4AcCompleteness}
A formula $\f$ is valid in all RM up-spaces iff it is provable in $\itAc{BS4}$.
\end{theorem}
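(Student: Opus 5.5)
Soundness is exactly Lemma \ref{lem:BS4AcSoundness}, so the work is in completeness, and I would mirror the proof of Theorem \ref{thm:iS4ACompleteness}. Suppose $\not\vdash \f$. Let $T$ be the set of theorems of $\itAc{BS4}$. I would first check that the pair $\langle T, \{ \f \}\rangle$ is $\itAc{BS4}$-independent. Otherwise, by adjunction \eqref{B:Adj} some theorem $\chi$ satisfies $\vdash \chi \to \f$, and modus ponens \eqref{B:MP} gives $\vdash \f$. Both components are non-empty, since $T$ contains, e.g., $p \to p$ by \eqref{B:Id}. So Lemma \ref{lem:PE} yields a non-empty, non-trivial prime theory $w \supseteq T$ with $\f \notin w$. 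Because $w$ contains all theorems, it is regular.

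Next I would form the anchored canonical model $\bm{C}^{w}$ of Definition \ref{def:CanModAnchored}. This requires $w \in X^{w}$, which follows from the characterisation lemma: $w$ trivially agrees with itself on all formulas $\A\ff$. By Lemma \ref{lem:CanWell} and Lemma \ref{lem:CanModel}, $\bm{C}^{w}$ is a well-defined RM up-model. In particular, $\topo^{w}$ is an up-set topology on $\langle X^{w}, \subseteq^{w}\rangle$: the basic sets $|\Box\ff|$ are up-sets by Lemma \ref{lem:CanSpaceRM}, and their traces on $X^{w}$ are up-sets in the restricted order. Moreover, $\val^{w}(p) = |p|^{w}$ is an up-set, since theories containing $p$ are closed under inclusion.

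Since $w$ is regular and lies in $X^{w}$, we have $w \in N^{w}$. By the Truth Lemma (Lemma \ref{lem:TruthLemma}), $\sem{\f}_{\bm{C}^{w}} = |\f|^{w}$, and $w \notin |\f|^{w}$. Hence $N^{w} \not\subseteq \sem{\f}_{\bm{C}^{w}}$, so $\f$ is not valid in $\bm{C}^{w}$. The underlying RM up-space therefore refutes $\f$.

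The only point needing care is the non-emptiness and non-triviality side conditions of Lemma \ref{lem:PE}, which are needed for $w \in X$. These are secured by choosing $T$ and $\{\f\}$ as the two non-empty components. Everything substantial has already been discharged by the preceding lemmas.
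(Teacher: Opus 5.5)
Your proposal is correct and follows the paper's own argument: soundness by Lemma~\ref{lem:BS4AcSoundness}, then extending $\langle T, \{\f\}\rangle$ to a regular prime theory $w$ with $\f \notin w$ and refuting $\f$ at $w$ in the anchored canonical model via Lemmas~\ref{lem:CanWell}, \ref{lem:CanModel} and \ref{lem:TruthLemma}. You also make explicit two points the paper leaves tacit, namely $w \in X^{w}$ and hence $w \in N^{w}$, which is exactly what gives $N^{w} \not\subseteq \sem{\f}_{\bm{C}^{w}}$.
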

\begin{proof}
Soundness is established in Lemma \ref{lem:BS4AcSoundness}. 
Completeness: if $\not\vdash \phi$, then $\langle T, \{ \phi \}\rangle$ is independent. Then there is a regular prime theory $w$ such that $\phi \notin w$. By Lemmas \ref{lem:CanWell}, \ref{lem:CanModel} and \ref{lem:TruthLemma}, $\phi$ is not valid in all RM up-models.
\end{proof}

\section{Conclusion}\label{sec:Conclusion}

This paper made the first steps in studying versions of Topological Evidence Logic (TEL) based on non-classical propositional logics. Specifically, we studied intuitionistic and relevant versions of TEL. 
We have shown that an extension of de Groot and Shillito's \cite{deGrootShillito2025} modal logic $\mathsf{iS4}$ with a global modality $\A$ can express TEL-style coherent justification (existence of a dense open subset). We have argued that a language with $\Box$ and $\A$ is not sufficient for a relevant version of TEL since coherent justification is not expressible. 
Accordingly, for relevant TEL we added an interior-of-complement operator $\Boxc$ and we showed that coherent justification is expressible in the language even in a weak relevant background. 
As our main technical result, we provided a sound and complete axiomatisation of a relevant TEL based on the weak modal relevant logic $\mathsf{BS4}$. We have also provided a sound and complete axiomatisation of intuitionistic TEL based on $\mathsf{iS4}$ with $\A$.

The paper barely scratches the surface, leaving many interesting problems to be investigated in the future. Firstly, both $\mathsf{iS4}$ and $\mathsf{BS4}$ are decidable and we plan to extend the decidability result to $\mathsf{iS4}_{\A}$ and $\sfAc{BS4}$. It should not be too difficult to combine proofs of the finite model property with the anchoring technique. 
Secondly, it will be interesting to look at relevant TEL based on stronger propositional logics such as $\mathsf{R}$ or $\mathsf{E}$. It is not clear how much the extra frame conditions needed for these logics interfere with the anchoring technique. 
Thirdly, it is intriguing in the setting of RM frames to replace topological spaces seen as locales of up-sets, where evidence combination is represented by set intersection, by \emph{quantales} of up-sets where evidence combination is represented by the ternary relation $R$, and investigate the corresponding modal logic.  

\paragraph{Acknowledgement.} This work was supported by the Czech Science Foundation grant no.\ 25-17958J. The author is grateful to the reviewers for valuable feedback and suggestions that helped to improve the paper. 

\appendix

\section{Technical appendix}

\subsection{Proof of the Pair Extension Lemma}\label{app:PE}

We prove the lemma in a slightly more general setting that entails both Lemma \ref{lem:PEint} and Lemma \ref{lem:PE}; see \cite[pp.~92--95]{Restall2000}.  
Let $L$ be any extension of $B$ with extra axiom schemata and rules, over any language $\Lan'$ extending $\Lan$ such that $Fm_{\Lan'}$ is denumerable. We express the fact that $\phi$ is provable in $L$ by writing $\vdash \phi$. 
As expected, a pair $\langle \Delta, \nabla\rangle$ of subsets of $\Lan'$ is $L$-independent iff there are no $\f_1, \ldots, \f_n \in \Delta$ and $\ff_1, \ldots, \ff_m \in \nabla$ such that $\vdash \bigwedge_{i = 1}^{n} \f_i \, \to\, \bigvee_{j = 1}^{m} \ff_j$. 
(Prime, regular) $L$-theories are defined as expected. 

Note that transitivity is a derived rule in $L$: if $\vdash_{L} \phi \to \psi$ and $\vdash_{L} \psi \to \chi$, then applying the suffixing rule \eqref{B:Suff} to $\vdash_L \phi \to \psi$ yields $\vdash (\psi \to \chi) \to (\phi \to \psi)$. Because we have $\vdash \psi \to \chi$, applying modus ponens \eqref{B:MP} yields $\vdash \phi \to \chi$.

\begin{lemma}[Pair Extension]\label{lem:PExGeneral}
If $\langle \Delta, \nabla\rangle$ is an $L$-independent pair of subsets of $\Lan'$, then there is a prime $L$-theory $\Delta_{\infty}$ such that $\Delta \subseteq \Delta_{\infty}$ and $\Delta_{\infty} \cap \nabla = \emptyset$. 
$\Delta_{\infty}$ is non-empty and non-trivial if $\Delta$ and $\nabla$ are both non-empty.
\end{lemma}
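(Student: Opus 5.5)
We will follow the standard Lindenbaum-style construction for relevant logics (as in Restall's treatment). Enumerate all formulas of $\Lan'$ as $\chi_0, \chi_1, \chi_2, \ldots$, which is possible since $Fm_{\Lan'}$ is denumerable. Put $\langle \Delta_0, \nabla_0\rangle = \langle \Delta, \nabla\rangle$. At stage $n$, set $\langle \Delta_{n+1}, \nabla_{n+1}\rangle = \langle \Delta_n \cup \{\chi_n\}, \nabla_n\rangle$ if this pair is $L$-independent, and otherwise $\langle \Delta_n, \nabla_n \cup \{\chi_n\}\rangle$. Finally, let $\Delta_\infty = \bigcup_n \Delta_n$ and $\nabla_\infty = \bigcup_n \nabla_n$.

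\textbf{Independence is preserved.} The key step is that if $\langle \Delta_n, \nabla_n\rangle$ is independent, then at least one of the two extensions is. Suppose both fail. Then there is a conjunction $\alpha$ of finitely many elements of $\Delta_n$ and a disjunction $\beta$ of elements of $\nabla_n$ with $\vdash \alpha \land \chi \to \beta$, and likewise $\alpha'$, $\beta'$ with $\vdash \alpha' \to \beta' \lor \chi$ (we may take $\alpha' = \alpha$ and $\beta'=\beta$ after merging the lists, using \eqref{B:ConElim}, \eqref{B:DisIntro} and transitivity). From $\vdash \alpha \to \beta \lor \chi$, together with $\vdash \alpha \to \alpha$ and \eqref{B:ConIntro} via adjunction and modus ponens, we get $\vdash \alpha \to \alpha \land (\beta \lor \chi)$. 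By \eqref{B:Distrib} this yields $\vdash \alpha \to (\alpha\land\beta)\lor(\alpha\land\chi)$. Now $\vdash \alpha\land\beta \to \beta$ and $\vdash \alpha\land\chi\to\beta$, so \eqref{B:DisElim} (with adjunction and MP) gives $\vdash (\alpha\land\beta)\lor(\alpha\land\chi) \to \beta$. By transitivity, $\vdash \alpha\to\beta$, contradicting the independence of $\langle\Delta_n,\nabla_n\rangle$. Since any witness of dependence uses finitely many formulas, it appears at some finite stage, so $\langle\Delta_\infty,\nabla_\infty\rangle$ is independent. Every formula lies in exactly one of $\Delta_\infty$ and $\nabla_\infty$.

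\textbf{$\Delta_\infty$ is a prime theory disjoint from $\nabla$.} Disjointness holds because $\nabla\subseteq\nabla_\infty$. For closure under provable implication: if $\phi\in\Delta_\infty$, $\vdash\phi\to\psi$ and $\psi\notin\Delta_\infty$, then $\psi\in\nabla_\infty$, which contradicts independence. For adjunction: if $\phi,\psi\in\Delta_\infty$ but $\phi\land\psi\in\nabla_\infty$, then $\vdash\phi\land\psi\to\phi\land\psi$ violates independence. For primeness: if $\phi\lor\psi\in\Delta_\infty$ but $\phi,\psi\in\nabla_\infty$, then $\vdash \phi\lor\psi\to\phi\lor\psi$ violates independence.

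\textbf{Non-emptiness and non-triviality.} If $\Delta\neq\emptyset$ then $\Delta_\infty\supseteq\Delta$ is non-empty. If $\nabla\neq\emptyset$ then some formula lies outside $\Delta_\infty$, so $\Delta_\infty\neq\Lan'$.

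The main obstacle is the preservation step. Relevant logic lacks the deduction theorem, so the argument must rely only on the $B$-axioms \eqref{B:ConIntro}, \eqref{B:Distrib} and \eqref{B:DisElim} in implicational form. Care is also needed with the degenerate cases where $\Delta_n$ or $\nabla_n$ contribute no formulas: empty conjunctions and disjunctions are not available, so these cases must be handled separately, for instance by treating $\vdash\chi\to\beta$ directly.
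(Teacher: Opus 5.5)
Your proposal is correct and follows essentially the same route as the paper: the same Lindenbaum enumeration, the same step lemma driven by \eqref{B:ConIntro}, \eqref{B:Distrib} and \eqref{B:DisElim}, and the same verification that $\Delta_\infty$ is a prime theory disjoint from $\nabla$. Merging the two witnesses up front shortens the paper's ending at $\vdash(\phi\land\phi')\to(\psi\lor\psi')$ to $\vdash\alpha\to\beta$, though it tacitly also uses \eqref{B:ConIntro} and \eqref{B:DisElim}. The degenerate cases you flag are harmless: if both extensions are dependent, then $\Delta_n$ and $\nabla_n$ are non-empty, so a witness using only $\chi$ on one side can be padded via \eqref{B:ConElim} or \eqref{B:DisIntro}, which is exactly what the paper's parenthetical remark does.
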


To prove Lemma \ref{lem:PExGeneral}, we first require a step lemma demonstrating that $L$-independence can be preserved when adding a new formula to either the left or right set.

\begin{lemma}\label{lem:StepGeneral}
For all $\chi \in Fm_{\mathfrak{L}'}$, if $\langle \Delta, \nabla\rangle$ is $L$-independent, then $\langle \Delta \cup \{ \chi \}, \nabla\rangle$ is $L$-independent or $\langle \Delta, \nabla \cup \{ \chi \}\rangle$ is $L$-independent.
\end{lemma}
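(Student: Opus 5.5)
We argue by contradiction. Suppose neither $\langle \Delta \cup \{ \chi \}, \nabla\rangle$ nor $\langle \Delta, \nabla \cup \{ \chi \}\rangle$ is $L$-independent. From the failure of the first pair we obtain $\f_1, \ldots, \f_n \in \Delta$ and $\ff_1, \ldots, \ff_m \in \nabla$ with
\[ \vdash \Big(\bigwedge_{i} \f_i \land \chi\Big) \to \bigvee_{j} \ff_j , \]
where we may assume that $\chi$ actually occurs on the left. If it does not, $\langle \Delta, \nabla\rangle$ itself is already dependent, which contradicts the hypothesis. From the failure of the second pair we obtain $\f'_1, \ldots, \f'_k \in \Delta$ and $\ff'_1, \ldots, \ff'_l \in \nabla$ with
\[ \vdash \bigwedge_{i} \f'_i \to \Big(\bigvee_{j} \ff'_j \lor \chi\Big) , \]
where again $\chi$ may be assumed to occur on the right. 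Small bookkeeping is needed to handle empty conjunctions and disjunctions, that is, the cases $n = 0$ or $m = 0$. Since $\langle\Delta,\nabla\rangle$ is independent, it suffices to derive a dependence of the combined finite sets.

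Let $\alpha \coloneq \bigwedge_i \f_i \land \bigwedge_i \f'_i$ and $\beta \coloneq \bigvee_j \ff_j \lor \bigvee_j \ff'_j$. We use conjunction elimination \eqref{B:ConElim}, disjunction introduction \eqref{B:DisIntro}, and the rules \eqref{B:ConIntro} and \eqref{B:DisElim} together with transitivity, as noted above. This weakens the two displayed implications to $\vdash (\alpha \land \chi) \to \beta$ and $\vdash \alpha \to (\beta \lor \chi)$. The first step uses associativity and commutativity of $\land$ and $\lor$, all provable in $B$. Because only adjunction and the conjunction and disjunction rules of $B$ are used here, the argument goes through in any extension $L$.

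The key step is the cut on $\chi$, which uses distribution \eqref{B:Distrib}. From $\vdash \alpha \to \alpha$ \eqref{B:Id} and $\vdash \alpha \to (\beta \lor \chi)$, the rule form of \eqref{B:ConIntro} (via adjunction \eqref{B:Adj} and modus ponens) gives $\vdash \alpha \to (\alpha \land (\beta \lor \chi))$. Distribution then yields $\vdash \alpha \to ((\alpha \land \beta) \lor (\alpha \land \chi))$. We have $\vdash (\alpha\land\beta) \to \beta$ by \eqref{B:ConElim} and $\vdash (\alpha \land \chi) \to \beta$ from above, so \eqref{B:DisElim} gives $\vdash ((\alpha \land \beta) \lor (\alpha \land \chi)) \to \beta$. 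By transitivity, $\vdash \alpha \to \beta$. This exhibits finitely many elements of $\Delta$ whose conjunction provably implies a disjunction of elements of $\nabla$, contradicting the independence of $\langle \Delta, \nabla\rangle$.

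The main obstacle is carrying out this cut using only the weak resources of $B$. There is no deduction theorem and no $\top$ or $\bot$ to absorb empty conjunctions or disjunctions, so every step has to be done through the rule forms. Distribution \eqref{B:Distrib} is available only as a single-direction axiom, but that direction is exactly the one needed. The degenerate cases where $\Delta$-parts or $\nabla$-parts are empty are handled by choosing $\alpha$ or $\beta$ from the available nonempty pieces; if no $\f_i$ or $\f'_i$ occurs, we instead use $\chi$ alone as the antecedent. I expect only this case analysis to need care.
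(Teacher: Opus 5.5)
Your proof is correct and takes essentially the same route as the paper. Both assume the two dependences, merge the $\Delta$-parts and $\nabla$-parts, and eliminate $\chi$ by a cut built from \eqref{B:ConIntro}, distribution \eqref{B:Distrib} and \eqref{B:DisElim}. Your version is slightly tidier: you weaken both dependences to a common $\alpha$ and $\beta$ before the cut, whereas the paper distributes only its first antecedent $\phi$ and passes through $(\phi\land\chi)\lor\psi'$. The degenerate bookkeeping is simpler than your closing paragraph suggests. The $\nabla$-part of the first dependence and the $\Delta$-part of the second are automatically nonempty, so $\alpha$ and $\beta$ are always defined. Only the $\Delta$-part of the first dependence and the $\nabla$-part of the second can be empty (not ``$n=0$ or $m=0$'' as you wrote), and $\land$-elimination and $\lor$-introduction absorb those cases.
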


\begin{proof}
Suppose that both $\langle \Delta \cup \{ \chi \}, \nabla\rangle$ and $\langle \Delta, \nabla \cup \{ \chi \}\rangle$ are not $L$-independent. It follows that there are conjunctions $\phi, \phi'$ of elements of $\Delta$ and disjunctions $\psi, \psi'$ of elements of $\nabla$ such that
\[ \vdash_{L} (\phi \land \chi) \to \psi \qquad \text{and} \qquad \vdash \phi' \to (\chi \lor \psi') \, .\] 
(We use $\land$-elimination \eqref{B:ConElim} and $\lor$-introduction \eqref{B:DisIntro} if needed.) 
By $\land$-elimination \eqref{B:ConElim} and transitivity, our second assumption entails $\vdash_{L} (\phi \land \phi') \to (\chi \lor \psi')$. Since $\vdash (\phi \land \phi') \to \phi$ \eqref{B:ConElim}, we can use adjunction \eqref{B:Adj}, $\land$-introduction  \eqref{B:ConIntro} and modus ponens \eqref{B:MP} to obtain
\[ \vdash_{L} (\phi \land \phi') \to (\phi \land (\chi \lor \psi')) \, .\]
By the distributivity axiom \eqref{B:Distrib} and transitivity:
\begin{equation}\label{eq:step}
\vdash_{L} (\phi \land \phi') \to ((\phi \land \chi) \lor (\phi \land \psi')) \, .
\end{equation}
By $\land$-elimination \eqref{B:ConElim}, $\vdash (\phi \land \psi') \to \psi'$. By $\lor$-introduction \eqref{B:DisIntro} and transitivity, \[\vdash (\phi \land \psi') \to ((\phi \land \chi) \lor \psi') \, . \]
By $\lor$-introduction \eqref{B:DisIntro}, $\vdash (\phi \land \chi) \to ((\phi \land \chi) \lor \psi')$. Hence, adjunction \eqref{B:Adj}, $\lor$-introduction \eqref{B:DisIntro} and modus ponens \eqref{B:MP} yield $\vdash ((\phi \land \chi) \lor (\phi \land \psi')) \to ((\phi \land \chi) \lor \psi')$. By transitivity \and \eqref{eq:step}, we obtain:
\begin{equation}\label{eq:step2}
\vdash_{L} (\phi \land \phi') \to ((\phi \land \chi) \lor \psi') \, .
\end{equation}
Our first assumption $\vdash_{L} (\phi \land \chi) \to \psi$ entails $\vdash (\phi \land \chi) \to (\psi \lor \psi')$ by $\lor$-introduction \eqref{B:DisIntro} and transitivity. By $\lor$-introduction \eqref{B:DisIntro}, adjunction, $\lor$-elimination \eqref{B:DisElim} and modus ponens \eqref{B:MP}, we obtain
\begin{equation}\label{eq:step3}
\vdash ((\phi \land \chi) \lor \psi') \to (\psi \lor \psi') \, .
\end{equation}
Applying transitivity to \eqref{eq:step2} and \eqref{eq:step3} yields:
\[ \vdash_{L} (\phi \land \phi') \to (\psi \lor \psi') \, .\] 
Because $\phi \land \phi'$ is a conjunction of elements of $\Delta$, and $\psi \lor \psi'$ is a disjunction of elements of $\nabla$, this contradicts the assumption that $\langle \Delta, \nabla\rangle$ is $L$-independent.
\end{proof}

\begin{proof}[Proof of Lemma \ref{lem:PExGeneral}]
Assume that the pair $\langle \Delta, \nabla\rangle$ is $L$-independent. Since $Fm_{\mathfrak{L}'}$ is denumerable, we can assume its formulas are enumerated as $\langle \chi_i\rangle_{i = 1}^{\infty}$. We define a sequence of pairs $\langle\Delta_i, \nabla_i\rangle$ from $i=1$ to $\infty$:
\begin{itemize}
\item $\Delta_1 = \Delta$ and $\nabla_1 = \nabla$
\item $\langle \Delta_{i + 1}, \nabla_{i + 1} \rangle = 
	\begin{cases}
	\langle \Delta_i \cup \{ \chi_i \}, \nabla_i \rangle & \text{if } \langle \Delta_i \cup \{ \chi_i \}, \nabla_i\rangle \text{ is } L\text{-independent};\\
	\langle \Delta_i, \nabla_i \cup \{ \chi_i \} \rangle & \text{otherwise.}
	\end{cases}		
	$
\end{itemize}
We define $\Delta_{\infty} = \bigcup_{i = 1}^{\infty} \Delta_i$ and $\nabla_{\infty} = \bigcup_{i = 1}^{\infty} \nabla_i$.

\emph{Claim 1: Pairs $\langle \Delta_i, \nabla_i\rangle$ are $L$-independent for all $i$.} We proceed by induction on $i$. The base case $i = 1$ holds by assumption, and the induction step is established directly using Lemma \ref{lem:StepGeneral}. 

\emph{Claim 2: $\Delta \subseteq \Delta_{\infty}$ and $\nabla \cap \Delta_{\infty} = \emptyset$.} The first part holds by definition. For the second part, suppose for a contradiction that there is a formula $\chi \in \nabla \cap \Delta_{\infty}$. Because $\chi \in \Delta_{\infty}$, we have $\chi \in \Delta_i$ for some $i$. Since $\nabla \subseteq \nabla_i$, we also have $\chi \in \nabla_i$.
By the identity axiom \eqref{B:Id}, the presence of $\chi$ in both sets means the pair $\langle \Delta_i, \nabla_i\rangle$ is not $L$-independent. This contradicts Claim 1.

\emph{Claim 3: $\Delta_{\infty} \cup \nabla_{\infty} = Fm_{\mathfrak{L}'}$.} This follows from the construction. The sequence $\langle \chi_i\rangle_{i = 1}^{\infty}$ enumerates all formulas in $Fm_{\mathfrak{L}'}$, and at each step $i$, the formula $\chi_i$ is added to either $\Delta_{i+1}$ or $\nabla_{i+1}$. 

\emph{Claim 4: $\Delta_{\infty}$ is a prime $L$-theory.} 
Firstly, we show that $\Delta_{\infty}$ is closed under $L$-provable implications. Assume $\vdash_{L} \phi \to \psi$ and $\phi \in \Delta_{\infty}$. If $\psi \notin \Delta_{\infty}$, then $\psi \in \nabla_{\infty}$ by Claim 3. Thus, there exists an index $k$ large enough such that both $\phi \in \Delta_k$ and $\psi \in \nabla_k$. However, since $\vdash_{L} \phi \to \psi$, $\langle \Delta_k, \nabla_k\rangle$ is not $L$-independent, contradicting Claim 1. Therefore, $\psi \in \Delta_{\infty}$.

Secondly, we show that $\Delta_{\infty}$ is closed under conjunction. Assume $\phi, \psi \in \Delta_{\infty}$. This implies there exist indices $i, j$ such that $\phi \in \Delta_i$ and $\psi \in \Delta_j$. Let $k = \max(i, j)$, which ensures $\phi, \psi \in \Delta_k$. If $\phi \land \psi \notin \Delta_{\infty}$, then $\phi \land \psi \in \nabla_{\infty}$, meaning there is some index $m$ such that $\phi \land \psi \in \nabla_m$. By taking $n = \max(k, m)$, we have $\phi, \psi \in \Delta_n$ and $\phi \land \psi \in \nabla_n$. Since $\vdash (\phi \land \psi) \to (\phi \land \psi)$ by \eqref{B:Id}, this contradicts the $L$-independence of $\langle \Delta_n, \nabla_n \rangle$. Thus, $\phi \land \psi \in \Delta_{\infty}$.

Finally, we show that $\Delta_{\infty}$ is prime. Let $\phi \lor \psi \in \Delta_{\infty}$. Suppose for a contradiction that $\phi \notin \Delta_{\infty}$ and $\psi \notin \Delta_{\infty}$. By Claim 3, this implies $\phi, \psi \in \nabla_{\infty}$. Consequently, there exists an index $k$ large enough such that $\phi \lor \psi \in \Delta_k$ and $\phi, \psi \in \nabla_k$. Since $\vdash (\phi \lor \psi) \to (\phi \lor \psi)$ by \eqref{B:Id}, $\langle \Delta_k, \nabla_k \rangle$ is not $L$-independent, which once again contradicts Claim 1. Therefore, either $\phi \in \Delta_{\infty}$ or $\psi \in \Delta_{\infty}$.
\end{proof}

Since $\itAc{BS4}$ is an extension of $B$ in a denumerable language, Lemma \ref{lem:PExGeneral} entails Lemma \ref{lem:PE}. Since intuitionistic propositional logic can be axiomatised by a Hilbert-style system that extends $\mathit{B}$, the system $\itA{iS4}$ is also an extension of $B$ in a denumerable language. Hence, Lemma \ref{lem:PExGeneral} entails Lemma \ref{lem:PEint}.

\subsection{Truth Lemma for relevant implication}\label{app:TruthImp}

Let $L$ be any extension of $B$ with extra axiom schemata and rules, over any language $\Lan'$ extending $\Lan$ such that $Fm_{\Lan'}$ is denumerable. Hence, Lemma \ref{lem:PExGeneral} applies. We write $\vdash \phi$ to indicate that $\phi$ is provable in $L$.
Assume that for each $\f, \ff \in Fm_{\mathfrak{L}}'$ there is a formula $\omega (\f, \ff) \in Fm_{\mathfrak{L}'}$ such that $\vdash \f \to (\ff \to \omega(\f, \ff))$ and a formula $\omega'(\f, \ff)$ such that $\vdash \f \to (\omega'(\f,\ff) \to \ff)$.

We note that such formulas are available in $B$ over an extended language containing \emph{fusion} $\circ$ and \emph{converse implication} $\leftarrow$ since they come with axioms $\f \to (\ff \to (\f \circ \ff))$ and $\f \to ((\ff \leftarrow \f) \to \ff)$. 
However, any extension of $B$ containing the axioms for $\A$ is also sufficient since there we have $\vdash \f \to (\ff \to \E \f)$ and $\vdash \f \to (\neg\E \f \to \ff)$ by \eqref{a:T}, \eqref{a:AecqImpNeg} and \eqref{a:AweakNeg}. 

Let $X$ be the set of all non-empty non-trivial prime $L$-theories. We define a ternary relation $R$ on $X$ by stipulating that $Rxyz$ iff $\forall \f, \ff \in Fm_{\mathfrak{L}'}$, if $\f \to \ff \in x$ and $\f \in y$, then $\ff \in z$.

\begin{lemma}\label{lem:TruthImpGeneral}
$\f \to \ff \in x$ iff $\forall y,z \in X \colon$ if $Rxyz$ and $\f \in y$, then $\ff \in z$.
\end{lemma}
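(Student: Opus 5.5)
The left-to-right direction is immediate from the definition of $R$: if $\f \to \ff \in x$, $Rxyz$ and $\f \in y$, then $\ff \in z$. All the work is in the converse, which I would prove contrapositively. Assume $\f \to \ff \notin x$; I will construct $y,z \in X$ with $Rxyz$, $\f \in y$ and $\ff \notin z$. This is the standard two-stage argument for the Routley--Meyer canonical model (cf.\ \cite[pp.~92--95]{Restall2000}), done first for $y$ and then for $z$, using Lemma \ref{lem:PExGeneral} each time.

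\emph{Step 1: building $z$.} Let $\Delta_z \coloneq \{ \chi \mid \exists \theta\colon \theta \to \chi \in x \And \vdash \f \to \theta\}$, the set of consequents of implications in $x$ whose antecedents are provably implied by $\f$. Using \eqref{B:ConIntro}, \eqref{B:Suff} and the fact that $x$ is a theory closed under adjunction, $\Delta_z$ is closed under conjunction and provable implication. Moreover, $\ff \notin \Delta_z$: from $\theta \to \ff \in x$ and $\vdash \f \to \theta$, the rule \eqref{B:Suff} gives $\vdash (\theta \to \ff) \to (\f \to \ff)$, hence $\f \to \ff \in x$, contradicting the assumption. Since $\Delta_z$ is a theory, $\langle \Delta_z, \{\ff\}\rangle$ is then independent, so Lemma \ref{lem:PExGeneral} yields a prime theory $z \supseteq \Delta_z$ with $\ff \notin z$. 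To obtain non-emptiness I would use $\omega$: $x$ is non-empty, so some $\alpha \in x$, and $\vdash \alpha \to (\f \to \omega(\alpha,\f))$ gives $\f \to \omega(\alpha,\f) \in x$, so $\omega(\alpha,\f) \in \Delta_z$. Lemma \ref{lem:PExGeneral} then makes $z$ non-empty and non-trivial, so $z \in X$.

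\emph{Step 2: building $y$.} Given $z$, let $\nabla_y \coloneq \{ \theta \mid \exists \chi \notin z\colon \theta \to \chi \in x\}$. I will show that $\langle \{\f\}, \nabla_y\rangle$ is independent. Suppose instead that $\vdash \f \to \theta_1 \lor \dots \lor \theta_n$ with $\theta_i \to \chi_i \in x$ and $\chi_i \notin z$. Prefixing and suffixing give $\theta_i \to \chi_1 \lor\dots\lor\chi_n \in x$; by \eqref{B:DisElim} and adjunction, $(\theta_1\lor\dots\lor\theta_n) \to \bigvee\chi_i \in x$. Hence $\bigvee \chi_i \in \Delta_z \subseteq z$, and primeness of $z$ gives some $\chi_i \in z$, a contradiction. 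Lemma \ref{lem:PExGeneral} then gives a prime $y \ni \f$ disjoint from $\nabla_y$, which is exactly what $Rxyz$ requires. It remains to make $\nabla_y$ non-empty, so that $y$ is non-trivial. Since $z$ is non-trivial, pick $\beta \notin z$; with $\alpha \in x$, $\vdash \alpha \to (\omega'(\alpha,\beta) \to \beta)$ gives $\omega'(\alpha,\beta) \in \nabla_y$. As $y$ contains $\f$, it is non-empty, so $y \in X$.

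I expect the main obstacle to be the bookkeeping in Step 2, namely the derivation showing that the disjunction of the $\theta_i$ implies the disjunction of the $\chi_i$ inside $x$, which needs several applications of \eqref{B:Pref}, \eqref{B:DisElim} and closure of $x$ under adjunction. The non-emptiness and non-triviality conditions are also delicate, since $X$ consists only of non-empty, non-trivial theories; that is exactly why the formulas $\omega$ and $\omega'$ are assumed. Finally, when $n=0$ or the disjunction is degenerate, the same argument applies after padding with elements via \eqref{B:DisIntro}.
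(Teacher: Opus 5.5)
Your proposal is correct and follows essentially the same route as the paper. First, $z$ is obtained by extending the set of consequents of implications in $x$ whose antecedents are provably implied by $\f$, away from $\ff$. Then $y$ is obtained by extending $\langle\{\f\},\nabla_y\rangle$, where $\nabla_y$ collects the antecedents of implications in $x$ whose consequents fail in $z$. In both steps $\omega$ and $\omega'$ supply non-emptiness exactly as in the paper. There are only two cosmetic slips: closing $\Delta_z$ under provable implication needs \eqref{B:Pref} rather than \eqref{B:Suff}, and your overview says $y$ is built before $z$, whereas you (correctly) build $z$ first.
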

\begin{proof}
The left-to-right implication follows from the definition of $R$. The converse implication is established as follows. Assume that $\f \to \ff \notin x$. We define $y_0 = \{ \alpha \mid \vdash \f \to \alpha \}$. Take the following pair:
\[ \Delta_z = \{ \beta \mid \exists \alpha \in y_0 \colon \alpha \to \beta \in x \} \qquad \nabla_z = \{ \psi \}\]
Note that $\f \in y_0$ and consequently $\Delta_z$ is non-empty: since $\psi \in x$ for some $\psi$ and $\vdash \psi \to (\f \to \omega(\ff, \f))$, we have $\f \to \omega(\ff,\f) \in x$, meaning that $\omega(\ff, \f) \in \Delta_z$. 
We prove that $\langle \Delta_z, \nabla_z\rangle$ is $L$-independent. If not, there is a conjunction $\beta = \beta_1 \land \ldots \land \beta_n$ of elements of $\Delta_z$ such that $\vdash \beta \to \psi$.  For each $\beta_i$ there is $\alpha_i \in y_0$ such that $\alpha_i \to \beta_i \in x$. Let $\alpha = \alpha_1 \land \ldots \land \alpha_n$. Since $\vdash \alpha \to \alpha_i$ for all $i$, we have $\vdash (\alpha_i \to \beta_i) \to (\alpha \to \beta_i)$ using suffixing \eqref{B:Suff}. Hence, $\alpha \to \beta_i \in x$ for all $i$, which means that $\alpha \to \beta \in x$ by repeated application of $\land$-introduction \eqref{B:ConIntro}. 
Since $\vdash \beta \to \psi$, we obtain $\vdash (\alpha \to \beta) \to (\alpha \to \psi)$ by suffixing \eqref{B:Suff}. Hence, $\alpha \to \psi \in x$. 
Since $\alpha_i \in y_0$ for all $i$, we have $\vdash \f \to \alpha_i$, which implies $\vdash \f \to \alpha$ using adjunction \eqref{B:Adj} and $\land$-introduction \eqref{B:ConIntro}. By suffixing \eqref{B:Suff}, $\vdash (\alpha \to \psi) \to (\phi \to \psi)$. Because $\alpha \to \psi \in x$, we obtain $\phi \to \psi \in x$, contradicting our assumption. 
Hence, $\langle \Delta_z, \nabla_z\rangle$ must be $L$-independent. Since both $\Delta_z$ and $\nabla_z$ are non-empty, Lemma \ref{lem:PExGeneral} implies that there is a non-empty non-trivial prime $L$-theory $z$ extending $\Delta_z$ and disjoint from $\nabla_z$.

Now consider the following pair:
\[ \Delta_y = \{ \f \} \qquad \nabla_y = \{ \gamma \mid \exists \delta \notin z \colon \gamma \to \delta \in x \}\]
Note that $\chi \in x$ and $\delta \notin z$ for some $\chi, \delta$. Since $\vdash \chi \to (\omega'(\chi, \delta) \to \delta)$, we have $\omega'(\chi, \delta) \to \delta \in x$. Hence, $\omega'(\chi, \delta) \in \nabla_y$. 
We prove that $\langle \Delta_y, \nabla_y\rangle$ is $L$-independent. If not, then there is a disjunction $\gamma = \gamma_1 \lor \ldots \lor \gamma_m$ such that $\vdash \phi \to \gamma$. For each $i$, there is $\delta \notin z$ such that $\gamma_i \to \delta_i \in x$. We have $\vdash \delta_i \to \delta$ for all $i$ and so, by prefixing \eqref{B:Pref}, $\vdash (\gamma_i \to \delta_i) \to (\gamma_i \to \delta)$. Since $\gamma_i \to \delta_i \in x$ for all $i$, we obtain $\gamma_i \to \delta$ for all $i$. By closure under conjunction and $\lor$-introduction \eqref{B:DisIntro}, $\gamma \to \delta \in x$. 
Since we assume that $\vdash \phi \to \gamma$, we obtain $\vdash (\gamma \to \delta) \to (\phi \to \delta)$ by suffixing \eqref{B:Suff}. Hence, $\phi \to \delta \in x$. 
We know that $\phi \in y_0$ and so $\delta \in \Delta_z$. Since $\Delta_z \subseteq z$, we have $\delta \in z$. This means that $\delta_i \in z$ for some $i$, contradicting our assumption. 
Hence, $\langle \Delta_y, \nabla_y\rangle$ is $L$-independent. Since both $\Delta_y$ and $\nabla_y$ are non-empty, Lemma \ref{lem:PExGeneral} implies that there is a non-empty non-trivial prime $L$-theory $y$ extending $\Delta_y$ and disjoint from $\nabla_y$. 

By construction of $y$ and $z$, we know that $\f \in y$ and $\ff \notin z$. To show that $Rxyz$, assume $\alpha \to \beta \in x$ and $\beta \notin z$. Then $\alpha \notin y$ by the definition of $\nabla_y$, as desired.  
\end{proof}

%\nocite{*}
\bibliographystyle{eptcs}
\bibliography{itel}

\end{document}